\sffamily\color{gray}\arabic*,
\sffamily\color{gray},
\newenvironment{algovals}{\begin{list}{\scriptsize$\bullet$}{\setlength{\topsep}{0pt}\setlength{\parskip}{0pt}\setlength{\itemsep}{0pt}\setlength{\parsep}{0pt}}}{\end{list}}
\DeclareMathOperator{\lm}{lm}
\DeclareMathOperator{\lc}{lc}
\DeclareMathOperator{\lt}{lt}
\DeclareMathOperator{\ann}{Ann}
\DeclareMathOperator{\supp}{supp}
\DeclareMathOperator{\Span}{Span}
\DeclareMathOperator{\LRem}{LRem}
\DeclareMathOperator{\RRem}{RRem}
\DeclareMathOperator{\Reduce}{Reduce}
\DeclareMathOperator{\ai}{Ai}
\DeclareMathOperator{\grevlex}{Grevlex}
\newcommand{\monxx}{M_\xx}                   
\newcommand{\monxxr}{M_{\xx,r}}              
\newcommand{\montxxr}{M_{t,\xx,r}}           
\newcommand{\scalp}[2]{\langle#1,#2\rangle}
\newcommand{\res}{\operatorname{res}}
\newcommand{\proj}{\operatorname{pr}}
\newcommand{\ind}{\operatorname{ind}}
\def\st{\ \middle|\ }
    \renewcommand*{\bm}[1]{#1}%
\newcommand{\msrp}{\bQ\langle\langle\pp\rangle\rangle}
\def\ud{\mathrm{d}}
\def\xx{\mathbf{x}}
\def\pp{\mathbf{p}}
\def\rr{\mathbf{r}}
\def\ss{\mathbf{s}}
\def\uu{\mathbf{u}}
\def\dd{\bm{\partial}}
\def\aalpha{{\bm\alpha}}
\def\bbeta{{\bm\beta}}
\def\ggamma{{\bm\gamma}}
\def\ddelta{{\bm\delta}}
\newtheorem{theorem}{Theorem}
\newtheorem{proposition}[theorem]{Proposition}
\newtheorem{corollary}[theorem]{Corollary}
\newtheorem{lemma}[theorem]{Lemma}
\theoremstyle{definition}
\newtheorem{definition}[theorem]{Definition}
\newtheorem{hypothesis}[theorem]{Hypothesis}
\theoremstyle{remark}
\newtheorem{remark}[theorem]{Remark}
\newtheorem{example}[theorem]{Example}
\newcommand{\Irr}{E}
\newcommand{\bF}{\mathbb F}
\newcommand{\bK}{\mathbb K}
\newcommand{\bN}{\mathbb N}
\newcommand{\bQ}{\mathbb Q}
\newcommand{\bZ}{\mathbb Z}
\newcommand{\bR}{\mathbb R}
\newcommand{\bC}{\mathbb C}
\newcommand{\bL}{\mathbb L}
\newcommand{\cF}{\mathcal F}
\newcommand{\cL}{\mathcal L}
\title{Faster multivariate integration in D-modules}
\author[1]{Hadrien Brochet}
\author[1]{Frédéric Chyzak}
\author[1]{Pierre Lairez}
\affil[1]{Inria, 91120, Palaiseau, France}
\date{January 28, 2026}
\begin{document}

\maketitle

\begin{abstract}
We present a new algorithm for solving the reduction problem in the context of holonomic integrals, which in turn provides an approach to integration with parameters. Our method extends the Griffiths--Dwork reduction technique to holonomic systems and is implemented in Julia. While not yet outperforming creative telescoping in D-finite cases, it enhances computational capabilities within the holonomic framework. As an application, we derive a previously unattainable differential equation for the generating series of 8-regular graphs.
\end{abstract}

\section{Introduction}

Symbolic integration is a fundamental problem in computer algebra, with deep connections to combinatorics, number theory, and mathematical physics.
In the vast landscape of integrable functions, holonomic functions---those multivariate functions satisfying sufficiently many independent linear partial differential equations with polynomial coefficients---form a particularly rich and structured class. This class includes many special functions of mathematical and physical interest, such as exponential functions, logarithms, polylogarithms, elliptic integrals, and various hypergeometric functions.
In the realm of univariate functions,
the classical problem of integrating elementary functions in terms of elementary functions does not always have a solution---as governed by Liouville's theorem.
Holonomic functions, in contrast, exhibit a different structure that is closed under integration,
including under definite integration depending on parameters.
In the holonomic setting (and its variants), symbolic integration revolves around two problems: \emph{integration with a parameter} and \emph{reduction}.

\paragraph{Integration with a parameter.}
Given a function~$f(t, x_1,\dotsc,x_n)$ of ${1+n}$ variables satisfying a suitable system of linear partial differential equations with polynomial coefficients, we aim to compute a linear differential equation satisfied by the integral
\[ I(t) = \int_D f(t, x_1,\dotsc,x_n)\ud x_1 \dotsb \ud x_n. \]
To obtain an algebraic formulation of the problem and eliminate the analytic aspects, we need some simplifying assumptions.
We consider a function space~$M$ containing~$f$ and satisfying:
\begin{enumerate}[(a)]
  \item\label{item:1} $M$ is closed under differentiation by~$t$ and the~$x_i$, and by multiplication by~$t$ and the~$x_i$;
  \item $g\mapsto \int_D g(x_1,\dots,x_n) \ud x_1 \dots \ud x_n$ is well defined on~$M$ and commutes with differentiation with respect to~$t$ and multiplication by~$t$;
  \item For any~$g\in M$ and any~$1\leq i\leq n$, $\int_D \frac{\partial}{\partial x_i} g(x_1,\dots,x_n) \ud x_1 \dots \ud x_n = 0$.
\end{enumerate}

Then the problem reduces to finding polynomials~$p_0(t), \dotsc,p_r(t)$, with~$p_r$ nonzero, and functions~$g_1,\dotsc,g_n \in M$ such that
\begin{equation}\label{eq:x-free}
  p_0(t) f + p_1(t) \frac{\partial f}{\partial t} + \dotsb + p_r(t) \frac{\partial^r f}{\partial t^r} = \frac{\partial g_1}{\partial x_1} + \dotsb + \frac{\partial g_n}{\partial x_n}.
\end{equation}
Indeed, after integrating both sides over~$D$, the hypotheses above imply that
\begin{equation}\label{eq:target-rel}
  p_0(t) I + p_1(t) \frac{\partial I}{\partial t} + \dotsb + p_r(t) \frac{\partial^r I}{\partial t^r} = 0,
\end{equation}
which is the kind of relation we aim to compute.

To take one further step towards algebra,
we introduce the Weyl algebra~$W_{t,\xx}$,
which is the non-commutative algebra generated by~$t$, $x_1,\dotsc, x_n$, $\partial_t$, $\partial_{x_1}, \dotsc, \partial_{x_n}$, and the usual relations~$u v = vu$, $\partial_u \partial_v = \partial_v \partial_u$ and $\partial_u u = u \partial_u + 1$, for any distinct~$u, v\in \left\{ t, x_1,\dotsc,x_n \right\}$.
Hypothesis~\ref{item:1} translates to the statement that~$M$ is a left module over~$W_{t,\xx}$.
To ensure that \eqref{eq:x-free}~has a solution, we require more specifically that:
\begin{enumerate}[(a), resume]
  \item $M$ is a holonomic~$W_{t,\xx}$-module.
\end{enumerate}
The concept of holonomicity embodies the idea of a function satisfying “sufficiently many independent linear PDEs”, see Section~\ref{sec:holonomic-modules} for more details.

From the algorithmic point of view,
we want an algorithm that takes as input a description of~$M$ as a~$W_{t,\xx}$-module, with generators and relations,
and computes polynomials~$p_0(t),\dotsc,p_r(t)$ such that \eqref{eq:x-free}~holds for some $g_1,\dotsc,g_n$ in~$M$, which we usually do not need to compute.

\paragraph{The reduction problem.}
In the absence of a parameter, integrals are constants and holonomic methods do not directly compute them. However, holonomic methods can be used to find relations between integrals. Consider the Weyl algebra~$W_{\xx}$ in the variables~$x_1,\dotsc,x_n$ and a holonomic~$W_{\xx}$-module~$M$.
As above, we algebraically interpret integration as a linear map on~$M$
that vanishes on the subspace~$\partial_{x_1} M + \dotsb + \partial_{x_n} M$, which we denote simply~$\dd M$ below.
So, finding a relation between the integrals of functions~$f_1,\dotsc,f_r \in M$ means finding constants~$c_1,\dotsc,c_r$, with~$c_r \neq 0$, such that
\[ c_1 f_1 + \dotsb+ c_r f_r \in \dd M. \]
Computing normal forms in the quotient~$M/\dd M$ is one way to find this sort of relations.
Owing to the assumption of the holonomicity of~$M$,
a classical result is that this quotient is finite-dimensional over the base field
\parencite[Theorem~6.1 of Chapter~1, combined with the example that precedes it, for~$p=2n$]{Bjork-1979-RDO}.
It is now well understood that the integration problem may be tackled
through the reduction problem \cite{BostanChenChyzakLi_2010,BostanChenChyzakLiXin_2013,ChenHuangKauersLi_2015,ChenKauersKoutschan_2016,ChenHoeijKauersKoutschan_2018,BostanChyzakLairezSalvy-2018-GHR,Hoeven_2021},
in a way that departs from algorithms based on earlier approaches,
namely by a linear ansatz \cite{WilfZeilberger_1992,Koutschan_2010a}
or by following Zeilberger's method \cite{Zeilberger_1990a,Chyzak_2000}.

\paragraph{Contributions.}

In the holonomic context, we propose a new algorithm for the reduction problem (Section~\ref{sec:reduction}).
In some aspects, this is a generalization to holonomic systems of the Griffiths--Dwork reduction method for homogeneous rational functions \cite{Griffiths-1969-PCR,BostanLairezSalvy_2013,Lairez-2016-CPR} (Section~\ref{subsec:GD}).
This algorithm yields a new algorithm for integration with a parameter (Section~\ref{sec:CTbyRed}).
We provide an implementation of our algorithm in Julia~\cite{bezanson2017} (Section~\ref{sec:implementation}).
Although the new algorithm is still not on par with the best implementations following Zeilberger's approach and generalizations on their home turf (i.e., D-finite functions, see below),
it improves the state of the art in the holonomic context.
As an application, we were able to compute a differential equation for the generating series of 8-regular graphs,
for which Zeilberger's approach is theoretically not suited,
and which was also previously unattainable by dedicated methods (Section~\ref{sec:kreg}).

\paragraph{D-finiteness \emph{versus} holonomicity.}
The \emph{creative telescoping} approach to symbolic integration (see \cite{Chyzak-2014-ACT} for a review) relies on \emph{D-finiteness} instead of holonomicity.
Instead of working with the Weyl algebra~$W_{t,\xx}$ and with holonomic $W_{t,\xx}$-modules,
this approach considers the rational Weyl algebra $W_{t,\xx}(t, \xx)$,
which is the Weyl algebra extended with rational functions in~$t$ and~$\xx$,
and $W_{t,\xx}(t,\xx)$-modules that are finite-dimensional over~$\bK(t,\xx)$, where~$\bK$ is the base field.
This nuance has deep concrete implications.

Expressivity is an argument in favor of holonomicity: we can express a wider class of integrals with holonomicity than with D-finiteness.
Integrals over semialgebraic sets are a prominent example \cite{Oaku_2013}. In general, it is always possible to construct a D-finite module from a holonomic module~$M$:
it is enough to consider the localization $\bK(t,\xx)\otimes_{\bK[t,\xx]} M$, but this operation may lose important information.
For example, it is possible that $\bK(t,\xx)\otimes_{\bK[t,\xx]} M = 0$, as it happens when enumerating $k$-regular graphs (see Section~\ref{sec:kreg}),
making it impossible to apply any creative-telescoping algorithm over rational functions in a relevant~way.

As for efficiency, approaches based on D-finiteness (implemented in Mathematica \cite{Koutschan_2010a,Koutschan_2010}, Maple \cite{Chyzak_2000,BostanChyzakLairezSalvy-2018-GHR}, and SageMath \cite{KauersJaroschekJohansson-2015-OPS}) are far superior, when they apply,
to those based on holonomicity \cite{OakuTakayama_2001} (implemented in Macaulay2, Singular, and Risa/Asir).
Understanding and bridging this gap to achieve both efficiency and expressivity remains a significant challenge, and this paper represents a step forward in that direction.

In this paper, we consider a mixed approach by using modules over~$W_{t,\xx}(t)$,
the Weyl algebra extended with rational functions in~$t$ only: this amounts to considering holonomicity with respect to the~$\xx$ and D-finiteness with respect to the parameter~$t$.
This enables the use of reductions over the base field~$\bK(t)$ to compute integrals with a parameter.

\paragraph{Related work.}
In the context of D-finiteness, the problem of integration with parameters is addressed by algorithms of \textcite{Chyzak_2000,Koutschan_2010a}.
A more recent research line addresses the integration problem by solving first the reduction problem \parencite{BostanChenChyzakLi_2010,BostanChenChyzakLiXin_2013,ChenHuangKauersLi_2015,ChenKauersKoutschan_2016, ChenHoeijKauersKoutschan_2018,BostanChyzakLairezSalvy-2018-GHR,Hoeven_2021,ChenDuKauers_2023}.
In a holonomic context, the integration problem and the reduction problem have been addressed by \textcite{Takayama_1990,OakuTakayama_2001},
without making efficiency their main goal.
In this work,
we forgo the minimality of the order of the output relation~\eqref{eq:target-rel}
in order to bypass the machinery related to $b$-functions.
Already Takayama's algorithm~\cite{Takayama_1990} made a similar compromise.
However, in comparison to this algorithm,
we leverage a Gröbner basis technique to obtain a first reduction, which has a lot of structure and can be computed efficiently
but is not enough to detect all relations between integrals.
This is completed by another reduction, more direct and less structured, which takes place in a lower-dimensional space than what would arise with Takayama's algorithm, thanks to the first reduction.

In the case of integrals of the form \[
  \int a(x_1,\dots,x_n) \exp(f(x_1,\dots,x_n)) \ud x_1 \dotsb \ud x_n,
  \]
where $a(x_1,\dots,x_n)$~is a polynomial and $f(x_1,\dots,x_n)$~a homogeneous polynomial,
the first reduction echoes the Griffiths--Dwork reduction \cite{Griffiths-1969-PCR,BostanLairezSalvy_2013},
while the second reduction echoes Lairez's reduction algorithm in \cite{Lairez-2016-CPR}.
In the similar context of rational integrals,
which are of great importance in the computation of Feynman integrals, the reduction is addressed by \textcite{Laporta_2000}.

In the context of the combinatorics of $k$-regular graphs,
first algorithms for computing linear differential equations
satisfied by their counting generating functions
were developed by \textcite{ChyzakMishnaSalvy-2005-ESP}.
This was following works by the combinatorialist Gessel in the 1980s,
who introduced a representation of the generating functions
as a scalar product in the theory of symmetric functions.
A faster method was very recently introduced by \textcite{ChyzakMishna-2024-DES},
based on the same scalar-product representation
but following an approach reminiscent of reductions.
This was the starting point of our interest,
making us rethink the representation
to have the algorithms of the present work apply directly to the problem.

\section{Computing with Weyl algebras}
\label{sec:weyl-alg}

\subsection{Weyl algebras}

Let $\bK$ be a field of characteristic zero,
typically $\bQ$ or~$\bQ(t)$.
Let $W_{\xx}$ denote the \emph{$n$th Weyl algebra} $\bK[\xx]\langle \dd_\xx \rangle$
with generators $\xx=(x_1,\dots,x_n)$ and $\dd_\xx = (\partial_{1},\dots,\partial_{n})$,
and relations $\partial_{i} x_i = x_i \partial_{i} + 1$, $x_ix_j = x_jx_i$, $\partial_{i} \partial_{j} = \partial_{j} \partial_{i}$ and~$x_i \partial_j = \partial_j x_i$
whenever~$i\neq j$.
We refer to~\cite[Chapters~1--10]{Coutinho_1995} for a complete introduction to these algebras
covering most needs of the present article,
or to \parencite[Chapter~5]{Borel-1987-ADM} for a denser alternative.
We often need to highlight one variable with a specific role,
in which case we use the name~$t$ for the distinguished variable.
Correspondingly, we will write $W_{t,\xx}$
for the $(1+n)$th Weyl algebra, and we will write $W_t$ for the special case~$n=0$.
We also define $W_{t,\xx}(t)$ as the algebra $\bK(t) \otimes_{\bK[t]} W_{t,\xx}$ where the variable $t$ is rational and the variables $\xx$ are polynomial.
For non-zero $r\in\bN$, we also consider Cartesian powers of these algebras,
$W_\xx^r$, $W_{\xx}(t)^r$, etc.,
which we view as modules over $W_\xx$ or $W_{\xx}(t)$, as relevant.
Each element of the module~$W_\xx^r$ decomposes uniquely in the basis of $\bK$-vector space
\begin{equation*}
  \monxxr = \{\xx^\aalpha \dd_\xx^\bbeta e_i \mid \aalpha,\bbeta\in \bN^n, \ i\in\{1,\dots,r\}\},
\end{equation*}
where $e_1,\dots,e_r$ denotes the canonical basis of $W_\xx^r$.
Given an element $p = \sum_{\aalpha,\bbeta,i} a_{\aalpha,\bbeta,i}\xx^\aalpha\dd^\bbeta e_i$ of $W_{t,\xx}^r$,
we define the \emph{degree} of~$p$~as
\begin{equation*}
\deg(p) = \max\{ |\aalpha| + |\bbeta| \mid \exists i\in\{1,\dots,r\}, \ a_{\aalpha,\bbeta,i}\neq 0\} ,
\end{equation*}
where $|\aalpha|$ and $|\bbeta|$ denote the sums
$\alpha_1+\dots+\alpha_n$ and $\beta_1+\dots+\beta_n$.
Definitions for the algebra~$W_\xx$ mimic the case~$r=1$, simply without considering any~$e_i$,
and definitions for the module~$W_{t,\xx}^r$ are just a special notation when $n$~is replaced with~$n+1$:
a vector basis of~$W_\xx$ is
\begin{equation*}
  \monxx = \{\xx^\aalpha \dd_\xx^\bbeta \mid \aalpha,\bbeta\in \bN^n\} ,
\end{equation*}
and given an element $p = \sum_{\aalpha,\bbeta} a_{\aalpha,\bbeta}\xx^\aalpha\dd_\xx^\bbeta$ of $W_\xx$,
its degree is
\begin{equation*}
\deg(p) = \max\{ |\aalpha| + |\bbeta| \mid a_{\aalpha,\bbeta}\neq 0\} ;
\end{equation*}
a vector basis of~$W_{t,\xx}^r$ is
\begin{equation*}
  \montxxr = \{t^\alpha\xx^\bbeta\partial_t^\gamma\dd_\xx^\ddelta e_i \mid \alpha,\gamma\in \bN, \ \bbeta,\ddelta\in \bN^n, \ i\in\{1,\dots,r\}\} ,
\end{equation*}
and given an element $p = \sum_{\alpha,\bbeta,\gamma,\ddelta,i} a_{\alpha,\bbeta,\gamma,\ddelta,i}t^\alpha\xx^\bbeta\partial_t^\gamma\dd_\xx^\ddelta e_i$ of $W_{t,\xx}^r$,
its degree is
\begin{equation*}
\deg(p) = \max\{ \alpha + |\bbeta| + \gamma + |\ddelta| \mid \exists i\in\{1,\dots,r\}, \ a_{\alpha,\bbeta,\gamma,\ddelta,i}\neq 0\} .
\end{equation*}

\subsection{Holonomic modules}\label{sec:holonomic-modules}

Let $S$ be a submodule of $W_\xx^r$.
We recall the classical definition of a \emph{holonomic $W_\xx$-module}
by means of the \emph{dimension} of the quotient module $M = W_\xx^r/S$.
We point out that any $W_\xx$-module of finite type is isomorphic to a module of this form.
The \emph{Bernstein filtration}~\cite{Bernstein-1971-MRD} of the algebra~$W_\xx$
is the sequence of $\bK$-vector spaces $\cF_m$ defined by
\begin{equation}\label{eq:Bernstein-filtration}
  \cF_m = \left\{ P\in W_\xx \mid \deg(P) \leq m \right\}.
\end{equation}
A filtration of the module~$M$ that is adapted to~$(\cF_m)_{m\geq0}$
is the sequence of $\bK$-linear subspaces $\Phi_m \subseteq M$
defined by
\begin{equation*}
  \Phi_m = \text{image in $M$ of } \left\{ P\cdot e_i \mid P\in \cF_m, \ 1\leq i\leq r \right\}.
\end{equation*}
Those filtrations are compatible with the algebra and module structures:
both $\cF_m \cF_{m'} \subseteq \cF_{m+m'}$ and $\cF_m \Phi_{m'} \subseteq \Phi_{m+m'}$ hold.
There exists a polynomial $p\in \bK[m]$ called the \emph{Hilbert polynomial} of~$M$
that satisfies $\dim_\bK(\Phi_k) = p(k)$ for any sufficiently large $k$.
The \emph{dimension} of the module~$M$ is the degree~$d$
of the polynomial~$p$. The integer~$d$ clearly lies between $0$ and~$2n$.
It was proved by Bernstein that if $M$~is non-zero, then $d$~is larger than or equal to~$n$ \cite[Theorem~9.4.2]{Coutinho_1995}.
When the dimension of~$M$ is exactly~$n$ or when $M$~is the zero module, we say that the module~$M$ is \emph{holonomic}.
(Here, we follow the tradition in \parencite[Chapter~5]{Borel-1987-ADM} and in \parencite{Coutinho_1995}
to consider the zero module as holonomic.
By way of comparison, \parencite{Bjork-1979-RDO}
speaks of a module “in the Bernstein class” to refer to a non-zero holonomic module.)



\subsection{Gröbner bases in Weyl algebras and their modules}\label{sec:grobner-bases-weyl}

Despite their non-commutative nature, by a \emph{monomial} we will mean
an element of the vector bases $\monxxr$, $\monxx$, and~$\montxxr$.
A \emph{monomial order} $\preccurlyeq$ on $W_\xx^r$ is a well-ordering on $\monxxr$
that satisfies for any $i,j\in\{1,\dots,r\}$ and any exponents $\aalpha,\bbeta,\aalpha_1,\bbeta_1,\aalpha_2,\bbeta_2\in\bN^n$
\[
\xx^{\aalpha_1}\dd_\xx^{\bbeta_1} e_i \preccurlyeq \xx^{\aalpha_2}\dd_\xx^{\bbeta_2}e_j \implies \xx^{\aalpha_1 +\aalpha}\dd_\xx^{\bbeta_1+\bbeta}e_i \preccurlyeq \xx^{\aalpha_2+\aalpha}\dd_\xx^{\bbeta_2+\bbeta}e_j.
\]
Given an operator $P\in W_\xx^r$, we define its \emph{support} $\supp(P)$
as the set of all monomials that occur with nonzero coefficient in the decomposition of $P$
with respect to the basis $\monxxr$.
We then define its \emph{leading monomial} $\lm(P)$ as the largest monomial for $\preccurlyeq$ in $\supp(P)$,
its \emph{leading coefficient} $\lc(P)$ as the coefficient of~$\lm(P)$ in this decomposition, and its \emph{leading term} $\lt(P)$ as $\lc(P)\lm(P)$.
We stress that our definition of a leading monomial makes~$\lm(P)$ an element of~$W_\xx^r$, whereas some authors
choose to see leading monomials
as commutative objects in an auxiliary commutative polynomial algebra,
introducing commutative variables~$\xi_i$
to replace the~$\partial_i$ in monomials.
For example, we have~$\lm(\partial_1 x_1 e_i) = x_1 \partial_1 e_i$.
Note that an essentially equivalent theory could be developed by choosing monomials
as elements of the basis consisting of the products~$\dd_\xx^\bbeta \xx^\aalpha e_i$,
instead of monomials in~$\monxxr$.

Computations in Weyl algebras rely heavily on a non-commutative generalization of Gröbner bases.
After the original introduction
\cite{Castro-1984-TDO,Castro-1987-CEI,Galligo-1985-SAQ}
there have been a number of presentations of such a theory, including
\cite{Takayama-1989-GBP,KandriRodyWeispfenning-1990-NGB,LevandovskyySchonemann-2003-PCA}.
A first textbook presentation is~\cite[Chapter~1]{SaitoSturmfelsTakayama-2000-GDH}.
A recent and simpler introduction can be found in~\cite{Bahloul-2020-GBD}.
We now adapt this to our needs.
A Gröbner basis of a left (resp.~right) ideal $I$ of $W_\xx$ with respect to a monomial order $\preccurlyeq$
is a finite set~$G$ of generators of~$I$
such that for any $a\in I$ there exist $g\in G$ and $q\in W_\xx$ satisfying $\lm(a) = \lm(qg)$ (resp.~$\lm(a) = \lm(gq)$).
Note that the non-commutativity of~$W_\xx$ prevents its monomials from enjoying the usual divisibility properties
found in the commutative setting:
in general, the product~$\lm(q) \lm(g)$ is not equal to~$\lm(qg)$, and is not even a monomial;
correlatively, $\lm(a)$~is in general not a multiple of~$\lm(g)$, although $\lm(a) = \lm(qg)$.
However, we always have $\lm(a) = \lm(\lm(q)\lm(g))$.
(The variables~$\xi_i$ introduced by other authors
serve to avoid this formula.)
A Gröbner basis allows to define and compute, for any $a\in W_\xx$, a unique representative of $a+I$ in the quotient $W_\xx/I$
by means of a non-commutative generalization of polynomial division.
We call this unique representative the
\emph{remainder of the division of~$a$ by the Gröbner basis~$G$}
or more shortly
\emph{remainder of~$a$ modulo the Gröbner basis~$G$}.
We denote this remainder $\LRem(a,G)$ when $I$ is a left ideal and $\RRem(a,G)$ when $I$ is a right ideal.
The concept of Gröbner bases for ideals of $W_\xx$ extends to submodules of $W_\xx^r$
in the same way as the notion of Gröbner bases for polynomial ideals generalizes to submodules of a polynomial algebra
(see \cite[Chapter~10.4]{BeckerWeispfenning-1993-GB},
\cite[Chapter~5]{CoxLittleOShea-1998-UAG},
or \cite[Section~3.5]{AdamsLoustaunau-1994-IGB}).
The noetherianity of Weyl algebras implies that any left or right submodule of $W_\xx^r$ admits a (finite) Gröbner basis.

Let $A$ be a subvector space of~$W_\xx^r$.
We define $\dd A$ as the $\bK$-vector space $\sum_{i=1}^n \partial_i A$.
If $A$ is a right $W_\xx$-module, then $\dd A$ is also a  right $W_\xx$-module.

\subsection{Integration}
\label{subsec:intmod}

The \emph{integral of a $W_\xx$-module~$M \simeq W_\xx^r/S$} is the $\bK$-vector space
\begin{equation}\label{eq:intmod}
  M/\dd M \simeq W_\xx^r/(S + \dd W_\xx^r).
\end{equation}
As already mentioned,
it is classical that, if $M$~is holonomic,
then the integral~\eqref{eq:intmod} of~$M$ is a finite-dimensional $\bK$-linear space \cite[Theorem~6.1 of Chapter~1]{Bjork-1979-RDO}.
Computing relations modulo~$\dd M$ is the main matter of this article:
given a family in~$M$, we want to find a linear dependency relation on its image in the integral module~$M/\dd M$,
if any such relation exists.

\subsection{Data structure for holonomic modules}
\label{subsec:data-structure}
Algorithmically, we only deal with holonomic $W_\xx$-modules.
They are finitely presented:
for such a module~$M$,
there exist $W_\xx$-linear homomorphisms~$a$ and~$b$ forming an exact sequence
\[ W_\xx^s \overset{a}{\to} W_\xx^r \overset{b}{\to} M \to 0. \]
Equivalently, this means that~$M \simeq W_\xx^r / S$, where~$S$ is the left submodule generated by the image under~$a$ of the canonical basis of~$W_\xx^s$.
The module~$S$ consists of $W_\xx$-linear combinations of the canonical basis of~$W_\xx^r$,
which we always denote $(e_1,\dots,e_r)$.
This gives a concrete data structure for representing holonomic modules.
It is well known that holonomic modules are \emph{cyclic}, that is, generated by a single element \cite{staffordModuleStructureWeyl1978}.
This means that we could in principle always assume that~$r=1$.
However, some modules have a more natural description with~$r>1$ and transforming the presentation to achieve~$r=1$
has an algorithmic cost that we are not willing to pay.
Therefore, we will not assume~$r=1$.

In order to integrate an infinitely differentiable function $f(\xx)$,
we may consider the $W_\xx$-module generated by~$f$
under the natural action of~$W_\xx$ on $C^\infty$~functions.
Of course, the holonomic approach to symbolic integration will only work if this module is holonomic.
Instead of~$W_\xx\cdot f$, we can also consider any holonomic module that contains it as a submodule.

For example, to integrate a rational function~$A/F \in \bK(\xx)$,
we can consider the module~$\bK[\xx][F^{-1}]$, which is holonomic.
However, finding a finite presentation $W_\xx^r/S \simeq \bK[\xx][F^{-1}]$ is not trivial.
There are algorithms \cite{OakuTakayamaWalther-2000-LAD} to solve this problem, but,
in terms of efficiency, it is still a practical issue
that we do not address in this work.
Fortunately, in many cases
we can easily construct a holonomic module for integration.
See for example Theorem~\ref{thm:kreg-main} in Section~\ref{sec:kreg}.

\section{Reductions}
\label{sec:reduction}

We consider the Weyl algebra~$W_\xx$ over a field $\bK$
and a finitely presented $W_\xx$-module~$M$ given in the form~$W_\xx^r/S$ for some~$r\geq1$ and some submodule~$S$ of~$W_\xx^r$ (see Section~\ref{sec:holonomic-modules}).
The main objective of the section is to compute
normal forms in~$M$ modulo~$\dd M$, or, equivalently, normal forms in~$W_\xx^r$ modulo~$S + \dd W_\xx^r$.
In other words, we want an algorithm that given some~$a\in W_\xx^r$ computes some~$[a]\in W_\xx^r$,
and such that $[a] = [b]$ if and only if~$a - b \in S + \dd W_\xx^{r}$.
This goal is only partially reached with a family of reductions~$[.]_\eta$
such that for each pair~$(a,b)$, there exists~$\eta$
such that $[a]_\eta = [b]_\eta$ if and only if~$a-b \in S + \dd W_\xx^r$.
The existence of a monomial~$\eta$ is not effective,
similarly to the maximal total degree to be considered in Takayama's algorithm~\cite{Takayama-1990-ACI}.
This is a step backwards compared to previous methods \cite{OakuTakayama_2001},
but computing weaker normal forms allows for more efficient computational methods.
Concretely, we do not rely on the computation of $b$-functions.

The present section is organized as follows.
In Section~\ref{subsec:red} we define a reduction procedure $[.]$
that partially reduces elements of~$W_\xx^r$ by $S + \dd W_\xx^r$,
in the sense that the procedure will in general not reduce every element of $S + \dd W_\xx^r$ to zero.
In Section~\ref{subsec:compirred} we define a filtration $(F_{\preccurlyeq\eta})_{\eta\in\monxxr}$ of the vector space~$S + \dd W_\xx^r$
and we give an algorithm to compute a basis of each vector space~$[F_{\preccurlyeq\eta}]$ of reduced forms.
Using this basis we define a new reduction~$[.]_\eta$ that enhances the first one.
In Section~\ref{subsec:guess_eta},
we provide, for some infinite families~$(a_i)_{i\geq 0}$ in~$W_\xx^r$, an algorithm for computing an~$\eta$ such that all the~$[a_i]_\eta$ lie in a finite-dimensional subspace.
In Section~\ref{subsec:GD}, we consider the case
where $S$~is the annihilator of~$e^f$ for some homogeneous multivariate polynomial~$f$,
and we compare our reduction procedures with variants of the Griffiths--Dwork reduction.

\subsection{Reduction $[.]$ and irreducible elements}
\label{subsec:red}

\paragraph{Reduction rules.}

Let $\preccurlyeq$ be a monomial order on $W_\xx^r$ and let $G$ be a Gröbner basis of $S$ for this order.
We define two binary relations $\to_1$ and $\to_2$ on $W_\xx^r\times W_\xx^r$ as follows:
\begin{itemize}
\item
  Given $a \in W_\xx^r$, $\lambda\in\bK$, $m\in \monxx$, and $g\in G$,
  we write
  \[ a \to_1 a -  \lambda m g \]
  if $\lm(mg)$ is in the support of~$a$ but not in the support of $a- \lambda m g$.
\item
  Given $a \in W_\xx^r$, $\lambda\in\bK$, $m\in \monxxr$, and $i\in\{1,\dots,n\}$,
  we write
  \[ a \to_2 a - \lambda \partial_i m \]
  if $\lm(\partial_i m)$~is in the support of~$a$ but not in the support of $a-\lambda\partial_i m$.
\end{itemize}

The relation $\to_1$ corresponds
to the reduction by the Gröbner basis~$G$
of the left module~$S$
and the relation $\to_2$ corresponds
to the reduction by the Gröbner basis $\{\partial_i e_j \mid i=1,\dots,n, \ j=1,\dots,r\}$
of the right module~$\dd W_\xx^r$.
Next, we define~$\to$ as the relation $\to_1\cup\to_2$.
That is, $a \to b$ if either~$a\to_1 b$ or~$a\to_2 b$.
The relation~$\to^+$ is the transitive closure of~$\to$:
$a\to^+ b$ if there exist an integer $s\geq1$ and a sequence of $s$~reductions
\begin{equation}\label{eq:a-red-b}
a \to c_1 \to \dots \to c_{s} = b
\end{equation}
for some~$c_1,\dotsc,c_{s} \in W_\xx^r$.
The relation~$\to^*$ is the reflexive closure of~$\to^+$: $a\to^* b$ if either~$a \to^+b$ or~$a = b$.
In this situation we say that \emph{$a$~reduces to~$b$}.

\paragraph{Irreducible elements.}

We say that an element~$b$ is \emph{irreducible} if there is no~$c$ such that $b \to c$
and we say that $b$ is a \emph{reduced form of~$a$} if $b$~is irreducible and~$a \to^* b$.

\begin{lemma}\label{lem:redinlrm}
Let $a,b \in W_\xx^r$.
If $a \to^* b$ then $a-b\in S + \dd W_\xx^r$.
\end{lemma}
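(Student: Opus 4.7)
The plan is to proceed by induction on the length $s$ of the reduction sequence $a \to c_1 \to \dots \to c_s = b$ witnessing $a \to^* b$, after handling the reflexive case $a = b$ separately (where $a - b = 0$ trivially belongs to $S + \dd W_\xx^r$).

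For the induction, it suffices to verify that a single step $a \to a'$ produces a difference $a - a' \in S + \dd W_\xx^r$. Once this is established, the general case follows because $S + \dd W_\xx^r$ is a $\bK$-vector space, hence closed under sums: writing $a - b = (a - c_1) + (c_1 - c_2) + \dots + (c_{s-1} - b)$ exhibits $a - b$ as a finite sum of elements in $S + \dd W_\xx^r$.

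So the whole proof reduces to inspecting the two cases defining $\to$. If $a \to_1 a'$ with $a' = a - \lambda m g$ for some $\lambda \in \bK$, $m \in \monxx$, and $g \in G$, then $g \in S$ and, since $S$ is a left $W_\xx$-module, $m g \in S$; thus $a - a' = \lambda m g \in S \subseteq S + \dd W_\xx^r$. If $a \to_2 a'$ with $a' = a - \lambda \partial_i m$ for some $\lambda \in \bK$, $m \in \monxxr$, and $i \in \{1,\dots,n\}$, then $\lambda \partial_i m \in \partial_i W_\xx^r \subseteq \dd W_\xx^r \subseteq S + \dd W_\xx^r$, so again $a - a' \in S + \dd W_\xx^r$.

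There is no real obstacle here: the statement follows by bookkeeping directly from the definitions of $\to_1$ and $\to_2$, together with the fact that the elementary reduction pieces are chosen precisely to lie in $S$ (respectively in $\dd W_\xx^r$). The conditions about leading monomials and supports that distinguish a genuine reduction from an arbitrary rewriting play no role for this particular lemma; they will only matter later when one needs termination or uniqueness of reduced forms.
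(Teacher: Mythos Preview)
Your proof is correct and follows exactly the same approach as the paper's: both argue directly from the definitions of $\to_1$ and $\to_2$, observing that the terms $mg$ and $\partial_i m$ lie in $S$ and $\dd W_\xx^r$ respectively. The paper compresses this into a single sentence, while you spell out the telescoping sum and the closure under addition, but there is no substantive difference.
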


\begin{proof}
This follows from the definition of~$\to_1$ and~$\to_2$ since the terms~$mg$ and~$\partial_i m$ are in~$S$ and~$\dd W_\xx^r$, respectively.
\end{proof}

However, the converse of~Lemma~\ref{lem:redinlrm} is not true in general, even when~$b = 0$: there may be nonzero irreducible elements in~$S+\dd W_\xx^r$.

\begin{lemma}\label{lem:vp}
The irreducible elements of~$W_\xx^r$ form a $\bK$-vector space.
\end{lemma}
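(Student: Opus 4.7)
The plan is to give an intrinsic characterization of irreducibility in terms of the support of an element, and then observe that this characterization cuts out a $\bK$-linear subspace of $W_\xx^r$.

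First I would unfold the definition of the two reduction rules in terms of supports. For $\to_1$, an element $a$ is reducible iff there exist $m\in\monxx$ and $g\in G$ such that $\lm(mg)\in\supp(a)$: in that case choose $\lambda = a_\mu / \lc(mg)$, where $\mu = \lm(mg)$ and $a_\mu$ is the coefficient of $\mu$ in $a$, and the subtraction $a - \lambda mg$ erases $\mu$ from the support, so $a\to_1 a-\lambda mg$; conversely, if $\mu = \lm(mg)\notin\supp(a)$, then $\mu$ cannot be both in $\supp(a)$ and removed from $\supp(a-\lambda mg)$, since it was not there to start with. Exactly the same dichotomy works for $\to_2$ with $\lm(\partial_i m)$ in place of $\lm(mg)$.

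From this, I would define the set of \emph{reducible monomials}
\[
  R = \{\lm(mg) \st m\in\monxx,\ g\in G\} \cup \{\lm(\partial_i m) \st m\in\monxxr,\ i\in\{1,\dots,n\}\} \;\subseteq\; \monxxr,
\]
and observe that $b\in W_\xx^r$ is irreducible if and only if $\supp(b)\cap R = \emptyset$, i.e.\ if and only if $b$ lies in the $\bK$-linear span of the set of monomials $\monxxr\setminus R$. Since that span is by construction a $\bK$-vector subspace of $W_\xx^r$, the claim follows immediately.

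There is essentially no obstacle here beyond cleanly handling the equivalence ``a monomial is removed by a single reduction step iff it was in the support before and not after'', and noting that the choice of $\lambda$ is always available (it is uniquely determined by the coefficient of the target monomial in $a$). The proof is a bookkeeping argument that isolates the support condition characterizing irreducibility; once this is in place, the vector-space property is automatic because the condition ``$\supp(b)\subseteq \monxxr\setminus R$'' is preserved under $\bK$-linear combinations.
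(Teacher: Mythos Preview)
Your proof is correct and rests on the same observation as the paper's: irreducibility of an element depends only on which monomials appear in its support. The paper argues by contradiction on closure under addition, using $\supp(a+b)\subseteq\supp(a)\cup\supp(b)$, whereas you make the characterization explicit by defining the set~$R$ of reducible monomials and identifying the irreducible elements with~$\Span_\bK(\monxxr\setminus R)$; this is a slightly more informative packaging of the same idea, since it exhibits the vector space concretely as a monomial span.
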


\begin{proof}
The set~$V$ of all irreducible elements contains~$0$ and is stable by multiplication by $\bK$.
Let $a,b\in V$ and assume by contradiction that $a+b$ is not irreducible.
Then, there exists a monomial~$m \in \monxxr$ in the support of~$a+b$ that can be reduced by~$\to$.
Because $a+b=b+a$, we can without loss of generality
assume that $m$~is in the support of $a$.
This contradicts the irreducibility of~$a$.
Thus $a+b\in V$.
\end{proof}

The vector space of Lemma~\ref{lem:vp} can be infinite-dimensional,
as we now exemplify.

\begin{example}\label{ex:badredcase}
  Let $S = W_{x_1}\partial_1$ be the left ideal generated by $\partial_1$ in the Weyl algebra in one pair of generators,~$(x_1,\partial_1)$.
  Note that~$W_{x_1}/S \simeq \bK[x_1]$ as~$W_{x_1}$-module.
  Let $\preccurlyeq$ be the lexicographic order for which $\partial_1 \preccurlyeq x_1$.
  Then, any element of~$\bK[x_1]$ (as a subspace of~$W_{x_1}$) is irreducible.
\end{example}

Irreducible forms can be computed by alternating left reductions with respect to a Gröbner basis of~$S$ (representing the rule~$\to_1$)
and right reductions with respect to a Gröbner basis of~$\dd W_\xx^r$ (representing the rule~$\to_2$).
This leads to Algorithm~\ref{algo:irred-form}.
Correctness is clear.
The algorithm terminates since the largest reducible monomial in~$a$, if any, decreases at each iteration of the loop.

\begin{alg}{Computation of a reduced form}{irred-form}
  \textbf{Input:}
  \begin{algovals}
    \item $a\in W_\xx^r$
    \item a Gröbner basis $G$ of $S$
  \end{algovals}
  \textbf{Output:}
  \begin{algovals}
    \item a reduced form of $a$
  \end{algovals}
  \begin{pseudo}
    \kw{while} $a$ is not irreducible \\+
     $a \gets \RRem(a, \{\partial_i e_j \mid i=1,\dots,n, \ j=1,\dots,r\})$ \\
      $a \gets \LRem(a, G)$ \\-
    \kw{return} $a$ \\
  \end{pseudo}
\end{alg}

\begin{definition}
We denote by~$[a]$
the reduced form of~$a\in W_\xx^r$ that is computed by Algorithm~\ref{algo:irred-form}.
\end{definition}

\begin{proposition}\label{prop:[.]-is-linear}
The map~$[.]$ is $\bK$-linear.
\end{proposition}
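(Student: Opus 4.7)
Write $L := \LRem(\cdot, G)$ and $R := \RRem(\cdot, \{\partial_i e_j \mid 1 \le i \le n,\ 1 \le j \le r\})$ for the two Gröbner reductions used inside Algorithm~\ref{algo:irred-form}. The strategy is to express $[a]$ as an iterate $(LR)^k(a)$ of two $\bK$-linear operators, and then invoke Lemma~\ref{lem:vp} together with a short bookkeeping on loop counts.

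First I would verify that $L$ and $R$ are $\bK$-linear. By the standard theory of Gröbner bases in Weyl algebras recalled in Section~\ref{sec:grobner-bases-weyl}, each is the $\bK$-linear projection of $W_\xx^r$ onto the subspace spanned by the monomials of $\monxxr$ that are \emph{not} of the form $\lm(mg)$ for $g \in G$, $m \in \monxx$ (for $L$), respectively not of the form $\lm(\partial_i m)$ for $1 \le i \le n$, $m \in \monxxr$ (for $R$), along the submodules $S$ and $\dd W_\xx^r$ respectively. In particular, if $c$ is irreducible then no monomial of $\supp(c)$ lies in either of these distinguished sets, so $L(c) = R(c) = c$ and $LR$ acts as the identity on irreducibles.

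Writing $k(a)$ for the number of while-loop iterations of Algorithm~\ref{algo:irred-form} on input $a$, the output is $[a] = (LR)^{k(a)}(a)$; by the fixed-point property above, $(LR)^m(a) = [a]$ for every $m \ge k(a)$. Given $a, b \in W_\xx^r$ and $\lambda \in \bK$, set $k = \max(k(a), k(b))$. The $\bK$-linearity of $L$ and $R$ yields
\[
(LR)^k(a + b) = (LR)^k(a) + (LR)^k(b) = [a] + [b] .
\]
By Lemma~\ref{lem:vp} the right-hand side is irreducible, so Algorithm~\ref{algo:irred-form} applied to $a+b$ must terminate by iteration $k$ at the latest, and iterating $LR$ further on an already irreducible element is the identity; hence $[a+b] = (LR)^k(a+b) = [a] + [b]$. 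The scalar case is immediate from $(LR)^{k(a)}(\lambda a) = \lambda\,(LR)^{k(a)}(a) = \lambda[a]$, which is irreducible by Lemma~\ref{lem:vp}, hence equals $[\lambda a]$.

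The only step that requires genuine care is the $\bK$-linearity of $L$ and $R$ in the non-commutative setting; this is built into the Gröbner basis machinery as the uniqueness of the normal form in a distinguished complement subspace, a $\bK$-linear projection. Everything else is a diagram chase through commuting linear maps together with the closure property of Lemma~\ref{lem:vp}.
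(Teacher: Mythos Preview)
Your proof is correct and follows essentially the same approach as the paper's: both argue that $\LRem$ and $\RRem$ are $\bK$-linear (by uniqueness of Gr\"obner normal forms), that their composition fixes irreducibles, and hence that $[.]$ agrees with the linear map $(LR)^k$ for $k$ large enough. The only cosmetic difference is that you invoke Lemma~\ref{lem:vp} to bound $k(a+b)\le\max(k(a),k(b))$, whereas the paper simply takes $\tau$ large enough to cover $a$, $b$, and $a+b$ simultaneously (using only that the algorithm terminates on every input); either way the conclusion follows.
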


\begin{proof}
The maps RRem and LRem are $\bK$-linear by the uniqueness of the remainder
of a division by a Gröbner basis.
Let $\tau(a)$ denote
the number of iterations of the while loop in Algorithm~\ref{algo:irred-form} on input~$a$.
Given $\tau\in\bN$, let $V_\tau$ denote the set of all~$a$ for which $\tau(a) \leq \tau$.
The restriction of~$[.]$ on~$V_\tau$ takes the same values as
the composition of $\tau$~copies of RRem and $\tau$~copies of LRem in alternation,
in which some of the final copies effectively act by the identity as they input irreducible elements.
So the restriction of~$[.]$ on~$V_\tau$ is $\bK$-linear as a composition of linear maps.
The result follows because $W_\xx^r = \bigcup_{\tau\geq0} V_\tau$.
\end{proof}

\begin{example}\label{ex:airy-red}
The following identity will serve as a running example throughout the article:
\begin{equation*}
\int_{\gamma}
    \exp\left(\frac{x^{3}}{3} - x(t+2z)\right)
    \exp\left(\frac{y^{3}}{3} - y(t+z)\right)
    \, dx\, dy\, dz
    = \frac{(2\pi i)^2}{7^{1/3}\sqrt{\pi}}
      \ai\left(\frac{t}{7^{1/3}}\right),
\end{equation*}
where the integration domain $\gamma$ is the product
of $\bR$ with two loops around~$0$ in~$\bC$.
It can be obtained by classical means, e.g.,
it is obtained by specializing the parameters $(a,b,c,d)$ to $(2,t,1,t)$ in~\cite[Eq.~2.11.6]{prudnikovIntegralsSeries31986},
leading to
\begin{equation*}
\int_{-\infty}^{\infty} \ai(2z+t) \ai(z+t) \, dz = \frac{1}{7^{1/3}\sqrt{\pi}} \ai\left(\frac{t}{7^{1/3}}\right) ,
\end{equation*}
before replacing twice the Airy function with its contour integral representation
\begin{equation*}
\ai(u) = \frac{1}{2\pi i} \oint \exp\left(\frac{x^{3}}{3} - xu\right)\, dx .
\end{equation*}

The triple-integral identity can also be obtained with the algorithms of this paper.
Let
\[ f(x,y,z,t)
    = \exp\left(\frac{x^{3}}{3} - x(t+2z)\right)
      \exp\left(\frac{y^{3}}{3} - y(t+z)\right). \]
Write $W_{x,y,z}(t)$ for the Weyl algebra over the base field $\bK = \bQ(t)$, with~$n=3$.
Consider the operators
$\partial_x - f^{-1}\tfrac{\partial f}{\partial x}$,
$\partial_y - f^{-1}\tfrac{\partial f}{\partial y}$,
$\partial_z - f^{-1}\tfrac{\partial f}{\partial z}$,
which are elements of~$W_{x,y,z}(t)$.
They generate the annihilating ideal~$\ann(f)$ of~$f$ in~$W_{x,y,z}(t)$.
We then form the module $M = W_{x,y,z}(t) / \ann(f)$.
Note that we do not consider here the differential structure with respect to~$t$.
We fix a monomial order that is a block order in the sense of \parencite{BeckerWeispfenning-1993-GB}:
monomials are compared first according to the grevlex (graded reverse lexicographic) order satisfying $x>y>z$;
ties are next broken by comparing according to the grevlex order satisfying $\partial_x>\partial_y>\partial_z$.
The reduced Gröbner basis $G$ of $\ann(f)$ with respect to this monomial order%
\footnote{This is $\grevlex(x,y,z) \;>\; \grevlex(\partial_x,\partial_y,\partial_z)$ in the notation of
Brochet's Julia implementation.
This corresponds to \texttt{lexdeg([x,y,z],[$\partial_x$,$\partial_y$,$\partial_z$])} in Maple's notation.}
is given by:
\begin{align*}
g_1 =&  \underline{y^{2}} - z - \partial_y - t, \\
g_2 =& 14\underline{yz} + 8y\partial_x - 2y\partial_y + 6ty
 - 11 z\partial_z
 + \partial_z^{3}
 - 4\partial_x\partial_z
 - 3\partial_y\partial_z
 - 7t\partial_z
 - 11, \\
g_3 =& 49 \underline{z^{2}} + 14y
 - 18 z\partial_z^{2}
 + 56 z\partial_x
 - 14 z\partial_y
 + 42t z
 + \partial_z^{4}
 - 8\partial_x\partial_z^{2}, \\
 &- 2\partial_y\partial_z^{2}
 + 16\partial_x^{2}
 - 8\partial_x\partial_y
 + \partial_y^{2}
 - 10t\partial_z^{2}
 + 24t\partial_x
 - 6t\partial_y
 - 20\partial_z
 + 9t^{2} \\[4pt]
g_4 =& 2\underline{x} + y + \partial_z, \\
g_5 =& 2\underline{y\partial_z}
 - 7 z
 + \partial_z^{2}
 - 4\partial_x
 + \partial_y
 - 3t.
\end{align*}

Let us now compute the reduction $[y^2]$ (Algorithm~\ref{algo:irred-form}). The monomial $y^2$ is not reducible by $\dd W_{x,y,z}(t)$ as it does not involve any $\partial$.
Then we reduce $y^2$ modulo $\ann(f)$:
\begin{equation}
  \LRem(y^2,G) = y^2 - g_1 = z + \partial_y + t.
\end{equation}
Then, we reduce the result modulo $\dd W_\xx$:
\begin{equation}
  \RRem(z + \partial_y  + t, \{\partial_x,\partial_y,\partial_z\}) = (z + \partial_y  + t) - \partial_y = z + t.
\end{equation}
The operator $z + t$ is now irreducible, hence $[y^2] = z + t$.
\end{example}

\subsection{Computation of the irreducible elements of $S + \dd W_\xx^r$}
\label{subsec:compirred}

Again, we fix an order~$\preccurlyeq$
and a submodule~$S$ of~$W_\xx^r$ by considering a Gröbner basis~$G$ of it.
Let $\Irr$ be the vector space of all irreducible elements of $S + \dd W_\xx^r$. This vector space
can be infinite-dimensional hence we cannot hope to compute all of it.
We therefore define a vector-space filtration $(F_{\preccurlyeq\eta})_{\eta\in\monxxr}$
of $S + \dd W_\xx^r$~by
\[
  F_{\preccurlyeq\eta} = \left\{s + d \in W_\xx^r \mid s \in S, \ d\in \dd W_\xx^r \text{, and } \max (\lm(s),\lm(d)) \preccurlyeq \eta \right\} ,
\]
and a vector-space filtration of~$\Irr$ by $\Irr_{\preccurlyeq\eta} := F_{\preccurlyeq\eta}\cap\Irr$.
We define $F_{\prec\eta}$ and~$\Irr_{\prec\eta}$ similarly,
by requiring a strict inequality on the maximum of the leading monomials.
%

Our goal is to obtain an efficient computation of a $\bK$-basis of $\Irr_{\preccurlyeq\eta}$.
Let us give an intuitive description of our algorithm.
By general properties of Gröbner bases, a non-zero element reduces to zero using the relation~$\rightarrow_1$
(resp.~$\rightarrow_2$) if and only if it belongs to~$S$ (resp.~$\dd W_\xx^r$).
The difficulty arises when both reduction rules can be applied to reduce a monomial.
For example, take an element~$s$ in~$S$ such that $\lm(s) \in \lm(S)\cap \lm(\dd W_\xx^r)$
and, assuming it can be reduced so as to cancel its leading monomial by using $\rightarrow_2$,
perform this reduction, that is,
find~$s'$ such that~$s \rightarrow_2 s'$ and $\lm(s') \prec \lm(s)$.
In this case, it is possible that $s'$~is neither in~$S$ nor in~$\dd W_\xx^r$,
making it a good candidate for an element that does not reduce to~$0$ by~$\rightarrow$.
The following theorem shows more precisely how generators of~$\Irr$ can be obtained.

\begin{theorem}\label{thm:redT}
Let $\eta \in \monxxr$.
\begin{enumerate}
\item If $\eta \not\in \lm(S)\cap \lm(\dd W_\xx^r)$, then
$
\Irr_{\preccurlyeq \eta}= \Irr_{\prec\eta}$.

  \item If $\eta\in \lm(S)\cap \lm(\dd W_\xx^r)$, then
        $
        \Irr_{\preccurlyeq \eta} =  \Irr_{\prec\eta} + \bK a$,
        for any reduced form~$a$ of~$mg - \partial_i w$,
        where~$w \in W_\xx^r$, $m\in W_\xx$ and $g\in G$ are any elements such that $\eta = \lm(mg) = \lm(\partial_i w)$ and\/~$\lc(mg) = \lc(\partial_i w)$.
Moreover, such $m$ and~$g$ exist because~$G$ is a Gröbner basis of~$S$.
\end{enumerate}
\end{theorem}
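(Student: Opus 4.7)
The plan is to proceed by case analysis on the membership of $\eta$ in $\lm(S)$ and $\lm(\dd W_\xx^r)$. First I would establish three preliminary observations. Using the Gröbner basis property of $G$ (and the fact that $\{\partial_i e_j\}$ is a right Gröbner basis of $\dd W_\xx^r$), a monomial $\mu$ is reducible by ${\to_1}$ (resp.\ by ${\to_2}$) if and only if $\mu \in \lm(S)$ (resp.\ $\mu \in \lm(\dd W_\xx^r)$); consequently, every irreducible $f$ satisfies $\mu \notin \supp(f)$ for every such reducible $\mu$. Moreover, both $F_{\preccurlyeq\eta}$ and $F_{\prec\eta}$ are stable under ${\to_1}$ and ${\to_2}$, because a one-step reduction modifies a chosen summand by adding a scalar multiple of an element of $S$ or $\dd W_\xx^r$ whose leading monomial is bounded by the monomial being reduced, hence by the decomposition's maximum. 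Iterating, the map $[.]$ preserves both filtration levels.

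For part~1, if $\eta \notin \lm(S)$ then every $s \in S$ with $\lm(s) \preccurlyeq \eta$ actually has $\lm(s) \prec \eta$. When $\eta \in \lm(\dd W_\xx^r)$, the observations above give $\eta \notin \supp(f)$ for any irreducible $f$; any decomposition $f = s + d \in F_{\preccurlyeq\eta}$ then has the coefficient of $\eta$ in $d$ equal to that in $f$, namely $0$, forcing $\lm(d) \prec \eta$ and $f \in F_{\prec \eta}$. The symmetric sub-case $\eta \notin \lm(\dd W_\xx^r)$ is analogous, completing part~1.

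For part~2, existence of $m$, $g$, $w$, $i$ with $\lm(mg) = \lm(\partial_i w) = \eta$ follows from $\eta \in \lm(S) \cap \lm(\dd W_\xx^r)$ together with the Gröbner basis property: pick $s\in S$ of leading monomial $\eta$ and $q \in W_\xx$ with $\lm(s) = \lm(qg)$ for some $g \in G$, set $m = \lm(q)$, and argue analogously on the $\dd W_\xx^r$-side using $\{\partial_i e_j\}$; a final scaling aligns the leading coefficients. Then $h := mg - \partial_i w$ lies in $F_{\preccurlyeq\eta}$ via its manifest decomposition, and by preservation under reduction, $a := [h] \in F_{\preccurlyeq \eta} \cap \Irr = \Irr_{\preccurlyeq \eta}$. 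The inclusion $\Irr_{\prec \eta} + \bK a \subseteq \Irr_{\preccurlyeq \eta}$ then follows from Lemma~\ref{lem:vp}.

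For the reverse inclusion, given $f \in \Irr_{\preccurlyeq \eta}$ with any decomposition $f = s + d$, I would take $\lambda$ to be the coefficient of $\eta$ in $s$ and compute $f - \lambda h = (s - \lambda mg) + (d + \lambda \partial_i w)$. The coefficient of $\eta$ in $s - \lambda mg$ vanishes by the choice of $\lambda$; the coefficient of $\eta$ in $d + \lambda \partial_i w$ equals that of $\eta$ in $f$, which also vanishes because $f$ is irreducible and $\eta$ is reducible. Hence $f - \lambda h \in F_{\prec \eta}$. By linearity of $[.]$ together with the preservation property, $f - \lambda a = [f - \lambda h] \in F_{\prec \eta} \cap \Irr = \Irr_{\prec \eta}$, yielding $f \in \Irr_{\prec \eta} + \bK a$. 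The subtlety I expect to be the main obstacle is that $F_{\prec \eta}$ is defined via existence of a low-leading-monomial decomposition of $f$ rather than by $\lm(f)$ itself—so $h$ is generally not in $F_{\prec \eta}$ despite having $\lm(h) \prec \eta$, and $a$ is generally not in $\Irr_{\prec \eta}$. The argument succeeds precisely because the coefficient bookkeeping on the $S$-side and on the $\dd W_\xx^r$-side is symmetric, and the single choice of $\lambda$ forces cancellation on both sides simultaneously.
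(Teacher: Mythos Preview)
Your argument follows essentially the same route as the paper's: analyze a decomposition $f=s+d$, subtract a suitable multiple of $h=mg-\partial_i w$ to push both components below~$\eta$, and use stability of the filtration under reduction. Your use of the linearity of~$[.]$ in the last step is a tidy variant of the paper's explicit tracking of the reduction chain.

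There is one small gap. By setting $a:=[h]$ you establish the claim only for this particular reduced form, whereas the theorem asserts it for \emph{every} reduced form~$a$ of~$h$. Your linearity trick $[f-\lambda h]=f-\lambda[h]$ does not give $f-\lambda a$ when $a\neq[h]$. The fix is short and uses only ideas you already stated: since $\lm(h)\prec\eta$, each step of any chain $h\to^* a$ subtracts an element of $S$ or of $\dd W_\xx^r$ whose leading monomial lies in the current support and is therefore $\prec\eta$; grouping these gives $h-a\in F_{\prec\eta}$. Then $f-\lambda a=(f-\lambda h)+\lambda(h-a)\in F_{\prec\eta}$, and since $f$ and $a$ are both irreducible so is $f-\lambda a$, whence $f-\lambda a\in \Irr_{\prec\eta}$. (A trivial side remark: your $\lambda$ should be the coefficient of $\eta$ in $s$ divided by $\lc(mg)$, or else normalize so that $\lc(mg)=1$.)
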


\begin{proof}
For the first point, we prove by contradiction that
for any $a\in \Irr_{\preccurlyeq \eta}$ and any $s\in S$ and $d\in \dd W_\xx^r$
satisfying $a = s+d$ and $\max(\lm(s),\lm(d)) \preccurlyeq \eta$,
we have in fact $\max(\lm(s),\lm(d)) \prec \eta$.
This will imply the equality $\Irr_{\preccurlyeq \eta}= \Irr_{\prec\eta}$.
Let us assume that the equality $\max(\lm(s),\lm(d)) = \eta$ holds.
Therefore, either $\lm(s) = \lm(d) = \eta$, or $\lm(s)\prec \lm(d) = \eta$, or $\lm(d)\prec\lm(s) = \eta$.
The first case is excluded because we assumed $\eta \not\in \lm(S)\cap \lm(\dd W_\xx^r)$.
In both remaining cases it is possible to reduce $\lm(a)=\eta$ with one of the two reduction rules.
This contradicts the fact that $a$~is irreducible.

For the second point, let $m$, $g$, $w$, $i$, and~$a$ be given as in the statement.
We first check that~$\Irr_{\prec\eta} + \bK a \subseteq \Irr_{\preccurlyeq\eta}$.
It is enough to prove that~$a \in \Irr_{\preccurlyeq\eta}$.
By definition, $mg - \partial_i w \in F_{\preccurlyeq \eta}$,
and we check easily that~$F_{\preccurlyeq \eta}$ is stable under~$\to$.
So~$a\in F_{\preccurlyeq}$. Since~$a$ is also irreducible, we have~$a\in \Irr_{\preccurlyeq\eta}$.

Let us prove the other inclusion.
Let $f\in \Irr_{\preccurlyeq\eta}$.
Then $f$~is irreducible and of the form~$s + d$ for $s\in S$ and $d\in \dd W_\xx^r$ satisfying $\max(\lm(s),\lm(d)) \preccurlyeq \eta$.
If this inequality is strict, then $f\in \Irr_{\prec \eta}$,
proving $f \in \Irr_{\prec\eta} + \bK a$.
Otherwise, we have the equality~$\max(\lm(s), \lm(d)) = \eta$.
Let us remark the equality $\lm(s) = \lm(d)$, for otherwise
either $\lm(s) \succ \lm(d)$ and $f$~could be reduced using~$\rightarrow_1$,
or $\lm(s) \prec \lm(d)$ and $f$~could be reduced using~$\rightarrow_2$.
So $\eta = \lm(s) = \lm(d)$.
This monomial cannot be~$\lm(f)$,
for otherwise $f$~could be reduced using any of $\rightarrow_1$ and~$\rightarrow_2$.
Hence $\lm(f) \prec \eta$ and $\lt(s) = - \lt(d)$.
We decompose $s$ and~$d$ as
$s = \lambda mg + s'$ and $d =  -\lambda \partial_i w + d'$ with $\lambda\in \bK$, $s'\in S$, $d'\in \dd W_\xx^r$, and $\max(\lm(s'), \lm(d')) \prec \eta$.
Let $h$ denote~$mg - \partial_i w$,
which, by hypothesis, has~$a$ as a reduced form.
This implies an equality of the form $h = a + s'' + d''$
with $s'' \in S$, $d''\in \dd W_\xx^r$, and $\max(\lm(s''), \lm(d'')) \prec \eta$.
We obtain $f = s + d = \lambda (mg - \partial_i w) + s' + d' = \lambda a + b$ with $b = s' + \lambda s'' + d' + \lambda d''$.
Since both $f$ and $a$ are irreducible so is $b$, thus $b\in \Irr_{\prec\eta}$,
proving that $f$~is in $\Irr_{\prec\eta} + \bK a$.
\end{proof}

The meaning of Theorem~\ref{thm:redT} is that
the dimension of the filtration $(\Irr_{\preccurlyeq\eta})_\eta$ is susceptible
to increase at~$\eta$ only if $\eta\in \lm(S)\cap \lm(\dd W_\xx^r)$.
But this is not necessary as the element~$a$ may well be in~$\Irr_{\prec\eta}$.
The following lemma describes a sufficient condition for this situation.

\begin{lemma}\label{lem:crit-red}
  Let $\eta\in \lm(S)\cap\lm(\dd W_\xx^r)$. If there exist $g\in G$, $m\in\monxx$, and some~$i$
  such that $\eta = \lm(\partial_i mg)$, then
  \[
  \Irr_{\prec \eta} = \Irr_{\preccurlyeq \eta}.
  \]
\end{lemma}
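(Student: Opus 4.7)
The plan is to invoke Theorem~\ref{thm:redT}(2) with a cleverly chosen witness that makes the new generator~$a$ trivially zero. Theorem~\ref{thm:redT}(2) tells us that $\Irr_{\preccurlyeq\eta} = \Irr_{\prec\eta} + \bK a$ for \emph{any} reduced form~$a$ of an element of the form $m' g' - \partial_{j} w$ satisfying $\lm(m'g') = \lm(\partial_j w) = \eta$ and $\lc(m'g') = \lc(\partial_j w)$. So it suffices to exhibit one choice of $(m',g',j,w)$ for which the corresponding difference is already~$0$, since then its reduced form is~$0$ and the extra summand $\bK a$ contributes nothing.

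Concretely, by the hypothesis of the lemma, there exist $g\in G$, $m\in \monxx$ and some~$i$ with $\eta = \lm(\partial_i m g)$. Set $m' := \partial_i m \in W_\xx$, $g' := g \in G$, $j := i$, and $w := m g \in W_\xx^r$. Then
\[
  m' g' = \partial_i m g = \partial_i w,
\]
so trivially $\lm(m' g') = \lm(\partial_j w) = \eta$ and the two leading coefficients coincide. The difference $m' g' - \partial_j w$ is identically zero, hence its unique reduced form is~$0$, and $0 \in \Irr_{\prec\eta}$. Applying Theorem~\ref{thm:redT}(2) with this witness yields
\[
  \Irr_{\preccurlyeq\eta} = \Irr_{\prec\eta} + \bK \cdot 0 = \Irr_{\prec\eta},
\]
which is the desired equality. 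There is no real obstacle: the only thing to notice is that the assumption $\eta = \lm(\partial_i m g)$ exactly says that a single element of $S$, namely $mg$, simultaneously certifies that $\eta$ lies in $\lm(S)$ and that $\eta$ lies in $\lm(\dd W_\xx^r)$, so the two competing reductions from Theorem~\ref{thm:redT}(2) cancel one another on the nose.
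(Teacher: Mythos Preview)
Your proof is correct and essentially identical to the paper's own proof: both apply Theorem~\ref{thm:redT}(2) with the witness $m' = \partial_i m$, $g' = g$, $j = i$, $w = mg$, so that the difference $m'g' - \partial_j w$ vanishes and its reduced form is~$0 \in \Irr_{\prec\eta}$.
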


\begin{proof}
  By Theorem~\ref{thm:redT}, the result reduces to proving that
  $\Irr_{\prec \eta}$~contains the reduced form~$a$ of some $h = \partial_i mg - \partial_j w$ with~$\lm(h) \prec \eta$.
  We choose~$j = i$ and~$w = mg$, so that~$h=0$, which already is irreducible and in~$\Irr_{\prec \eta}$.
\end{proof}



\begin{corollary}\label{cor:Irr-eta}
  Let $\eta \in \monxxr$.
  Let $H$ be the set of monomials~$m \preccurlyeq \eta$ such that $m\in \lm(S)\cap\lm(\dd W_\xx^r)$ and
  $m \neq \lm(\partial_i p g)$
  for any $i\in \{1,\dots, n\}$, $g\in G$, and $p\in\monxx$.
  For~$m\in H$, let $a_m \in W_\xx^r$ be any reduced form of some $\xx^\ggamma g - \lc(g) \dd^\bbeta \xx^{\aalpha + \ggamma} e_j$, where
  $g \in G$,
  $\lm(g) = \xx^{\aalpha} \dd^{\bbeta} e_j$,
  and $m = \lm(\xx^\ggamma g)$.
  Then
  \begin{equation}\label{eq:Irr-as-sum}
    \Irr_{\preccurlyeq \eta} = \sum_{m\in H} \bK a_{m} .
  \end{equation}
\end{corollary}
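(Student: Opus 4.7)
The plan is to prove equation~\eqref{eq:Irr-as-sum} by well-founded induction on~$\eta$ along the well-order~$\preccurlyeq$, tracking how the filtration~$(\Irr_{\preccurlyeq\eta})_\eta$ grows one step at a time. Theorem~\ref{thm:redT} together with Lemma~\ref{lem:crit-red} already supplies the structural information: passing from $\Irr_{\prec\eta}$ to $\Irr_{\preccurlyeq\eta}$ adds at most one dimension, and can only do so when~$\eta$ lies in~$H$. So the induction reduces to identifying the right generator in the only nontrivial case.

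For the inductive step I would distinguish three cases. Case~(i): $\eta \notin \lm(S) \cap \lm(\dd W_\xx^r)$. Then $\eta \notin H$, and Theorem~\ref{thm:redT}(1) yields $\Irr_{\preccurlyeq\eta} = \Irr_{\prec\eta}$; the induction hypothesis concludes. Case~(ii): $\eta \in \lm(S) \cap \lm(\dd W_\xx^r)$ but $\eta = \lm(\partial_i p g)$ for some~$i$, some $p \in \monxx$, and some $g \in G$. Then again $\eta \notin H$ and Lemma~\ref{lem:crit-red} gives $\Irr_{\preccurlyeq\eta} = \Irr_{\prec\eta}$. Case~(iii): $\eta \in H$. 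Here one must show that the specific element $h := \xx^\ggamma g - \lc(g)\,\dd^\bbeta \xx^{\aalpha+\ggamma} e_j$ of the statement fits the template of Theorem~\ref{thm:redT}(2); its reduced form will then play the role of $a_\eta$, giving $\Irr_{\preccurlyeq\eta} = \Irr_{\prec\eta} + \bK a_\eta$, which combined with the induction hypothesis delivers the desired equality.

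The crucial verification is the content of Case~(iii), and it is the only place where the non-commutativity of~$W_\xx$ really bites. Writing $\lm(g) = \xx^\aalpha \dd^\bbeta e_j$ and using that $\eta \in H \subseteq \lm(\dd W_\xx^r)$ forces $|\bbeta| \geq 1$, I would pick any index~$i$ with $\beta_i \geq 1$, set the witness $m' := \xx^\ggamma$ of Theorem~\ref{thm:redT}(2), and choose $w \in W_\xx^r$ such that $\partial_i w = \lc(g)\,\dd^\bbeta \xx^{\aalpha+\ggamma} e_j$ (possible because one can factor a $\partial_i$ out of $\dd^\bbeta$). Then, expanding with the commutation rule $\partial_k x_k = x_k \partial_k + 1$, both $m' g$ and $\partial_i w$ have leading monomial $\xx^{\aalpha+\ggamma}\dd^\bbeta e_j = \eta$, and their leading coefficients coincide and equal~$\lc(g)$, since the leading coefficient of $\dd^\bbeta \xx^{\aalpha+\ggamma}$ in the basis~$\monxx$ is~$1$. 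The main obstacle I anticipate is precisely this bookkeeping: computing the leading monomial and leading coefficient of the products $\xx^\ggamma g$ and $\dd^\bbeta \xx^{\aalpha+\ggamma} e_j$ once they are rewritten in the basis~$\monxxr$. Everything else is a clean case distinction driven by Theorem~\ref{thm:redT} and Lemma~\ref{lem:crit-red}.
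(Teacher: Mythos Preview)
Your proposal is correct and follows essentially the same approach as the paper: a well-founded induction on~$\eta$, with the same three-way case split driven by Theorem~\ref{thm:redT} and Lemma~\ref{lem:crit-red}, and the same verification in the nontrivial case that $\bbeta\neq 0$ (so that $\lc(g)\,\dd^\bbeta\xx^{\aalpha+\ggamma}e_j$ has the form~$\partial_i w$) allows Theorem~\ref{thm:redT}(2) to be applied with $m=\xx^\ggamma$. The paper's proof is terser but structurally identical; your explicit bookkeeping of leading monomials and coefficients in Case~(iii) is exactly the content the paper leaves implicit.
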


\begin{proof}
  Note that for each~$m\in H$, the corresponding~$\bbeta$ is nonzero.
  Indeed, by definition, $m\in \lm(\dd W_x^r)$ so there is some~$\partial_i$ such that~$m = \lm(\partial_i m')$ for another monomial~$m'$. Moreover, $m = \lm(\xx^\ggamma g)$, so~$\lm(g)$ also contains~$\partial_i$.
  In particular, the term $\lc(g) \dd^\bbeta \xx^{\aalpha + \ggamma} e_j$ has the form~$\partial_i w$.
  Therefore,
  Theorem~\ref{thm:redT} applies and
  $\Irr_{\preccurlyeq m} =  \Irr_{\prec m} + \bK a_m$.
  For a monomial~$m$ not in~$H$, either Theorem~\ref{thm:redT} or Lemma~\ref{lem:crit-red}
  shows that~$\Irr_{\preccurlyeq m} =  \Irr_{\prec m}$.
  Then the statement follows by well-founded induction on~$\eta$.
\end{proof}

To turn Corollary~\ref{cor:Irr-eta} into an algorithm,
we introduce a finiteness property of the monomial order~$\preccurlyeq$.
\begin{hypothesis}\label{hyp:finiteness-for-eta}
For any two monomials $\gamma$ and~$\eta$ of~$\monxxr$,
the set of~$\aalpha$ for which $\xx^\aalpha \gamma \preccurlyeq \eta$ is finite.
\end{hypothesis}
This hypothesis is always satisfied by orders graded by total degree,
because a monomial~$\eta$ has a finite number of predecessors in~$\monxxr$.
It is also satisfied by orders eliminating~$\xx$, in the sense that
\begin{equation}\label{eq:elimination-order}
\aalpha'-\aalpha \in \bN^n \setminus \{0\} \Rightarrow \xx^\aalpha \dd_\xx^\bbeta e_i \prec \xx^{\aalpha'} \dd_\xx^{\bbeta'} e_{i'} ,
\end{equation}
as long as the set of~$\aalpha$ for which $\xx^\aalpha \preccurlyeq x_i$ is finite for each $i \in \{1,\dots,n\}$.
For example, this contains “elimination orders” \parencite{CoxLittleOShea-1998-UAG} or “block orders” \parencite{BeckerWeispfenning-1993-GB} that first order by total degree in~$\xx$,
but not a lexicographical order that has $x_1 > x_2 > \partial_1 > \partial_2$.

\begin{alg}{Computation of $\Irr_{\preccurlyeq\eta}$}{basis-E-lambda}
  \textbf{Input:}
  \begin{algovals}
    \item a Gröbner basis~$G$ of~$S$
    \item $\eta\in \monxxr$
    \end{algovals}
  \textbf{Output:}
  \begin{algovals}
    \item a generating family of the $\bK$-vector space $\Irr_{\preccurlyeq\eta}$
  \end{algovals}
  \begin{pseudo}
    $G' \gets \left\{ g \in G \st \lm(g) \text{ involves some $\partial_i$} \right\}$\\
    \label{line:defH}$H \gets \left\{ \lm(\xx^\ggamma g) \st  \ggamma \in \mathbb{N}^n, \ g\in G', \text{ and } \lm(\xx^\ggamma g)\preccurlyeq \eta \right\}$ \ct{finite by Hypothesis~\ref{hyp:finiteness-for-eta}}\\
    $H \gets H \setminus \left\{ \lm(\partial_i m g) \st m\in \monxx, \ g\in G, \ 1\leq i\leq n \right\}$\\

    $B \gets \varnothing$ \\
    \kw{for} $m \in H$\\+
      pick $g\in G'$ and $\ggamma$ such that~$m = \lm(\xx^\ggamma g)$\\
      $\xx^{\aalpha} \dd^\bbeta e_i \gets \lm(g)$ \quad \ct{by construction~$\bbeta \neq 0$}\\
      \label{line:reduce-B}$B \gets B \cup \big\{ \big[\xx^\ggamma g - \lc(g) \dd^\bbeta \xx^{\aalpha + \ggamma} e_i \big] \big\}$\\-
    \kw{return} $B$ \\
  \end{pseudo}
\end{alg}

\begin{theorem}\label{thm:algo-basis-E-lambda-is-correct}
Under Hypothesis~\ref{hyp:finiteness-for-eta}, Algorithm~\ref{algo:basis-E-lambda} is correct and terminates.
\end{theorem}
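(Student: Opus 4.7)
The plan is to derive correctness directly from Corollary~\ref{cor:Irr-eta}: one checks that the set $H$ computed by the algorithm coincides with the $H$ defined in the corollary, and that each element placed in $B$ is a valid $a_m$. Termination then reduces to the finiteness of~$H$.

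To match the two $H$'s, I would use the Gröbner-basis identity $\lm(qg) = \lm(\lm(q)\lm(g))$. Decomposing a general leading monomial $\lm(q) = \xx^\ggamma\dd^\ddelta$, this identity shows that every element of $\lm(S)$ has the form $\lm(\xx^\ggamma\dd^\ddelta g)$ for some $g\in G$ and some~$\ggamma,\ddelta$. The case $\ddelta=0$ reproduces the set assembled on line~\ref{line:defH}: the constraint $g\in G'$ is equivalent to $\lm(\xx^\ggamma g)$ involving some~$\partial_i$, which is exactly the condition for the monomial to belong to $\lm(\dd W_\xx^r)$. The case $\ddelta\neq 0$ always produces a monomial of the form $\lm(\partial_i p g)$ with $p\in\monxx$, which is precisely what the subtraction step removes. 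Combining these observations identifies the algorithm's final $H$ with the corollary's.

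For each $m\in H$, the algorithm picks $\ggamma$ and $g\in G'$ with $m = \lm(\xx^\ggamma g)$, writes $\lm(g) = \xx^\aalpha\dd^\bbeta e_i$ (with $\bbeta\neq 0$, guaranteed by $g\in G'$) and forms $[\xx^\ggamma g - \lc(g)\dd^\bbeta\xx^{\aalpha+\ggamma}e_i]$. The second summand factors as $\partial_{i_0} w$ for $w = \lc(g)\dd^{\bbeta-e_{i_0}}\xx^{\aalpha+\ggamma}e_i$ and any $i_0$ with $\beta_{i_0}>0$, and its leading monomial and coefficient match those of $\xx^\ggamma g$ by construction; Theorem~\ref{thm:redT}(2) then certifies that the reduced form is a valid $a_m$. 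Termination follows from the finiteness of $H$, itself a consequence of the finiteness of $G$ combined with Hypothesis~\ref{hyp:finiteness-for-eta} applied with $\gamma = \lm(g)$ for each $g\in G'$, which bounds the admissible~$\ggamma$ to a finite set; each iteration invokes Algorithm~\ref{algo:irred-form}, whose termination is already established. The main delicacy is the partition argument of the second paragraph: one must control carefully how leading monomials interact with left-multiplication by~$\partial_i$ and by $\xx$-monomials in the non-commutative setting, to ensure nothing is lost or doubly counted between the $\ddelta = 0$ and $\ddelta \neq 0$ cases.
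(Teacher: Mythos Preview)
Your proposal is correct and follows exactly the paper's approach: the paper's own proof is a two-sentence sketch (``termination is obvious since the set on line~\ref{line:defH} is finite, by hypothesis; for correctness, observe that $H$ computed in the algorithm is the same as the set~$H$ described in Corollary~\ref{cor:Irr-eta}''), and you have carefully unpacked precisely these two points. Your case split on~$\ddelta$ is the right way to verify the equality of the two $H$'s, and your remark about the delicacy of the non-commutative leading-monomial calculus is well taken but does not hide any real obstacle.
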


\begin{proof}
  Termination is obvious since the set on line~\ref{line:defH} is finite, by hypothesis.
  For the correctness, we observe that~$H$ computed in the algorithm is the same as the set~$H$ described in Corollary~\ref{cor:Irr-eta}.
\end{proof}

\begin{definition}
Let $B_\eta$ be an echelon form of the generating family returned by Algorithm~\ref{algo:basis-E-lambda} on input $\eta$.
We define a reduction $[.]_{\eta}$ from $W_\xx^r$ into itself by
\[
[a]_\eta = \Reduce( [a], B_\eta)
\]
where $[.]$ is the map defined by Algorithm~\ref{algo:irred-form}
and $\Reduce(.,B_\eta)$ is the reduction algorithm by the echelon form $B_\eta$.
\end{definition}

\begin{proposition}
The map~$[.]_\eta$ is $\bK$-linear.
\end{proposition}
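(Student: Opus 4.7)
The strategy is to factor $[.]_\eta$ as a composition of two $\bK$-linear maps, namely $[.]$ followed by $\Reduce(\cdot, B_\eta)$, and conclude by the fact that a composition of $\bK$-linear maps is $\bK$-linear. The first factor~$[.]$ is $\bK$-linear by Proposition~\ref{prop:[.]-is-linear}, so no new argument is needed there; everything hinges on verifying linearity of reduction against the echelon family~$B_\eta$.

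For that second factor, I would argue as follows. By definition, $B_\eta$ is an echelon form of the generating family returned by Algorithm~\ref{algo:basis-E-lambda}, meaning its elements have pairwise distinct leading monomials. Reduction by such a family proceeds by repeatedly locating a monomial in the support of the current element that coincides with $\lm(b)$ for some $b \in B_\eta$, and subtracting the unique scalar multiple of~$b$ that cancels it. Because the leading monomials of~$B_\eta$ are pairwise distinct, no new cancellation of another $\lm(b')$ is created by such a step except possibly in monomials strictly smaller than the one just cancelled, so the process terminates, and moreover the final output does not depend on the order in which reductions are performed. Concretely, $\Reduce(\cdot, B_\eta)$ is the $\bK$-linear projection of $W_\xx^r$ onto the $\bK$-span of the monomials outside $\{\lm(b) \mid b \in B_\eta\}$, parallel to the subspace $\Span(B_\eta)$; this projection is $\bK$-linear.

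Combining the two, $[a]_\eta = \Reduce([a], B_\eta)$ is $\bK$-linear as a composition of $\bK$-linear maps. I expect no real obstacle: once one identifies $\Reduce(\cdot,B_\eta)$ with the projection onto a specific monomial-indexed complement, the conclusion is a routine application of linear algebra, entirely analogous to the standard argument that reduction modulo a Gröbner basis is linear, which underlies Proposition~\ref{prop:[.]-is-linear} itself.
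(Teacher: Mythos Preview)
Your proof is correct and follows exactly the same approach as the paper: both factor $[.]_\eta$ as the composition $\Reduce(\cdot, B_\eta) \circ [.]$, invoke Proposition~\ref{prop:[.]-is-linear} for the first factor, and use the $\bK$-linearity of reduction against an echelon family for the second. If anything, you supply more justification for the linearity of $\Reduce(\cdot, B_\eta)$ than the paper does, which simply asserts it.
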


\begin{proof}
This follows from Proposition~\ref{prop:[.]-is-linear} and the $\bK$-linearity of $\Reduce(.,B_\eta)$.
\end{proof}

\begin{theorem}
\label{thm:red}
For any $a\in S + \dd W_\xx^r$ there exists $\eta\in \monxxr$ such that for all $\eta' \succcurlyeq \eta$, the remainder~$[a]_{\eta'}$ is zero.
\end{theorem}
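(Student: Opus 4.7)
The plan is to show that the irreducible form $[a]$ itself sits inside some level $\Irr_{\preccurlyeq\eta}$ of the filtration, and then to exploit that $B_{\eta'}$ is an echelon basis of $\Irr_{\preccurlyeq\eta'}$ to conclude that $\Reduce([a],B_{\eta'})=0$ for every $\eta'\succcurlyeq\eta$.

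First I would check that $[a]\in\Irr$. By Algorithm~\ref{algo:irred-form}, $[a]$ is irreducible and satisfies $a\to^*[a]$, so Lemma~\ref{lem:redinlrm} yields $a-[a]\in S+\dd W_\xx^r$. Combined with the hypothesis $a\in S+\dd W_\xx^r$, this forces $[a]=a-(a-[a])\in S+\dd W_\xx^r$, hence $[a]\in\Irr$. Now write $[a]=s+d$ with $s\in S$ and $d\in\dd W_\xx^r$, and set $\eta:=\max(\lm(s),\lm(d))$. By the very definition of the filtration, $[a]\in F_{\preccurlyeq\eta}\cap\Irr=\Irr_{\preccurlyeq\eta}$, and since the filtration is monotone, $[a]\in\Irr_{\preccurlyeq\eta'}$ for every $\eta'\succcurlyeq\eta$. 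Corollary~\ref{cor:Irr-eta} combined with the correctness of Algorithm~\ref{algo:basis-E-lambda} ensures that $B_{\eta'}$ spans $\Irr_{\preccurlyeq\eta'}$.

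It then remains to verify a general linear-algebra fact: $\Reduce(\cdot,B_{\eta'})$ annihilates any element of $\Span(B_{\eta'})$. Indeed, if $c:=\Reduce([a],B_{\eta'})$ were nonzero, then $\lm(c)$ would avoid the leading monomials of $B_{\eta'}$ (by the termination condition of $\Reduce$), yet $c\in\Span(B_{\eta'})$ would still have a leading monomial, and any nonzero element of the span of an echelon family has leading monomial equal to that of some family member---contradiction. Since $[a]$ is already irreducible we have $[[a]]=[a]$, so $[a]_{\eta'}=\Reduce([a],B_{\eta'})=0$, giving the theorem.

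The proof is mostly bookkeeping once the right witness $\eta$ is identified; the only point where one has to be slightly careful is to invoke Corollary~\ref{cor:Irr-eta}, rather than just Algorithm~\ref{algo:basis-E-lambda} in isolation, in order to know that $B_{\eta'}$ really spans the \emph{entirety} of $\Irr_{\preccurlyeq\eta'}$ and not merely some subspace containing the reduced forms actually produced by the algorithm.
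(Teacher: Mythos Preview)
Your argument is correct and follows essentially the same route as the paper's: show $[a]\in\Irr$, locate it in some $\Irr_{\preccurlyeq\eta}$, then use that $B_{\eta'}$ spans $\Irr_{\preccurlyeq\eta'}$ for all $\eta'\succcurlyeq\eta$ to conclude $\Reduce([a],B_{\eta'})=0$. The paper differs only cosmetically, invoking $\Irr=\bigcup_\eta\Irr_{\preccurlyeq\eta}$ rather than building~$\eta$ explicitly from a decomposition $s+d$, and your aside that $[[a]]=[a]$ is harmless but unnecessary since $[a]_{\eta'}=\Reduce([a],B_{\eta'})$ already by definition.
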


\begin{proof}
  The element~$[a]$ is congruent to~$a$ modulo~$S+\dd W_\xx^r$,
  so it is in~${S+\dd W_\xx^r}$, like~$a$ itself.
  Moreover, it is irreducible, and so it is in~$\Irr$ by the definition of~$\Irr$.
  Because of the equality $\Irr = \bigcup_{\eta \in \monxxr} \Irr_{\preccurlyeq\eta}$, there exists $\eta$
  such that $[a]\in \Irr_{\preccurlyeq\eta}$
  and thus $\Reduce([a], B_\eta) = 0$.
  For~$\eta' \succcurlyeq \eta$, the vector space $\Span_\bK(B_\eta)$ is included in~$\Span_\bK(B_{\eta'})$, so
  $[a]_{\eta'} = \Reduce([a], B_{\eta'}) = \Reduce(\Reduce([a], B_\eta), B_{\eta'}) = \Reduce(0, B_{\eta'}) = 0$.
\end{proof}

\begin{definition}\label{def:nf}
The \emph{normal form} of an element~$a \in W_\xx^r$ modulo~$S+\dd W_\xx^r$ is the unique element~$a' \in W_\xx^r$
such that~$a \equiv a' \pmod{S+\dd W_\xx^r}$
and no monomial of~$a'$ is the leading monomial of an element of~$S+\dd W_\xx^r$.
\end{definition}

\begin{corollary}\label{cor:nf}
  For any~$a\in W_\xx^r$, there exists $\eta\in \monxxr$ such that for all $\eta' \succcurlyeq \eta$,
  the remainder~$[a]_{\eta'}$ is the normal form of~$a$ modulo~$S+\dd W_\xx^r$.
\end{corollary}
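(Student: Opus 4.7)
\medskip

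\noindent\textbf{Proof proposal.} The plan is to derive the corollary from Theorem~\ref{thm:red} by running the latter on the difference between~$a$ and its normal form, exploiting $\bK$-linearity of~$[.]_{\eta'}$. First, I would invoke the existence and uniqueness of the normal form as stated in Definition~\ref{def:nf}: given~$a$, write~$a' \in W_\xx^r$ for its normal form, so that $a - a' \in S + \dd W_\xx^r$ and no monomial of~$a'$ lies in~$\lm(S + \dd W_\xx^r)$. (If existence of~$a'$ has not been established upstream, I would quickly sketch the standard argument: iteratively cancel the largest monomial of the current representative that lies in~$\lm(S+\dd W_\xx^r)$, which terminates by well-foundedness of~$\preccurlyeq$; uniqueness follows since the difference of two candidates would be in~$S+\dd W_\xx^r$ with no monomial in~$\lm(S+\dd W_\xx^r)$, hence zero.)

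Next, apply Theorem~\ref{thm:red} to~$a - a'$ to obtain some~$\eta_0 \in \monxxr$ such that $[a-a']_{\eta'} = 0$ for every~$\eta' \succcurlyeq \eta_0$. By the preceding proposition, $[.]_{\eta'}$ is $\bK$-linear, so for every such~$\eta'$ we have $[a]_{\eta'} = [a']_{\eta'}$. It remains to show that $[a']_{\eta'} = a'$ for every~$\eta'$. I would split this into two claims, each reading off a different inclusion satisfied by the monomials forbidden in~$a'$.

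The first claim is that $a'$ is irreducible for~$\to$, so that Algorithm~\ref{algo:irred-form} returns~$[a'] = a'$. Indeed, if some monomial~$m$ of~$a'$ were reducible by~$\to_1$ it would equal~$\lm(m'g)$ for some~$g \in G$ and~$m' \in \monxx$, hence~$m \in \lm(S)$; if reducible by~$\to_2$, then $m \in \lm(\dd W_\xx^r)$. Either way $m \in \lm(S) \cup \lm(\dd W_\xx^r) \subseteq \lm(S + \dd W_\xx^r)$, contradicting the defining property of~$a'$. The second claim is that $\Reduce(a', B_{\eta'}) = a'$: every element~$b \in B_{\eta'}$ lies in $\Irr_{\preccurlyeq \eta'} \subseteq S + \dd W_\xx^r$, so $\lm(b) \in \lm(S + \dd W_\xx^r)$, which is disjoint from the monomials of~$a'$, leaving~$a'$ unchanged by the echelon reduction. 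Combining both claims gives $[a']_{\eta'} = \Reduce([a'], B_{\eta'}) = \Reduce(a', B_{\eta'}) = a'$, hence $[a]_{\eta'} = a'$ for all $\eta' \succcurlyeq \eta_0$, which is the statement of the corollary.

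The only subtle point — rather than a real obstacle — is to notice that the condition defining the normal form is strictly stronger than $\to$-irreducibility: it forbids not only monomials in~$\lm(S) \cup \lm(\dd W_\xx^r)$ but also monomials in~$\lm(S+\dd W_\xx^r)$, which is precisely what is needed to ensure that the echelon reduction by~$B_{\eta'}$ also leaves~$a'$ fixed. Once this is identified, the argument is immediate from Theorem~\ref{thm:red} and linearity.
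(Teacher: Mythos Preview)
Your proposal is correct and follows essentially the same approach as the paper: take the normal form~$a'$, apply Theorem~\ref{thm:red} to~$a-a'$, use linearity of~$[.]_{\eta'}$, and observe that~$[a']_{\eta'}=a'$. Your argument is in fact more explicit than the paper's, which compresses your two claims into the single remark that ``$[.]_\eta$ replaces monomials by smaller ones, but only if this is possible''; your decomposition into irreducibility under~$\to$ and stability under~$\Reduce(\,\cdot\,,B_{\eta'})$ spells out exactly why that remark holds.
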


\begin{proof}
  Let~$a'$ be the normal form of~$a$.
  Let~$\eta$ such that~$[a- a']_\eta = 0$, given by Theorem~\ref{thm:red}.
  By definition, $[.]_\eta$~replaces monomials by smaller ones,
  but only if this is possible,
  so that we have~$[a']_\eta = a'$.
  By linearity of~$[.]_\eta$, we obtain~$[a]_\eta = a'$.
\end{proof}

\begin{example}\label{ex:airy-irred}
Let us continue Example~\ref{ex:airy-red}.
Fix $\eta = x^{2}$, the largest monomial of degree~$2$ for our monomial order.
We wish to compute $\Irr_{\preccurlyeq \eta}$ by Algorithm~\ref{algo:basis-E-lambda}.
The set~$G'$ is first set to~$\{g_5\}$, and after line~\ref{line:defH}, the set~$H$ equals $\{y\partial_{z}\}$:
this single element is obtained for~$g=g_5$, the only possibility for~$g$,
and $\xx^\ggamma=1$, as any other choice of~$\xx^\ggamma$ would imply
$\lm(\xx^\ggamma g)\succ \eta$.
Then, line~\ref{line:reduce-B} computes $[g_5 - 2\partial_z y]$, which evaluates to $-(7z + 3t)$.
We therefore deduce by Theorem~\ref{thm:algo-basis-E-lambda-is-correct} that
\[
  \Irr_{\preccurlyeq x^2} = \operatorname{Span}_{\bQ(t)} \left\{ 7z + 3t \right\}.
\]

In Example~\ref{ex:airy-red}, we had computed~$[y^2]$ and obtained~$z + t$.
We can now push this reduction further by computing the $\eta$-reduction of~$y^2$ as follows:
\begin{equation}
  [y^{2}]_{x^2}
  = \Reduce([y^{2}], \{7z + 3t\})
  = (z + t) - \tfrac{1}{7}(7z + 3t)
  = \tfrac{4}{7}t.
\end{equation}

\end{example}

\paragraph{Dimension of the vector space $\Irr_{\preccurlyeq\eta}$.}

By the finite dimensionality of $M/\dd M$, only finitely many monomials of $\monxxr$ are irreducible modulo $S + \dd W_\xx^r$.
As a consequence,
every monomial~$m \in \monxxr$ except for the finitely many irreducible ones
is either reducible by~$G$ or by an echelon form of~$\Irr_{\preccurlyeq \eta}$ for some~$\eta$.
If an infinite number of monomials is not reducible by~$G$, as in Example~\ref{ex:badredcase}, the dimension
of~$\Irr_{\preccurlyeq\eta}$ will tend to infinity when $\eta$~increases indefinitely,
making the computation of~$\Irr_{\preccurlyeq\eta}$ increasingly expensive.
As a consequence, the computational cost of Algorithm~\ref{algo:basis-E-lambda}
depends on the structure of the staircase formed by the leading monomials of~$G$.
We present  two extreme scenarios: in one, $\Irr_{\preccurlyeq\eta}$ is equal to $\{0\}$ for any $\eta$
and in the other, no monomial of $\bK[\xx]$ is reducible by~$G$.
Naturally, intermediate cases also exist.

\begin{example}
Let $S$ be the left ideal of $W_\xx(t)$ generated by the Gröbner basis
\begin{equation*}
  (t - 1)\underline{x_1} - t\partial_1 , \qquad
  \underline{x_2} -t .
\end{equation*}
Up to renaming variables, this is the left ideal used for the computation of the generating series of $2$-regular graphs in Section~\ref{sec:kreg}.
Every operator in this Gröbner basis has its leading monomial in $\bK[\xx]$, therefore  Theorem~\ref{thm:redT}
implies that $\Irr_{\preccurlyeq\eta}$ is~$\{0\}$ for any~$\eta$.
In this very special case we obtain that the reduction~$[.]$ computes normal forms. That is, $[a] = 0$
if and only if $a\in S + \dd W_\xx^r$.
We observed the same phenomenon with the ideals $S$ defined in Theorem~\ref{thm:kreg-main} for $k$-regular graphs up to $k=8$.
\end{example}

\begin{example}
Let $f(\xx,t) = 1 - (1-x_1x_2)x_3 - tx_1x_2x_3(1-x_1)(1-x_2)(1-x_3)$.
The integral of $1/f$ is related to the generating function of Apéry numbers~\cite{Beukers-1983-IPP}.
We were able to compute, by a method that we do not describe here,
a Gröbner basis for the grevlex order of a $W_\xx(t)$-ideal~$S$ included in $\ann(1/f)$ such that $W_\xx(t)/S$ is holonomic.
This Gröbner basis contains 26 operators but all of their leading monomials contain a~$\partial_i$.
Hence, no monomial of $\bK[\xx]$ is reducible by the Gröbner basis of $S$.
We observed the same phenomenon for every rational function that we tried.
\end{example}

Lastly, we present the simplest example on which our reduction $[.]_\eta$ is inefficient.

\begin{example}\label{ex:badredcase-continued}
We continue Example~\ref{ex:badredcase},
in which we set $S$ to~$W_{x_1}\partial_1$
and $\preccurlyeq$ to the lexicographic order $\partial_1 \preccurlyeq x_1$.
We remarked that  $\Irr \subset \bK[x_1]$, and reciprocally the equality
$(i+1)x_1^i = \partial_1 x_1^{i+1} - x_1^{i+1}\partial_1$ proves that $\bK[x_1] \subset \Irr$.
The reduction $[.]$ does not see that elements of $\bK[x_1]$ are reducible by $S + \dd W_{x_1}$
and Algorithm~\ref{algo:basis-E-lambda} ends up calculating a basis of $\bK[x_1]_{\preccurlyeq\eta}$.
\end{example}

\subsection{Confinement}
\label{subsec:guess_eta}

``Computing'' in the quotient $M/\dd M \simeq W_\xx^r/(S + \dd W_\xx^r)$
can take on several forms, with various levels of potency.
In the strongest interpretation, we want to compute a basis of the quotient, as a $\bK$-linear space,
and we want to be able to compute normal forms in~$W_\xx^r$ modulo~$S + \dd W_\xx^r$.
In a weaker sense, we merely want to be able to capture the finiteness of the quotient space,
without ensuring the linear independence of a finite generating set or even producing it explicitly.
In view of our needs for integration algorithms in the next sections,
there is an even weaker sense:
given $a \in W_\xx^r$ (which will designate a function to be integrated)
and a $W_\xx$-linear map~$L$ from~$W_\xx^r$ to itself (which will be related to taking derivatives with respect to a parameter $t \in \bK$),
we need to testify the finite-dimensionality of the span over~$\bK$
of the orbit $\{L^i(a)\mid i\in\bN\}$ modulo $S+\dd W_\xx^r$.
In practice (and in particular in Algorithm~\ref{algo:confinement}),
the map~$L$ will be provided by a square matrix~$\Lambda$ with entries in~$W_\xx$,
such that $L(a) = a \Lambda$.
In this section, we show that the reductions~$[.]_\eta$ can be used to find,
for any $a$ and~$L$, a finite set~$B$ that witnesses this finite-dimensionality.

\begin{alg}{Computation of a confinement}{confinement}
  \textbf{Input:}
  \begin{algovals}
    \item a module $S\subseteq W_\xx^r$ given by a Gröbner basis
    \item $a \in W_\xx^r$
    \item a $W_\xx$-linear map $L\colon W_\xx^r\to W_\xx^r$ given by an $r\times r$ matrix~$\Lambda$
    \item $\rho \in \bN$
    \end{algovals}
  \textbf{Output:}
  \begin{algovals}
    \item an effective confinement for~$a$ and~$L$ modulo~$S+\dd W_\xx^r$
  \end{algovals}
  \begin{pseudo}
    $s\gets \rho$ \\
    $\eta \gets$ the largest monomial of degree~$s$ \label{line:mainloop} \\
    $B \gets \varnothing$ \\
    $Q \gets \supp([a]_\eta)$ \\
    \kw{while} $Q\setminus B \neq \varnothing$ \\+
      $m\gets $ an element of~$Q\setminus B$ \label{line:pick-m} \\
      \kw{if} $\deg m > s - \rho$ \label{line:if} \\+
      $s\gets s+1$ \\
      \kw{goto} line~\ref{line:mainloop} \\-
      $Q \gets Q \cup \supp([L(m)]_\eta)$ \label{line:augment-Q} \\
      $B\gets B\cup \left\{ m \right\}$ \\-

    \kw{return} $(\eta, B)$ \\
  \end{pseudo}
\end{alg}

\begin{definition}
An \emph{effective confinement}
for~$a \in W_\xx^r$ and a $W_\xx$-linear map~$L\colon W_\xx^r \to W_\xx^r$
is a pair $(\eta,B)$ consisting of a monomial~$\eta$ and of a finite subset~$B\subseteq \monxxr$, and satisfying:
\begin{enumerate}
  \item the support of $[a]_\eta$ is included in~$B$;
  \item the support of $[L(m)]_\eta$ is included in~$B$ for any $m \in B$.
\end{enumerate}
An effective confinement is \emph{free} if the elements of~$B$ are $\bK$-linearly independent modulo~$S+\dd W_\xx^r$.
\end{definition}

\begin{theorem}\label{thm:confinement}
  Algorithm~\ref{algo:confinement} is correct.
  It terminates if $M/\dd M$~is finite-di\-men\-sio\-nal,
  for example if $M$~is holonomic.
  Moreover, if the input parameter~$\rho$ is large enough, then Algorithm~\ref{algo:confinement} outputs a free effective confinement.
\end{theorem}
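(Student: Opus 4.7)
The plan is to treat the three assertions separately.

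\textbf{Correctness.} The set $Q$ only grows during execution, is initialized to $\supp([a]_\eta)$, and receives $\supp([L(m)]_\eta)$ each time a monomial $m$ is put into $B$. At exit of the \texttt{while} loop, the condition $Q\subseteq B$ therefore delivers both conditions of the definition of an effective confinement.

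\textbf{Termination.} Each pass at a fixed value of $s$ is itself finite, because the body of the \texttt{while} loop only appends monomials of degree at most $s-\rho$ to $B$, and there are finitely many such monomials in $\monxxr$. It remains to show that $s$ stops growing. Assume $M/\dd M$ is finite-dimensional. Then the set $N := \monxxr \setminus \lm(S+\dd W_\xx^r)$ of normal monomials is finite; let $\delta = \max_{m \in N} \deg m$. Applying Corollary~\ref{cor:nf} to $a$ and to each $L(m)$ for $m \in N$ yields a monomial $\eta^\ast$ such that, for every $\eta' \succcurlyeq \eta^\ast$ and every $u$ in the finite set $\{a\} \cup \{L(m) \mid m \in N\}$, the reduction $[u]_{\eta'}$ equals the normal form of $u$, which is supported on $N$. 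For $s$ large enough, the monomial $\eta$ produced at line~\ref{line:mainloop} satisfies both $\eta \succcurlyeq \eta^\ast$ and $s - \rho \geq \delta$. In the corresponding pass, every element entered into $Q$ lies in $N$ (hence has degree at most $\delta \leq s-\rho$), so the restart test never fires and the pass completes.

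\textbf{Freeness.} For $\rho$ large enough, the constraint $\deg m \leq s - \rho$ that the algorithm enforces before accepting any $m$ guarantees that $[a]_\eta$ and each $[L(m)]_\eta$ computed along the way coincide with the genuine normal forms modulo $S + \dd W_\xx^r$: a sufficient gap between the support degrees and $s$ leaves no room for a non-normal leading monomial to survive the reduction by $B_\eta$. Then $B \subseteq N$, and the elements of $N$ are by definition $\bK$-linearly independent modulo $S+\dd W_\xx^r$, making the returned confinement free.

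\textbf{Main obstacle.} The delicate point is the implicit quantitative claim behind the freeness part: one needs a bound on $\rho$, depending on $\deg L$ and on the Gröbner data of $S$, that guarantees that the support of every $[u]_\eta$ of degree at most $s-\rho$ is actually the support of the normal form of $u$. I expect this to go through an auxiliary lemma saying that, once the support of $[u]_\eta$ drops sufficiently below $\eta$, no irreducible element of $F_{\preccurlyeq\eta}$ could have escaped reduction, so $[u]_\eta$ is the true normal form. Once such a lemma is available, both the termination analysis in the ``good'' pass and the freeness of $B$ follow at once.
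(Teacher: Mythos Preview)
Your correctness and termination arguments are essentially those of the paper and are fine (the paper additionally justifies that the largest monomial~$\eta$ of degree~$s$ satisfies $\eta\succcurlyeq\eta^\ast$ once $s\geq\deg\eta^\ast$, via $\eta\succcurlyeq\eta^\ast x_1^{s-\deg\eta^\ast}\succcurlyeq\eta^\ast$, but this is a detail you would surely fill in).

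The freeness part, however, has a genuine gap, and the fix is much simpler than the auxiliary lemma you anticipate. Your proposed mechanism---that the constraint $\deg m\leq s-\rho$ forces $[u]_\eta$ to be a true normal form because ``a sufficient gap\dots leaves no room for a non-normal leading monomial to survive''---is not justified, and in fact the degree test is applied \emph{after} the reductions are computed, so it cannot retroactively make them normal forms. Your hoped-for lemma (low-degree support of $[u]_\eta$ implies $[u]_\eta$ is the normal form) is not obvious: an irreducible element $e\in\Irr$ may have $\lm(e)$ of small degree yet lie in $\Irr_{\preccurlyeq\eta'}$ only for some $\eta'$ of much larger degree, so $[e]_\eta=e$ for intermediate~$\eta$ despite the support being small.

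The paper's argument avoids all of this. You have already produced, in your termination proof, a monomial $\eta^\ast$ such that $[\,\cdot\,]_{\eta'}$ computes genuine normal forms on the relevant inputs for every $\eta'\succcurlyeq\eta^\ast$. Now simply take $\rho\geq\deg\eta^\ast$. Since $s$ is initialized to~$\rho$ and only increases, we have $s\geq\deg\eta^\ast$ throughout the \emph{entire} run, hence $\eta\succcurlyeq\eta^\ast$ at every pass (by the same reasoning as in termination). Therefore every reduction $[a]_\eta$ and $[L(m)]_\eta$ performed by the algorithm is already a normal form, so $B\subseteq N$ at all times, and the output is free. No new lemma is needed; the large~$\rho$ acts by forcing $\eta$ to be large from the very first pass, not by constraining the degrees of accepted monomials.
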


\begin{proof}
  We first address correctness.
  Consider the sets $B$ and~$Q$ after any iteration of the while loop.
  By construction, we have $B \subseteq Q$,
  $\supp([a]_\eta) \subseteq Q$,
  and $\supp([L(m)]_\eta) \subseteq Q$ for any~$m\in B$.
  If the halting condition $Q\setminus B \neq \varnothing$ of the while loop is reached,
  that is, equivalently, if $Q \subseteq B$ holds at the end of an iteration,
  then we have $B = Q$.
  In conclusion, the returned value~$(\eta,B)$ is an effective confinement.

  As for termination,
  let~$C \subseteq M_\xx^r$ be the set of monomials that are normal forms modulo~$S+\dd W_\xx^r$.
  As a consequence of Definition~\ref{def:nf}, the set of normal forms is the vector space~$\Span_\bK(C)$.
  Moreover, a basis of the quotient space~$M/\dd M$ is formed by the classes modulo $S+\dd W_\xx^r$ of all elements of~$C$,
  so in particular, $C$~is finite by the hypothesis of finite dimension.
  By Corollary~\ref{cor:nf},
  there is therefore some~$\eta_\infty$ such that for any~$\eta \succcurlyeq \eta_\infty$,
  \begin{equation}
    \label{eq:all-confined}
    \supp([a]_\eta) \subseteq C \text{ and } \forall m\in C, \ \supp([L(m)]_\eta) \subseteq C.
  \end{equation}

  Since each iteration of the \emph{while} loop treats a different monomial~$m$,
  and since there are finitely many monomials of degree at most~$s -\rho$,
  the \emph{while} loop terminates.
  It terminates either because~$Q\setminus B = \varnothing$, in which case the algorithm terminates,
  or because~$Q$ contains an element of degree larger that~$s - \rho$, in which case we increase~$s$.
  So, either the algorithm terminates, or $s$~tends to~$\infty$.

  Assume~$s\to\infty$.
  At some point, we will have $s \geq \deg \eta_\infty$,
  so after line~\ref{line:mainloop} is executed,
  we have the inequalities
  $\eta \succcurlyeq \eta_\infty x_1^{s-\deg(\eta_\infty)} \succcurlyeq \eta_\infty$,
  because $\eta$~is the largest monomial of degree~$s$ and by the definition of a monomial order.
  In this circumstances, the set~$Q$ is a subset of~$C$ at every iteration of the main loop, because of~\eqref{eq:all-confined}, and so is~$B$ because of the invariant~$B\subseteq Q$.
  Since~$s \to \infty$, we also reach a point where~$\rho + \deg m \leq s$
  for all~$m\in C$. After this point, $s$~is not increased anymore.
  This contradiction shows that the algorithm terminates.

  If the input~$\rho$ satisfies $\rho \geq \deg \eta_\infty$,
  we have $s \geq \rho \geq \deg \eta_\infty$,
  and by the same reasoning as in the previous paragraph,
  we have again $\eta \succcurlyeq \eta_\infty$.
  This is so during the whole execution of the algorithm.
  Therefore, like in the preceding paragraph,
  we have $B \subseteq Q \subseteq C$ during the execution of the while loop.
  So, the output set~$B$ is a subset of~$C$, which is a free family modulo~$S+\dd W_\xx^r$.
\end{proof}

\begin{example}\label{ex:airyL}
Continuing Examples~\ref{ex:airy-red} and~\ref{ex:airy-irred},
define the $W_{x,y,z}(t)$-linear map
\begin{equation}\label{eq:L-of-example}
L\colon W_{x,y,z}(t) \to W_{x,y,z}(t) \, \text{ by } \, L(m) = \tfrac{1}{2} m(\partial_z -y).
\end{equation}
It will be related to the derivation with respect to~$t$ in Example~\ref{ex:airy-fin}.

We now detail the execution of Algorithm~\ref{algo:confinement} with input
$S = \ann(f)$, the operator $L$ defined in~\eqref{eq:L-of-example}, and parameters $a = 1$ and $\rho = 1$.

The algorithm first sets $s = 1$, $\eta = x$, $B = \varnothing$, and $Q = \{1\}$,
as $a = 1$~is already reduced.
It then picks~$m = 1$, the only possible choice from~$Q \setminus B$,
which does not satisfy the condition on line~\ref{line:if} because~$\deg m = 0 = s - \rho$.
So the algorithm proceeds to line~\ref{line:augment-Q}
and computes the support of $[L(1)]_{x}$, which is~$\{y\}$.
Accordingly, $Q$~is set to~$\{1,y\}$ and $B$~is set to~$\{1\}$.
At the next iteration of the while loop, the algorithm has to pick~$m = y$.
Since $\deg(y) > s - \rho = 0$, the condition on line~\ref{line:if} is satisfied,
thus the algorithm restarts with~$s=2$.
The variable $\eta$ is now set to~$x^2$.
The sets $B$ and~$Q$ are reset to $\varnothing$ and~$\{1\}$, respectively.
Again, the element~$m$ picked at line~\ref{line:pick-m} is~1
and the algorithm proceeds to line~\ref{line:augment-Q}
where it computes $\supp([L(1)]_{x^2})$, which is equal to~$\{y\}$.
The sets $Q$ and~$B$ become $\{1,y\}$ and~$\{1\}$, respectively.
The while loop continues with~$m = y$, the only element in~$Q\setminus B$.
The algorithm therefore computes the support of $[L(y)]_{x^2}$, which is $\{1\}$ by Example~\ref{ex:airy-irred}.
The monomial~1 has already been processed, so no new elements is added to~$Q$,
and $B$~is extended to~$\{1,y\}$.
At this point, $B = Q$ and there are no more monomials to deal with.
So the algorithm terminates and returns $\eta = x^2$ and $B = \{1, y\}$,
which by Theorem~\ref{thm:confinement} is an effective confinement.
\end{example}

\subsection{Comparison with the Griffiths--Dwork reduction}

\label{subsec:GD}

Let~$f \in \bK[\xx]$ be a homogeneous polynomial
and let $M$ be the $W_\xx$-module~$\bK[\xx]  e^{f}$, where $\partial_i$ acts by $\partial_i \cdot e^{f} = \frac{\partial f}{\partial x_i} e^f$.
When~$f$ defines a smooth variety, we can compute in~$M/\dd M$ using the Griffiths--Dwork reduction~\cite{Dwork62,Dwork64,Griffiths-1969-PCR}.
This is usually presented with rational functions in~$\bK[\xx, f^{-1}]$ but the exponential formulation is equivalent (for example, see \parencite{Dimca_1990,Lairez-2016-CPR}).
The module~$M$ admits the presentation
\[ M \simeq \frac{W_\xx}{\sum_i W_\xx (\partial_i - f_i)}, \]
where~$f_i$ denotes the partial derivative $\frac{\partial f}{\partial x_i}$.
(This presentation is what makes the exponential formulation easier in our setting. A presentation of the holonomic $W_\xx$-module $\bK[\xx, f^{-1}]$ is much harder to compute \parencite{OakuTakayama_2001}.)

We briefly present the Griffiths--Dwork reduction and observe that irreducible elements for the Griffiths--Dwork reduction are exactly the irreducible elements for our reduction $\to$.

\paragraph{The Griffiths--Dwork reduction.}

Let $\preccurlyeq_0$ be a monomial order on $\bK[\xx]$,
and, for this monomial ordering, let $G_{0}$ be the minimal Gröbner basis of the polynomial ideal $ I = (f_1,\dots,f_{n})$.
Given a homogeneous polynomial $a \in \bK[\xx]$,
we can compute the remainder~$r$ of the multivariate division of~$a$ by~$G_0$ and the cofactors~$b_1,\dotsc,b_n \in \bK[\xx]$ such that
\begin{equation}
\label{introGD1}
a = r + \sum_{i=1}^{n}{b_if_i}.
\end{equation}
By homogeneity, $\deg b_i = \deg a - \deg f + 1$ (unless~$b_i = 0$).
Then, the rule for the derivative of a product yields
\begin{equation}
\label{introGD2}
ae^{f} = r e^{f} - \underbrace{\sum_{i=1}^{n}{\frac{\partial b_i}{\partial x_i}} }_{\mathclap{\text{of degree $\deg a - \deg f$}}} e^{f} + \underbrace{\sum_{i=1}^{n} \partial_i \cdot \left(b_ie^{f}\right)}_{\in \dd M}.
\end{equation}
The last term is in~$\dd M$, so we ignore it,
and the second term has lower degree than~$a$, so we can apply the same procedure recursively,
which will terminate by induction on the degree.
In the end, we obtain a reduced form $a e^f \equiv r e^f \pmod{\dd M}$ where~$r \in \bK[\xx]$ is irreducible with respect to~$G_0$. These are the irreducible elements for the Griffiths--Dwork reduction.

This reduction is defined for any homogeneous polynomial~$f$, but it enjoys special properties when $f, f_1,\dotsc,f_n$ do not have any non-trivial common zero
in an algebraic closure of~$\bK$. Geometrically, this means that $f$ defines a smooth hypersurface in~$\mathbb{P}^{n-1}(\bK)$.
\textcite{Griffiths-1969-PCR} proved, under this smoothness assumption, that the reduced form of any~$a e^f$ vanishes if and only if~$a e^f \in \dd M$.



\paragraph{Comparison with the reduction $\rightarrow$.}

We consider the reduction rule $\to$ applied to the left ideal~$S$ of~$W_\xx$ generated by~$\partial_i - f_i$, so that~$\bK[\xx]e^f \simeq W_\xx/S$ (Section~\ref{subsec:red}).
Let $\preccurlyeq$ be a monomial order on~$W_\xx$ that eliminates~$\xx$ (see~\eqref{eq:elimination-order})
and agrees with~$\preccurlyeq_0$ on~$\bK[\xx]$.
By following the steps of Buchberger's algorithm, we observe that there is a Gröbner basis~$G$ of~$S$ in which each element is:
\begin{enumerate}
  \item either an element of the form $r - \sum_{i=1}^n b_i \partial_i$, with $r\in G_0$, $b_i \in \bK[\xx]$ of degree $\deg(r) - \deg(f_i)$, and~$r = \sum_{i} b_i f_i$,
  \item or an element of $W_\xx \partial_1 + \dotsb + W_\xx \partial_n$.
\end{enumerate}
We will call such elements respectively of the \emph{first kind} and of the \emph{second kind}.

We now characterize irreducible elements in~$W_\xx$ with respect to~$\to$.
Let~$a \in W_\xx$ be an irreducible element.
Since~$a$ cannot be reduced with $\to_2$, it contains no~$\partial_i$, so it is a polynomial.
Since~$a$ cannot be reduced with $\to_1$, no monomial in~$a$ is divisible by the leading term of an element of~$G$. By considering the elements of the first kind, we see that the monomials of~$a$ are not divisible by the leading monomial of any element of~$G_0$.
So~$a$ is irreducible with respect to~$G_0$.
The converse also holds: if~$a \in \bK[\xx]$ is irreducible with respect to~$G_0$, then~$a$ is irreducible in~$W_\xx$ with respect to~$S$.
In this sense, we can regard Algorithm~\ref{algo:irred-form} as a generalization of the Griffiths--Dwork reduction.

As we observed, this reduction is not enough to compute in~$M/\dd M$, since there may be nonzero irreducible elements in~$S + \dd W_\xx$.
In the case of rational functions, \textcite{Lairez-2016-CPR} gave an algorithm to compute them efficiently.
The algorithm that we have given in Section~\ref{subsec:compirred} behaves differently.
In short, the algorithm in~\parencite{Lairez-2016-CPR} would only consider elements of the second kind with degree~1 in the~$\partial_i$, whereas we consider all elements of the second kind. On the one hand, this seems to give more reduction power, on the other hand the cost of computing them is higher. This indicates room for improvement in future work.

\section{Creative Telescoping by Reduction}
\label{sec:CTbyRed}

In the previous section, we obtained an algorithm for normalizing modulo
derivatives in a holonomic $W_\xx$-module.
In this section, we introduce a parameter~$t$ and differentiation with respect to~$t$.
It would be natural to work with a holonomic~$W_{t,\xx}$-module, but in view of the previous section, we need a finitely presented module
over a Weyl algebra in the derivatives with respect to~$\xx$ only.
This motivates the following context.

We consider the Weyl algebra $W_\xx(t) = \bK(t)\otimes_\bK W_\xx$ (which is just a Weyl algebra over the field~$\bK(t)$),
and a holonomic $W_\xx(t)$-module~$M$ with a compatible derivation~$\partial_t$, that is, a~$\bK$-linear map~$\partial_t\colon M \to M$ such that for any~$a\in W_\xx(t)$ and any~$m\in M$,
\[ \partial_t \cdot a m  = \tfrac{\partial a}{\partial t} m + a \partial_t \cdot m, \]
where $\tfrac{\partial a}{\partial t}$ is the coefficient-wise differentiation in~$W_\xx(t)$.
In other words, $M$ is a~$W_{t,\xx}(t)$-module that is holonomic as a~$W_{\xx}(t)$-module.

It is also convenient to fix a finite presentation~$W_\xx(t)^r/S$ of~$M$ and assume that there is a~$W_\xx(t)$-linear map~$L\colon W_\xx(t)^r \to W_\xx(t)^r$ such that for any~${a\in W_\xx(t)^r}$,
\begin{equation}
  \label{eq:def-prop-L}
  \partial_t \cdot \proj_S(a) = \proj_S \left( \tfrac{\partial a}{\partial t} + L(a) \right),
\end{equation}
where~$\proj_S$ is the canonical map~$W_\xx(t)^r \to M$.
In particular, note that~$S$ is stable under~$\tfrac{\partial}{\partial t} + L$.
From the algorithmic point of view, we represent~$M$ by its finite presentation and the derivation~$\partial_t$ by the $r\times r$ matrix of the endomorphism~$L$.
We explain in Section~\ref{subsec:scalar-extension} how to obtain this setting from a holonomic~$W_{t,\xx}$-module.

Using the algorithm of the previous section,
we aim at describing an algorithm that performs integration with respect to~$x_1,\dotsc,x_n$, in the following sense.
Given~$f\in M$, we want to compute a nonzero operator~$P(t, \partial_t) \in W_t(t)$ such that
\begin{equation}\label{eq:wanted-relation}
  P(t, \partial_t) \cdot f \in \dd M
\end{equation}
with the motivation that
$P(t, \partial_t)$~is then an annihilating operator of the integral of~$f$ with respect to~$x_1,\dotsc,x_n$.
The principle of integration by reduction is described in Section~\ref{sec:integr-algor}
and an algorithm is presented in Section~\ref{subsec:actual-algo}.

\subsection{Integration by reduction}
\label{sec:integr-algor}
We utilize the family of reductions~$[.]_\eta$ defined in Section~\ref{subsec:compirred}.
Let $f$ be an element of $W_\xx(t)^r$, let $\eta$ be some monomial in~$\monxxr$ and
let $(g_i)_{i\geq 0}$ be the sequence in~$W_{\xx}(t)^r$ defined by
\begin{equation}\label{eq:successive-nf}
g_0 = [f]_\eta
\quad\text{and}\quad
g_{i+1} = \tfrac{\partial g_i}{\partial t} +  \left[ L(g_i) \right]_\eta \text{ for all~$i\geq0$.}
\end{equation}

As usual with integration-by-reduction algorithms,
we relate the dependency relations between the reduced forms~$g_i$
to the operators~$P\in W_t(t)$
such that $P \cdot \proj_S(f) \in \dd M$,
which, as is traditional, we call \emph{telescopers} for~$f$.

\begin{lemma}\label{lem:gis}
  For any~$i\geq 0$,
  \begin{equation}\label{eq:g_i-congr}
    \proj_S(g_i) \equiv \partial_t^i\cdot \proj_S(f) \pmod{\dd M}.
  \end{equation}
\end{lemma}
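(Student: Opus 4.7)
The plan is to prove this by induction on $i$, using essentially two facts: that $[\,\cdot\,]_\eta$ produces a representative congruent to its input modulo $S + \dd W_\xx(t)^r$ (so that $\proj_S$ is insensitive to applying $[\,\cdot\,]_\eta$), and that $\dd M$ is stable under $\partial_t$ (because $\partial_t$ commutes with every $\partial_{x_i}$ in the Weyl algebra, so $\partial_t \cdot \partial_{x_i} m = \partial_{x_i} \cdot \partial_t m \in \dd M$).

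For the base case $i = 0$, recall that $[f]_\eta$ differs from $f$ by an element of $S + \dd W_\xx(t)^r$, so $\proj_S(g_0) = \proj_S([f]_\eta) = \proj_S(f)$ in $M$, which is in particular congruent to $f$ modulo $\dd M$.

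For the inductive step, assume $\proj_S(g_i) \equiv \partial_t^i \cdot \proj_S(f) \pmod{\dd M}$. Apply $\proj_S$ to the defining relation $g_{i+1} = \tfrac{\partial g_i}{\partial t} + [L(g_i)]_\eta$, using the same congruence as in the base case to replace $[L(g_i)]_\eta$ by $L(g_i)$ modulo $\dd M$:
\[
  \proj_S(g_{i+1}) \equiv \proj_S\!\left(\tfrac{\partial g_i}{\partial t} + L(g_i)\right) \pmod{\dd M}.
\]
By the defining property \eqref{eq:def-prop-L} of $L$, the right-hand side equals $\partial_t \cdot \proj_S(g_i)$. Applying $\partial_t$ to the inductive hypothesis and using that $\partial_t$ preserves $\dd M$, we obtain $\partial_t \cdot \proj_S(g_i) \equiv \partial_t^{i+1} \cdot \proj_S(f) \pmod{\dd M}$, completing the induction.

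The only mild subtlety is making sure one invokes $\partial_t$-stability of $\dd M$ at the right moment; beyond that, the statement is essentially a bookkeeping consequence of the congruence provided by $[\,\cdot\,]_\eta$ combined with equation~\eqref{eq:def-prop-L}. There is no genuine obstacle.
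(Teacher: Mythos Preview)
Your argument is correct and follows the same inductive route as the paper: reduce by $[\,\cdot\,]_\eta$ modulo $S+\dd W_\xx(t)^r$, invoke~\eqref{eq:def-prop-L}, and use $\partial_t\cdot\dd M\subseteq\dd M$. One small slip: in the base case you write $\proj_S([f]_\eta)=\proj_S(f)$ in~$M$, but $[f]_\eta-f$ lies in $S+\dd W_\xx(t)^r$, not in~$S$ alone, so you only get $\proj_S([f]_\eta)\equiv\proj_S(f)\pmod{\dd M}$; that congruence is exactly what you need, and the rest of your proof is unaffected.
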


\begin{proof}
  For~$i = 0$, this means that~$[f]_\eta \equiv f \pmod{S + \dd W_\xx(t)^r}$, which holds by construction of~$[.]_\eta$.
  By property of~$[.]_\eta$, again, there is some~$s_i\in S$ and some~$\Delta_i = \sum_j \partial_j a_{i,j} \in \dd W_\xx(t)^r$ such that
  \[ \left[  L(g_i) \right]_\eta =  L(g_i) + s_i + \Delta_i. \]
  Therefore, using the $W_\xx(t)$-linearity of~$\proj_S$ and~$\proj_S(s_i) = 0$, we obtain
  \begin{align*}
    \proj_S(g_{i+1}) &= \proj_S \left( \tfrac{\partial g_i}{\partial t} + L(g_i) \right) + \proj_S(\Delta_i) \\
                                 &= \partial_t \cdot \proj_S(g_i) + \sum_j \partial_j \proj_S(a_{i,j}), && \text{using \eqref{eq:def-prop-L},}\\
                                 &  \equiv \partial_t \cdot \proj_S(g_i) \pmod{\dd M}.
  \end{align*}
  The claim follows by induction on~$i$, using that~$\partial_t \cdot \dd M \subset \dd M$, since~$\partial_t$ commutes with~$\dd$.
\end{proof}

\begin{lemma}\label{lem:telescoper-by-reduction}
  Let $P=\sum_{i=0}^N c_i \partial_t^i \in W_t(t)$.
  \begin{enumerate}
    \item If~$c_0 g_0 + \dotsb + c_N g_N = 0$, then $P \cdot \proj_S(f) \in \dd M$.
    \item\label{it:tel-to-rel} If~$P \cdot \proj_S(f) \in \dd M$, then~$c_0 g_0 + \dotsb + c_N g_N \in S + \dd W_\xx(t)^r$.
    \item If~$\eta$ is large enough and if~$P \cdot \proj_S(f) \in \dd M$, then~$c_0 g_0 + \dotsb + c_N g_N = 0$. (Note that “large enough” depends on~$f$ and~$P$.)
  \end{enumerate}
\end{lemma}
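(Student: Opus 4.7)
The plan is to handle the three claims in order, with the first two following directly from Lemma~\ref{lem:gis} and the third requiring a stabilization argument to turn the reduced forms into $\eta$-independent normal forms.

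For item~1, I apply $\proj_S$ to the assumed relation $c_0 g_0 + \dotsb + c_N g_N = 0$, using the $\bK(t)$-linearity of $\proj_S$ together with Lemma~\ref{lem:gis}: the left-hand side rewrites as $\sum_i c_i \proj_S(g_i) \equiv \sum_i c_i \partial_t^i \cdot \proj_S(f) = P \cdot \proj_S(f) \pmod{\dd M}$, so $P \cdot \proj_S(f) \in \dd M$. Item~2 runs the same calculation in reverse: Lemma~\ref{lem:gis} gives $\proj_S(\sum_i c_i g_i) \equiv P \cdot \proj_S(f) \pmod{\dd M}$, so assuming the right-hand side is in $\dd M = \proj_S(\dd W_\xx(t)^r)$ yields $\proj_S(\sum_i c_i g_i) \in \proj_S(\dd W_\xx(t)^r)$. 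Since $\ker \proj_S = S$, we get $\sum_i c_i g_i - d \in S$ for some $d \in \dd W_\xx(t)^r$, so $\sum_i c_i g_i \in S + \dd W_\xx(t)^r$.

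For item~3, the key is to show that when $\eta$ is large enough, each $g_k$ for $k = 0, \dotsc, N$ becomes an $\eta$-independent normal form. Let $C \subseteq \monxxr$ be the set of monomials that are \emph{not} leading monomials of any element of $S + \dd W_\xx(t)^r$, so that normal forms (Definition~\ref{def:nf}) are exactly elements with support in $C$; let $N_k$ denote the unique normal form of $(\partial_t + L)^k f$. I plan to show by induction on~$k$ that for $\eta$ large enough, $g_k = N_k$. The base case $g_0 = [f]_\eta = N_0$ is Corollary~\ref{cor:nf}. For the inductive step, assume $g_k = N_k$. Then $g_{k+1} = \partial_t N_k + [L(N_k)]_\eta$, and because $L(N_k)$ is now $\eta$-independent, Corollary~\ref{cor:nf} yields $[L(N_k)]_\eta = \mathrm{NF}(L(N_k))$ once $\eta$ is large enough. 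The coefficient-wise derivative $\partial_t N_k$ has support contained in that of $N_k$, hence lies in $\Span_{\bK(t)}(C)$ and equals its own normal form. By $\bK(t)$-linearity of the normal form map, $g_{k+1} = \mathrm{NF}(\partial_t N_k + L(N_k)) = \mathrm{NF}\bigl((\partial_t + L)(N_k)\bigr)$. Now $S$ is stable under $\partial_t + L$ by assumption, and $\dd W_\xx(t)^r$ is stable under $\partial_t$ (commutativity with the $\partial_i$) and under $L$ ($W_\xx(t)$-linearity), so $\partial_t + L$ descends to $W_\xx(t)^r / (S + \dd W_\xx(t)^r)$ and $(\partial_t + L)(N_k) \equiv (\partial_t + L)^{k+1} f$, giving $\mathrm{NF}((\partial_t + L)(N_k)) = N_{k+1}$. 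Taking the maximum of the finitely many thresholds for $k = 0, \dotsc, N$ yields a single valid~$\eta$. With that~$\eta$, the combination $\sum_i c_i g_i = \sum_i c_i N_i$ has support in $C$; by item~2 it also lies in $S + \dd W_\xx(t)^r$, but any nonzero element of the latter has its leading monomial in $\monxxr \setminus C$, a contradiction. Hence $\sum_i c_i g_i = 0$.

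The main obstacle is the apparent circularity in the induction of item~3: $[L(g_k)]_\eta$ depends on $\eta$ through $g_k$ itself, so Corollary~\ref{cor:nf} cannot be applied directly to $L(g_k)$. The trick that breaks this is that the induction hypothesis freezes $g_k$ to the $\eta$-independent element $N_k$, after which Corollary~\ref{cor:nf} applies to the fixed element $L(N_k)$ in the usual way.
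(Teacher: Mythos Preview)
Your proof is correct and follows essentially the same approach as the paper's. Items~1 and~2 are handled identically via Lemma~\ref{lem:gis}, and your treatment of item~3 is the same induction showing that each~$g_k$ is a normal form for~$\eta$ large enough; you have simply made explicit (via the names~$N_k$ and the remark about breaking the apparent circularity) what the paper leaves implicit when it invokes Corollary~\ref{cor:nf} on~$[L(g_i)]_\eta$ inside an induction on~$i$.
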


\begin{proof}
  The first assertion follows directly from Lemma~\ref{lem:gis}.
  Conversely, assume that~$P \cdot \proj_S(f) \in \dd M$.
  This implies, again by Lemma~\ref{lem:gis}, that
  \[ \sum_{i=0}^N c_i g_i  \in S + \dd W_\xx(t)^r , \]
  proving the second assertion.
  Now, we observe that, if in addition $\eta$~is large enough, the~$g_i$ are normal forms modulo~$S + \dd W_{\xx}(t)^r$.
  Indeed, by Corollary~\ref{cor:nf}, the~$[L(g_i)]_\eta$ are normal forms;
  and since being a normal form is a condition on the monomial support, it is stable under coefficient-wise differentiation, so the~$\frac{\partial g_i}{\partial t}$ are normal forms, by induction on~$i$.
  So the linear combination~$\sum_i c_i g_i$ is also a normal form modulo $S + \dd W_{\xx}(t)^r$.
  This implies~$\sum_i c_i g_i = 0$.
\end{proof}

\subsection{An algorithm for integrating by reduction}
\label{subsec:actual-algo}

To turn Lemma~\ref{lem:telescoper-by-reduction} into an algorithm to compute a telescoper,
it only remains to find a suitable~$\eta$.
We use the idea of \emph{confinement} (Section~\ref{subsec:guess_eta}).
Using Algorithm~\ref{algo:confinement}, we can compute an effective confinement for $f$ and~$L$.
Recall that this is a monomial~$\eta$ and a finite set~$B$ of monomials in~$M_{\xx,r}$ such that~$\supp([f]_\eta)\subseteq B$
and~$\supp([L(b)]_\eta) \subseteq B$ for all~${b\in B}$.
The following statement explains that reduced forms of successive derivatives with respect to~$t$
therefore lie in the finite-dimensional vector-space~$\Span_{\bK(t)}(B)$.

\begin{lemma}\label{lem:integration-by-reduction-fine}
  Let~$(\eta, B)$ be an effective confinement for~$f$ and~$L$.
  Let $(g_i)_{i\geq 0}$ be the sequence defined by~\eqref{eq:successive-nf}.
  Then, for all~$i \geq 0$, $g_i \in \Span_{\bK(t)}(B)$.
\end{lemma}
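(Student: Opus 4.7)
The plan is to proceed by induction on~$i$, where the only thing to verify is the preservation of the property ``support contained in~$B$'' under the two operations appearing in the recurrence~\eqref{eq:successive-nf}: coefficient-wise $t$-differentiation and the confined reduction~$[L(\cdot)]_\eta$.

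For the base case~$i=0$, the equality $g_0 = [f]_\eta$ together with the first condition of the definition of an effective confinement gives $\supp(g_0)\subseteq B$, and hence $g_0\in\Span_{\bK(t)}(B)$. For the inductive step, assume $g_i = \sum_{b\in B} c_b(t)\, b$ with $c_b(t)\in\bK(t)$. Coefficient-wise differentiation acts only on the scalar factors $c_b(t)$, so $\frac{\partial g_i}{\partial t} = \sum_{b\in B} c_b'(t)\, b$ still lies in $\Span_{\bK(t)}(B)$; the key point is simply that the elements of $B\subseteq\monxxr$ do not involve~$t$.

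For the second summand $[L(g_i)]_\eta$, I would invoke the $\bK(t)$-linearity of both $L$ and $[\cdot]_\eta$. The map $L$ is $W_\xx(t)$-linear by hypothesis, and Proposition~\ref{prop:[.]-is-linear} together with the $\bK$-linearity of $\Reduce(\cdot,B_\eta)$ gives that $[\cdot]_\eta$ is $\bK$-linear; since here we are working over the base field $\bK(t)$ of the Weyl algebra $W_\xx(t)$, the same argument upgrades to $\bK(t)$-linearity. Therefore
\[
[L(g_i)]_\eta = \sum_{b\in B} c_b(t)\, [L(b)]_\eta,
\]
and by the second condition of the definition of an effective confinement, each $[L(b)]_\eta$ has support contained in~$B$, so $[L(g_i)]_\eta\in\Span_{\bK(t)}(B)$.

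Summing the two contributions yields $g_{i+1}\in\Span_{\bK(t)}(B)$, closing the induction. There is no real obstacle in the argument; the only subtlety worth spelling out is that everything takes place over $\bK(t)$ rather than $\bK$ (so that the scalars $c_b(t)$ can be factored out of the reduction), and that differentiating with respect to $t$ does not enlarge the $\xx$- and $\dd_\xx$-support.
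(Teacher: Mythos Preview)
Your proof is correct and follows essentially the same approach as the paper's: induction on~$i$, using that $\Span_{\bK(t)}(B)$ contains~$[f]_\eta$ and is stable under both~$\tfrac{\partial}{\partial t}$ and~$[L(\cdot)]_\eta$. The paper compresses this into three sentences, while you spell out the linearity and support arguments explicitly; in particular, your remark that the base field here is~$\bK(t)$ (so that Proposition~\ref{prop:[.]-is-linear} yields $\bK(t)$-linearity of~$[\cdot]_\eta$) is exactly the implicit justification behind the paper's claim of stability.
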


\begin{proof}
  By definition of an effective confinement, $[f]_\eta \in \Span_{\bK(t)}(B)$,
  and the space $\Span_{\bK(t)}(B)$ is stable under~$[L(.)]_\eta$.
  Moreover, $\Span_{\bK(t)}(B)$ is stable under~$\frac{\partial}{\partial t}$.
  So the claim follows from the definition of~$g_i$.
\end{proof}

Algorithm~\ref{algo:confinement} and Lemma~\ref{lem:integration-by-reduction-fine} combine into Algorithm~\ref{algo:integration},
whose main properties are provided in the following theorem.

\begin{alg}{Integration using reductions}{integration}
  \textbf{Input:}
  \begin{algovals}
    \item a holonomic module~$W_\xx(t)^r/S$
    \item a derivation map~$\partial_t\colon W_\xx(t)^r/ S \rightarrow W_\xx(t)^r/S$
      given by the matrix of an endomorphism $L$, as in \eqref{eq:def-prop-L}.
    \item an element~$f \in W_\xx(t)^r$
    \item an integer~$\rho \geq 0$
    \end{algovals}
  \textbf{Output:}
  \begin{algovals}
    \item $P = c_0 + \dots + c_N\partial_t^N$ such that $c_i\in\bK(t)$, $c_N \neq 0$ and $P\cdot \proj_S(f) \in \dd M$
  \end{algovals}
  \begin{pseudo}
    $(\eta, B) \gets$ an effective confinement obtained from~$(S, f, L, \rho)$ by Algorithm~\ref{algo:confinement} \\
    $g_0 \gets [f]_\eta$ \\
    $N \gets 0$ \\
    \kw{while} $g_0, \dotsc, g_N$ are linearly independent over~$\bK(t)$ \\+
      $g_{N+1} \gets \frac{\partial g}{\partial t} + \left[ L(g) \right]_\eta$ \\
      $N \gets N+1$ \\-
    \kw{return} $c_0 + \dots + c_N\partial_t^N$ s.t.~$c_0 g_0 + \dotsb +c_Ng_N = 0$, $c_i\in\bK(t)$ and $c_N\neq 0$. \\
  \end{pseudo}
\end{alg}

\begin{theorem}\label{thm:algo-integration-correct}
  Algorithm~\ref{algo:integration} is correct and terminates.
  Moreover, if $\rho$~is large enough, then it outputs a minimal telescoper for the input.
\end{theorem}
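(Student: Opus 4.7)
The plan is to handle the three claims of Theorem~\ref{thm:algo-integration-correct} in turn---termination, correctness of the output as a telescoper, and minimality when $\rho$ is large enough---drawing on Lemma~\ref{lem:integration-by-reduction-fine}, Lemma~\ref{lem:telescoper-by-reduction}, and Theorem~\ref{thm:confinement}. Termination and correctness follow essentially by assembling results already established; the real substance sits in the minimality clause.

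For termination, I would first invoke Theorem~\ref{thm:confinement} to see that the initial call to Algorithm~\ref{algo:confinement} halts under the holonomy hypothesis on $M$, returning a pair $(\eta, B)$ with $B$ finite. Lemma~\ref{lem:integration-by-reduction-fine} then places every iterate $g_i$ of the \emph{while} loop inside the finite-dimensional $\bK(t)$-vector space $\Span_{\bK(t)}(B)$, so the sequence $g_0, g_1, \dots$ cannot stay linearly independent beyond $N = |B|$, and the loop exits. Correctness then falls out of Lemma~\ref{lem:telescoper-by-reduction}, item~1: the exit condition provides $c_0, \dots, c_N \in \bK(t)$ with $c_N \neq 0$ and $\sum_i c_i g_i = 0$, and the lemma turns this into the required $P \cdot \proj_S(f) \in \dd M$.

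The minimality claim is the one requiring care. I would choose $\rho$ large enough so that Algorithm~\ref{algo:confinement} returns a \emph{free} effective confinement, which is possible by Theorem~\ref{thm:confinement}. Inspecting the construction of $B$ in Algorithm~\ref{algo:basis-E-lambda}, freeness implies that each monomial in $B$ is not the leading monomial of any element of $S + \dd W_\xx(t)^r$; hence every element of $\Span_{\bK(t)}(B)$ is a normal form modulo $S + \dd W_\xx(t)^r$ in the sense of Definition~\ref{def:nf}, and so each $g_i$ is \emph{the} normal form of $\partial_t^i \cdot \proj_S(f)$. Now if $P_{\min} = \sum_{i=0}^{N_{\min}} c_i \partial_t^i$ is any telescoper of minimal order for $f$, then Lemma~\ref{lem:telescoper-by-reduction}(\ref{it:tel-to-rel}) yields $\sum_i c_i g_i \in S + \dd W_\xx(t)^r$; since this sum is itself a normal form (normal forms form a $\bK(t)$-subspace by Definition~\ref{def:nf}), uniqueness forces $\sum_i c_i g_i = 0$. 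Hence $g_0, \dots, g_{N_{\min}}$ are $\bK(t)$-linearly dependent, so the algorithm returns an order-$N$ operator with $N \le N_{\min}$; conversely, Lemma~\ref{lem:telescoper-by-reduction}, item~1, applied to any earlier dependency would contradict the minimality of $N_{\min}$, giving $N = N_{\min}$ and thus the minimal telescoper up to a scalar in $\bK(t)$. The one conceptual step that demands attention is the identification of $\Span_{\bK(t)}(B)$ with a space of normal forms, which follows once one unpacks how $B$ is built in Algorithm~\ref{algo:basis-E-lambda}; everything else reduces to a bookkeeping application of the preceding lemmas.
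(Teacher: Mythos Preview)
Your termination and correctness arguments match the paper's. The gap is in the minimality clause, at the step where you assert that freeness of the confinement implies each monomial in $B$ is not the leading monomial of an element of $S + \dd W_\xx(t)^r$.

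First, a minor confusion: the set $B$ in question is built by Algorithm~\ref{algo:confinement}, not Algorithm~\ref{algo:basis-E-lambda}; the latter computes a generating family of $\Irr_{\preccurlyeq\eta}$, a different object with a different role.

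More substantively, the implication you assert is false. Freeness only says the monomials of $B$ are $\bK(t)$-linearly independent modulo $S + \dd W_\xx(t)^r$; it does not prevent some $m\in B$ from being a leading monomial there. For a toy picture, suppose the quotient has normal-form basis $\{1, x_1\}$ and $x_1^2 \equiv x_1 \pmod{S + \dd W_\xx(t)^r}$. Then $B = \{1, x_1^2\}$ is free, yet $x_1^2 = \lm(x_1^2 - x_1)$ is a leading monomial of an element of $S + \dd W_\xx(t)^r$, so $x_1^2$ is not a normal-form monomial.

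The paper avoids this detour entirely: once $\sum_i c_i g_i$ lies in $\Span_{\bK(t)}(B) \cap (S + \dd W_\xx(t)^r)$ and the elements of $B$ are independent in the quotient, the $B$-expansion of $\sum_i c_i g_i$ must have all coefficients zero, hence $\sum_i c_i g_i = 0$ directly---no normal forms needed. If you prefer to keep your normal-form route, you should invoke not the freeness \emph{conclusion} of Theorem~\ref{thm:confinement} but the stronger fact established inside its proof: for $\rho$ large enough one has $B \subseteq C$, where $C$ is the set of monomials that are normal forms. That inclusion does give what you want, but it is not a consequence of freeness alone.
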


\begin{proof}
  Correctness follows from Lemma~\ref{lem:telescoper-by-reduction}.
  As to termination, it follows from Lemma~\ref{lem:integration-by-reduction-fine}:
  because the set~$B$ is finite, the infinite family of elements $g_0,g_1,\dots$ is linearly dependent,
  so the main loop terminates for some~$N$ less than or equal to the cardinality of~$B$.

  As for the minimality,
  it is clear that the algorithm outputs a non-trivial relation~$c_0 g_0 + \dotsb + c_Ng_N = 0$ with minimal possible~$N$ among those available for the sequence~$(g_i)_{i\geq0}$.
  Besides,
  consider any telescoper $P = c_0+\dotsc+c_\Omega\partial_t^\Omega$.
  By point~\ref{it:tel-to-rel} of Lemma~\ref{lem:telescoper-by-reduction}, we have
  \begin{equation}
    \label{eq:min-tel-vanishes}
    c_0 g_0 + \dotsb + c_\Omega g_\Omega \in S + \dd W_\xx(t)^r.
  \end{equation}
  Assume that $\rho$~is large enough, in the sense of Theorem~\ref{thm:confinement}, so that the confinement is free,
  meaning that the elements of~$B$ are independent modulo~$S+\dd W_\xx^r$.
  The linear combination in~\eqref{eq:min-tel-vanishes}
  is a linear combination of elements of~$B$, so it must be zero: $c_0 g_0 + \dotsb + c_\Omega g_\Omega = 0$.
  So Algorithm~\ref{algo:integration} will output a relation for an~$N$ that is at most the minimal order of telescopers.
\end{proof}

\begin{remark}
  Algorithm~\ref{algo:integration} can be modified to compute a system of linear differential equations
  satisfied by an integral depending on multiple parameters $t_1,\dots,t_p$. These parameters can also be associated
  to other Ore operators~\cite{CHYZAK1998187} than the differentiation
  provided they define a map on $W_\xx^r(t_1,\dots,t_p)$.
\end{remark}

\subsection{Scalar extension}\label{subsec:scalar-extension}

The input data structure of the integration algorithm (Algorithm~\ref{algo:integration})
is unusual, with the derivation with respect to the parameter which is not part of the holonomic context.
This section explains how to obtain such a representation from a holonomic~$W_{t,\xx}$-module.
The formulation in full generality is a bit technical.
In practice, this amounts to a single Gröbner basis computation over~$W_{t,\xx}(t)$, and, in some examples, this can be very simple, see Example~\ref{ex:airy-fin} below.

Let~$H$ be a holonomic $W_{t,\xx}$-module
and let~$M = \bK(t) \otimes_{\bK[t]} H$.
The space~$M$ is a $W_\xx(t)$-module in a natural way.
Moreover, we can define a derivation~$\partial_t$ by
\[ \partial_t \cdot (a\otimes m) = \tfrac{\partial a}{\partial t} \otimes m + a \otimes (\partial_t \cdot m). \]
This derivation commutes with the action of the~$\partial_i$.

In this section, we aim to compute~$M$ from~$H$, so that we can apply the integration algorithm described in Section~\ref{subsec:actual-algo}.
Let us first make explicit what we mean.
We assume that $H$~is given by a finite presentation,
that is, $H = W_{t,\xx}^s / J$ for some~$s \geq 0$ and some submodule~$J\subseteq W_{t,\xx}^s$
given by a finite set of generators.
Computing~$M$ means computing a finite presentation~$M \simeq W_\xx(t)^r / S$,
and a $W_\xx(t)$-linear map~$L\colon W_\xx(t)^r \to W_\xx(t)^r$ such that~\eqref{eq:def-prop-L} holds.
It is not obvious that such a finite presentation exists because~$M$ does not have any obvious finite set of generators.
However, this existence is implied by the holonomicity of~$M$.
Here, we give a proof based on restriction of D-modules.

\begin{lemma}\label{lem:gen-holonomic}
  If $H$~is a $W_{t,\xx}$-holonomic module, then $M = \bK(t) \otimes_{\bK[t]} H$ is a $W_\xx(t)$-holonomic module.
\end{lemma}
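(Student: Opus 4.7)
The plan is to compare Bernstein filtrations on~$P$ and on~$M$, using the polynomial growth rate coming from the holonomy of~$P$ to bound that of~$M$.
I would first fix a finite $W_{t,\xx}$-generating set $f_1,\dotsc,f_s$ of~$P$ and the associated good filtration $\Phi_m = \cF_m \cdot \{f_1,\dotsc,f_s\}$, where $\cF_m$~now denotes the Bernstein filtration of~$W_{t,\xx}$ (extending the definition of Section~\ref{sec:holonomic-modules} by counting~$t$ among the $n+1$ variables); by the holonomy of~$P$, $\dim_\bK \Phi_m$ is eventually a polynomial of degree exactly~$n+1$.
I then define $\Psi_m\subseteq M$ as the $\bK(t)$-span of the image of~$\Phi_m$ under the canonical map $P\to M$, and aim for $\dim_{\bK(t)}\Psi_m = O(m^n)$.

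The key estimate hinges on the observation that $\dim_{\bK(t)}\Psi_m$ equals the $\bK[t]$-rank of the submodule $\bK[t]\cdot\Phi_m$ of~$P$, since by flatness of~$\bK(t)$ over~$\bK[t]$ we have $\Psi_m \simeq \bK(t)\otimes_{\bK[t]}(\bK[t]\cdot\Phi_m)$.
If this rank is~$r$, I can pick $v_1,\dotsc,v_r\in\Phi_m$ whose $\bK[t]$-spans are independent in~$P$; because $t\cdot\Phi_k\subseteq\Phi_{k+1}$, the $r(N-m+1)$ elements $\{t^j v_i : 0\leq j\leq N-m,\ 1\leq i\leq r\}$ are $\bK$-linearly independent and lie in~$\Phi_N$, yielding $r(N-m+1)\leq\dim_\bK\Phi_N$.
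Inserting the asymptotic $\dim_\bK\Phi_N \leq CN^{n+1}$ with, say, $N = 2m$ gives $r\leq C'm^n$.

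To invoke Bernstein's dimension theory over~$W_\xx(t)$, I need $M$ to be finitely generated over~$W_\xx(t)$ with a compatible good filtration.
I would extract finite generation from the classical consequence of holonomy that the full localization $\bK(t,\xx)\otimes_{\bK[t,\xx]}P$ is finite-dimensional over~$\bK(t,\xx)$: the sequences $\{\partial_t^k f_i\}_{k\geq 0}$ are then $\bK(t,\xx)$-linearly dependent for $k$ large enough, and clearing $\xx$-denominators produces $W_\xx(t)$-linear relations in~$M$, showing that $\{1\otimes\partial_t^k f_i : k\leq K,\ 1\leq i\leq s\}$ generates $M$ over~$W_\xx(t)$ for some~$K$.
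Letting $\cF'_m$ be the Bernstein filtration of~$W_\xx(t)$, the filtration $\tilde\Psi_m := \cF'_m \cdot \{1\otimes\partial_t^k f_i : k\leq K,\ 1\leq i\leq s\}$ sits inside~$\Psi_{m+K}$, since each $\xx^\aalpha\dd_\xx^\bbeta\partial_t^k$ with $|\aalpha|+|\bbeta|\leq m$ and $k\leq K$ belongs to~$\cF_{m+K}$ and hence sends~$f_i$ into~$\Phi_{m+K}$.
Therefore $\dim_{\bK(t)}\tilde\Psi_m = O(m^n)$ and the $W_\xx(t)$-dimension of~$M$ is at most~$n$; Bernstein's inequality then forces this dimension to be zero or exactly~$n$, so $M$ is holonomic in all cases.
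The main obstacle is the finite-generation step, which is not formal filtration bookkeeping and requires the D-finiteness input above---this is likely where the approach ``based on restriction of D-modules'' mentioned in the paper does its essential work.
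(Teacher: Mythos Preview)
Your filtration estimate (Step~3) is correct and clean: the bound $\dim_{\bK(t)}\Psi_m = O(m^n)$ follows exactly as you say from the $\bK[t]$-rank argument, and it does not require finite generation of~$M$.

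The gap is in your finite-generation step. From the finite-dimensionality of $\bK(t,\xx)\otimes_{\bK[t,\xx]}P$ you obtain, after clearing denominators and torsion, a relation
\[
  \sum_{k=0}^{K} p_k(\xx)\cdot\bigl(1\otimes\partial_t^{k}f_i\bigr) = 0 \quad\text{in }M,
\]
with $p_k\in\bK(t)[\xx]$ and $p_K\neq 0$. This is indeed a $W_\xx(t)$-linear relation, but $p_K$ is a polynomial in~$\xx$, hence not a unit in~$W_\xx(t)$. So the relation does \emph{not} let you solve for $1\otimes\partial_t^{K}f_i$ in the $W_\xx(t)$-submodule generated by the lower-order terms. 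A non-trivial $W_\xx(t)$-relation among $\{1\otimes\partial_t^k f_i\}_k$ is far weaker than finite generation over~$W_\xx(t)$; the jump from one to the other is exactly the point where your sketch breaks.

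There is an easy repair that stays within your framework and does not use D-finiteness at all. Let $M_K$ be the $W_\xx(t)$-submodule generated by $\{1\otimes\partial_t^k f_i : k\leq K\}$. Your inclusion $\tilde\Psi_m^{(K)}\subseteq\Psi_{m+K}$ together with Step~3 gives $\dim_{\bK(t)}\tilde\Psi_m^{(K)}\leq C'(m+K)^n$, so each $M_K$ is holonomic with multiplicity $e(M_K)\leq n!\,C'$, a bound \emph{independent of~$K$}. Since length is bounded by multiplicity for holonomic modules and is strictly increasing along the chain $M_0\subsetneq M_1\subsetneq\cdots$ wherever inclusions are strict, the chain stabilises; as $M=\bigcup_K M_K$ (because $\Psi_m\subseteq M_m$), this yields finite generation and completes your argument.

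The paper's proof is entirely different. It introduces a fresh variable~$\xi$, passes to $P'=\bK(\xi)\otimes_\bK P$ (holonomic over $W_{t,\xx}(\xi)$), and restricts along the closed embedding $t=\xi$: the inverse image $F^*P'\simeq P'/(t-\xi)P'$ is $W_\xx(\xi)$-holonomic by the standard preservation theorem for restriction, and the algebra isomorphism $\bK(\xi)[t]/(t-\xi)\simeq\bK(t)$ identifies $F^*P'$ with~$M$. So the paper trades your explicit filtration bookkeeping for a one-line appeal to the D-module machinery of inverse images; your route is more elementary but needs the multiplicity/length stabilisation argument above rather than the D-finiteness detour.
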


The statement is similar in nature to the well-known statement that “holonomic implies D-finite”.

\begin{proof}
  Let~$\xi$ be a new variable.
  Consider the field~$\bL = \bK(\xi)$.
  Introduce the ${(1+n)}$th Weyl algebra with coefficients in~$\bL$, which we denote~$W_{t,\xx}(\xi)$.
  Consider as well the~$W_{t,\xx}(\xi)$-module~$H' = \bL \otimes_\bK H$.
  This scalar extension of the base field preserves holonomicity.
  So $H'$~is holonomic.
  Consider now the embedding map~$F\colon \bL^{n} \to \bL^{1+n}$ defined by
  \[ (x_1,\dotsc,x_n) \mapsto (\xi, x_1, \dotsc, x_n). \]
  The inverse image by~$F$ is a functor, denoted~$F^*$, that maps~$W_{t,\xx}(\xi)$-modules to~$W_{\xx}(\xi)$-modules
  \cite[Chapter 14]{Coutinho_1995}.
  For the particular choice of~$F$ above, the inverse image by~$F$ is also known as the restriction along the hyperplane~$\{t=\xi\}$.
  We do not need the details of this construction, we just need to know that
  the inverse image of~$H'$ by~$F$, denoted~$F^* H'$, is:
  \begin{itemize}
  \item a $W_\xx(\xi)$-module, which is by \cite[construction of \S14.1 and~\S14.2]{Coutinho_1995},
  \item holonomic as a $W_\xx(\xi)$-module, which is by \cite[Theorem~18.1.4]{Coutinho_1995},
  \item isomorphic to $H' / (t-\xi) H'$, which is obtained as a suitable variant of \cite[\S15.1]{Coutinho_1995},
    by making~$Y=t$ in that reference before specializing at~$\xi$ instead of~$0$.
  \end{itemize}
  In particular, we have
  \[F^* H'  \simeq H'/(t-\xi)H' \simeq \bL[t]/(t-\xi) \otimes_{\bL[t]} H'. \]
  Next, we check that
  \begin{align*}
    H' &= \bL \otimes_\bK H, && \text{by definition}\\
        &\simeq \bL \otimes_\bK (\bK[t] \otimes_{\bK[t]} H), &&\text{because $H$ is a $\bK[t]$-module}\\
        &\simeq \bL[t] \otimes_{\bK[t]} H, &&\text{by associativity of $\otimes$},
  \end{align*}
  and therefore, using associativity of $\otimes$ again,
  \[ F^* H'  \simeq \bL[t]/(t-\xi) \otimes_{\bK[t]} H. \]
  Finally, we observe the isomorphism $\bK(t) \simeq \bL[t]/(t - \xi)$ as $\bK[t]$-algebras under the map~$f(t) \mapsto f(\xi)$, so
  we obtain $F^* H' \simeq \bK(t) \otimes_{\bK[t]} H = M$, by definition of~$M$. Since~$F^*H'$ is holonomic, this gives the claim.
\end{proof}

We now describe an algorithm for computing~$M$.
Recall that $H = W_{t,\xx}^s/J$, so $M \simeq W_{t,\xx}(t)^s/ J(t)$ where~$J(t)$ is the submodule $\bK(t)\otimes_{\bK[t]} J$ of~$W_{t,\xx}(t)^s$ generated by~$J$.
We can compute normal forms in $M$ using Gröbner bases in~$W_{t,\xx}(t)^s$ after fixing a monomial order on all monomials
$\xx^\alpha \dd^\beta \partial_t^k e_i$ \parencite[e.g.,][]{CHYZAK1998187}.
We choose a monomial order that eliminates~$\partial_t$, that is, any monomial order such that
\[ k < k' \Rightarrow \xx^\aalpha \dd_\xx^\bbeta \partial_t^k e_i \prec \xx^{\aalpha'} \dd_\xx^{\bbeta'} \partial_t^{k'} e_{i'}. \]
Let~$G$ denote a Gröbner basis of~$J(t)$ for such an elimination order.

As a~$W_\xx(t)$-module, $W_{t,\xx}(t)^s$ is generated
by the set
\begin{equation*}
\left\{ \partial_t^i e_j \st i \geq 0, \ 1\leq j \leq s \right\}.
\end{equation*}
So $M$ is generated by the image of this set.
This is an infinite family, but, since $M$~is $W_\xx(t)$-holonomic,
$M$ is actually a Noetherian $W_\xx(t)$-module, finitely generated in particular.
To describe~$M$, we need to find an explicit finite generating set
and the module of relations between the generators.

For $a \in W_{t,\xx}(t)^s$, let~$\ind(a)$ denote the degree of~$a$ with respect to~$\partial_t$,
which we will call the index of~$a$.
In other words, this is the smallest integer~$k \geq 0$ such that
$a$ is in the sub-$W_\xx(t)$-module generated by
\[ B_k = \left\{ \partial_t^i e_j \st 0 \leq i \leq k, \ 1\leq j \leq s \right\}. \]
Moreover, for $a \in W_{t,\xx}(t)^s$, let~$\ind_{J(t)}(a)$ denote
\begin{equation}
  \label{eq:ind-mod}
  \ind_{J(t)}(a) = \min \left\{ \ind(b) \st b \in W_{t,\xx}(t)^s \text{ and } a \equiv b \pmod{J(t)} \right\}.
\end{equation}
Given~$a$, we can compute~$\ind_{J(t)}(a)$ using the Gröbner basis~$G$:
\[ \ind_{J(t)}(a) = \ind \left( \LRem(a, G) \right). \]
Indeed:
we have $a \equiv \LRem(b, G)$ if~$a \equiv b \pmod{J(t)}$ and
the elimination property shows~$\ind \left( \LRem(b, G) \right) \leq \ind(b)$,
so that $\ind(b)$~can be replaced with $\ind(\LRem(b, G))$ in~\eqref{eq:ind-mod};
then the Gröbner basis property shows $\LRem(a, G) = \LRem(b, G)$ if~$a \equiv b \pmod{J(t)}$.

\begin{lemma}\label{lem:generating-set-extension}
  There is~$\ell \geq 0$ such that $\ind_{J(t)}(\partial_t^{\ell+1} e_i) \leq \ell$ for any~$1 \leq i \leq s$.
  Moreover, for any such~$\ell$:
  \begin{enumerate}
  \item $M$~is generated as a $W_\xx(t)$-module by the image in it of~$B_\ell$,
  \item $\ind_{J(t)}(a) \leq \ell$ for any~$a\in W_{t,\xx}(t)^s$.
  \end{enumerate}
\end{lemma}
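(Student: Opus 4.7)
The plan is to exploit the Noetherian property of~$M$ as a~$W_\xx(t)$-module, granted by Lemma~\ref{lem:gen-holonomic}. For each $k \geq 0$ let $M_k \subseteq M$ denote the sub-$W_\xx(t)$-module generated by the image of~$B_k$. Because $M$ is Noetherian, the ascending chain $M_0 \subseteq M_1 \subseteq \dotsb$ must stabilize at some index~$\ell$; this~$\ell$ witnesses the existence claim, since $M_\ell = M_{\ell+1}$ implies $\partial_t^{\ell+1} e_i \in M_\ell$, i.e., $\ind_{J(t)}(\partial_t^{\ell+1} e_i) \leq \ell$, for every~$i$.

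For the two ``moreover'' claims, I would fix any~$\ell$ satisfying the stabilization hypothesis and prove by induction on~$k$ that $\ind_{J(t)}(\partial_t^k e_j) \leq \ell$ for all~$k \geq 0$ and all~$j$. The cases~$k \leq \ell+1$ are either trivial or the hypothesis itself. For the inductive step, I would pick a representative~$b$ of~$\partial_t^k e_j$ modulo~$J(t)$ with~$\ind(b) \leq \ell$, apply~$\partial_t$ to~$b$, and use the commutation rule~$\partial_t c = c\partial_t + \tfrac{\partial c}{\partial t}$ for coefficients~$c \in W_\xx(t)$ to expand $\partial_t \cdot b$. All resulting terms have index at most~$\ell+1$, and the leading $\partial_t^{\ell+1}$-parts involve only the $\partial_t^{\ell+1} e_{j'}$, which the hypothesis reduces modulo~$J(t)$ to elements of index at most~$\ell$. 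Because $J(t)$ is stable under left multiplication by elements of~$W_\xx(t)$, substituting those reductions keeps $\partial_t^{k+1} e_j$ congruent to an element of index~$\leq \ell$.

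From this induction, claim~(1) is immediate: the $\partial_t^k e_j$ generate $W_{t,\xx}(t)^s$ as a $W_\xx(t)$-module and each maps into~$M_\ell$, so $M_\ell = M$. Claim~(2) follows by $W_\xx(t)$-linearity: decompose any $a \in W_{t,\xx}(t)^s$ as $\sum c_{k,j} \partial_t^k e_j$ with $c_{k,j} \in W_\xx(t)$, replace each $\partial_t^k e_j$ by a representative of index $\leq \ell$, and observe that left multiplication by $c_{k,j} \in W_\xx(t)$, which contains no~$\partial_t$, does not increase the $\partial_t$-index.

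The main technical subtlety will be the induction step: one must verify that the non-commutation between~$\partial_t$ and the coefficients in~$W_\xx(t)$ does not produce terms outside the reach of the hypothesis on~$\ell$. Since $W_\xx(t)$-elements carry no~$\partial_t$, the commutator $[\partial_t, c]$ lowers rather than raises the $\partial_t$-index, so the only terms landing at the critical index~$\ell+1$ are pure $\partial_t^{\ell+1} e_{j'}$, exactly those covered by the hypothesis; this makes the argument go through.
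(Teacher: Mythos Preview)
Your proof is correct and follows the paper's route: Noetherianity of~$M$ for the existence of~$\ell$, and the decomposition $a=\sum_{k,j} c_{k,j}\partial_t^k e_j$ with $c_{k,j}\in W_\xx(t)$ for claim~(2). The one place you go further is the explicit induction on~$k$ showing $\ind_{J(t)}(\partial_t^k e_j)\leq\ell$ for all~$k$. The paper instead deduces this at once from $S_\ell=M$, a fact it has only established for the particular~$\ell$ produced by the stationary-chain argument; it then writes ``we fix such an~$\ell$'' and reuses that fact, tacitly identifying the arbitrary~$\ell$ of the ``for any such~$\ell$'' clause with the Noetherian one. Your induction, starting from the single-step hypothesis $\ind_{J(t)}(\partial_t^{\ell+1}e_j)\leq\ell$, makes the lemma's full statement hold as written and is therefore a welcome clarification.

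One small wording issue in your last paragraph: the index-$(\ell{+}1)$ part of~$\partial_t b$ is $\sum_{j'} c_{\ell,j'}\,\partial_t^{\ell+1}e_{j'}$ with $c_{\ell,j'}\in W_\xx(t)$, not ``pure'' $\partial_t^{\ell+1}e_{j'}$. But since you already invoke the left-$W_\xx(t)$-stability of~$J(t)$ when substituting the reductions, the argument is unaffected.
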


\begin{proof}
Since $M$~is Noetherian,
the increasing sequence of the $W_\xx(t)$-modules~$S_k$ generated by the images of the~$B_k$ in~$M$ is stationary:
there exists~$\ell\geq0$ for which the $W_\xx(t)$-module~$S_\ell$ contains all the~$S_k$ for~$k\geq0$,
and is therefore equal to~$M$.
For such an integer~$\ell$, any~$k > \ell$, and any~$j$,
the image of~$\partial_t^k e_j$ in~$M$ is in~$S_k$, therefore in~$S_\ell=M$.
Consequently, there exist coefficients~$c_{h,i} \in W_\xx(t)$ satisfying
\[ \partial_t^k e_j \equiv \sum_{h \leq \ell, \ i} c_{h,i} \partial_t^h e_i \pmod{J(t)} . \]
By the definition~\eqref{eq:ind-mod}, $\ind_{J(t)}(\partial_t^k e_j)$~is less than or equal to the index of the right-hand side,
which by construction is less than or equal to~$\ell$.
We obtain that $\ell$~is a uniform bound on all~$\ind_{J(t)}(\partial_t^k e_j)$.
This proves in particular the first part of the statement, on the existence of~$\ell$.
For the second part, we fix such an~$\ell$.
We have already proved~$M = S_\ell$, which is the first itemized statement.
We have also already proved, for any~$k > \ell$ and any~$j$,
the existence of some~$r_{k,j}$ of index at most~$\ell$ such that
$\partial_t^k e_j \equiv r_{k,j} \pmod{J(t)}$.
This also holds by the definition of the index for~$k \leq \ell$.
Now, any~$a \in W_{t,\xx}^s$ writes in the form~$\sum_{k,j} c_{k,j} \partial_t^k e_j$ for coefficients~$c_{k,j} \in W_\xx(t)$.
Taking linear combinations of congruences modulo the $W_\xx(t)$-module~$J(t)$,
we obtain $a \equiv \sum_{k,j} c_{k,j} r_{k,j} \pmod{J(t)}$,
and as a consequence,
\[ \ind_{J(t)}(a) \leq \ind\biggl(\sum_{k,j} c_{k,j} r_{k,j}\biggr) \leq \ell , \]
where the first inequality is by~\eqref{eq:ind-mod} and the second by the definition of the index as a degree.
We have proved the second itemized statement.
\end{proof}

An algorithm for computing the smallest~$\ell$ as in the statement above follows directly from~\eqref{eq:ind-mod},
simply by testing increasing values of~$\ell$.

Now that we have a finite generating set for~$M$, it remains to characterize the relations between the generators.
To this end,
for the rest of the section we fix~$\ell$ as provided by Lemma~\ref{lem:generating-set-extension}
and we let $J_\ell$ be the~sub-$W_{\xx}(t)$-module of~$W_{t,\xx}(t)^s$ generated by
\[ \left\{ \partial_t^k g \st g \in G \text{ and } k + \ind(g) \leq \ell \right\}. \]
It is, by construction, a submodule of~$ W_{\xx}(t) B_\ell$.

\begin{lemma}\label{lem:P-to-M-iso}
  The inclusion~$W_\xx(t) B_\ell\to W_{t,\xx}(t)^s$ induces an isomorphism
  \[ M \simeq \frac{ W_{\xx}(t) B_\ell }{ J_\ell }, \]
  with inverse induced
  by the map~$W_{t,\xx}(t)^s \to W_\xx(t) B_\ell$ given by $a \mapsto \LRem(a, G)$.
\end{lemma}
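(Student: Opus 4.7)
The plan is to exhibit explicit mutually inverse $W_\xx(t)$-linear maps between $M$ and $W_\xx(t)B_\ell/J_\ell$. First, I would set up the forward map $\phi : W_\xx(t)B_\ell/J_\ell \to M$ induced by the inclusion $W_\xx(t)B_\ell \hookrightarrow W_{t,\xx}(t)^s$. This is well defined because every generator $\partial_t^k g$ of $J_\ell$ already lies in $J(t)$: $J(t)$ is a left $W_{t,\xx}(t)$-submodule and contains $G$. Surjectivity of $\phi$ is immediate from the first item of Lemma~\ref{lem:generating-set-extension}, which states that the image of $B_\ell$ generates $M$ over $W_\xx(t)$.

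The candidate inverse $\psi$ sends the class of $a \in W_{t,\xx}(t)^s$ modulo $J(t)$ to the class of $\LRem(a, G)$ modulo $J_\ell$. Its well-definedness has two ingredients: $\LRem(\cdot, G)$ depends only on the class modulo $J(t)$, by the Gröbner basis property, and its output lies in $W_\xx(t)B_\ell$, since the second item of Lemma~\ref{lem:generating-set-extension} yields $\ind(\LRem(a, G)) = \ind_{J(t)}(a) \leq \ell$.

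The key lemma, and the main technical obstacle, is the following: for any $b \in W_\xx(t)B_\ell$, one has $b - \LRem(b, G) \in J_\ell$. I would prove this by inspecting the reduction of $b$ by $G$: the difference $b - \LRem(b, G)$ is a finite sum of terms $qg$, with $g \in G$ and $q \in W_{t,\xx}(t)$, each produced at one step of the division to cancel a monomial of the current intermediate remainder. Since $\preccurlyeq$ eliminates $\partial_t$ and each intermediate remainder is obtained from $b$ by removing leading monomials, every intermediate remainder has $\partial_t$-degree at most that of $b$, which is $\leq \ell$. Consequently, each reduction term satisfies $\ind(q) + \ind(g) = \ind(\lm(qg)) \leq \ell$. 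Using the unique decomposition $W_{t,\xx}(t) = \bigoplus_k W_\xx(t)\partial_t^k$ as a left $W_\xx(t)$-module, write $q = \sum_{k \leq \ind(q)} c_k\partial_t^k$ with $c_k \in W_\xx(t)$; then $qg = \sum_k c_k \partial_t^k g$ with $k + \ind(g) \leq \ell$ for every nonzero term, so $qg \in J_\ell$ by definition of $J_\ell$. Summing over reduction steps yields the key lemma.

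With this lemma in hand, both compositions become routine. For $\psi \circ \phi$: given $b \in W_\xx(t)B_\ell$, the lemma gives $\psi(\phi(b + J_\ell)) = \LRem(b, G) + J_\ell = b + J_\ell$. For $\phi \circ \psi$: given $a \in W_{t,\xx}(t)^s$, the defining property of the remainder gives $\phi(\psi(a + J(t))) = \LRem(a, G) + J(t) = a + J(t)$. This simultaneously establishes that $\phi$ is an isomorphism and identifies $\psi$ as its inverse, as claimed.
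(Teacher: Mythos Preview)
Your proof is correct and follows essentially the same route as the paper: define $\phi$ from the inclusion (using $J_\ell\subseteq J(t)$), define $\psi$ via $\LRem(\cdot,G)$ (using Lemma~\ref{lem:generating-set-extension} to land in $W_\xx(t)B_\ell$), and verify both compositions are the identity. Your ``key lemma'' that $b-\LRem(b,G)\in J_\ell$ for $b\in W_\xx(t)B_\ell$ is exactly what the paper calls the ``elimination property'', stated there more tersely as ``the computation of $\LRem(a,G)$ only involves multiples of $G$ of index at most $\ind(a)$''; you simply unpack this by tracking the $\partial_t$-degree through each division step, which is a welcome bit of extra detail. The separate appeal to surjectivity of $\phi$ is harmless but redundant once both compositions are shown to be identities.
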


\begin{proof}
  First, $J_\ell \subseteq J(t)$, so  the inclusion~$W_\xx(t) B_\ell\to W_{t,\xx}(t)^s$
  induces a morphism of~$W_\xx(t)$-modules
  \[ \phi\colon W_\xx(t) B_\ell / J_\ell \to W_{t,\xx}(t)^s / J(t). \]
  Next, the $\bK(t)$-linear map $a \in W_{t,\xx}(t)^s  \mapsto \LRem(a, G)$ has values in  $W_\xx(t) B_\ell$,
  because $\ind( \LRem(a, G) ) = \ind_{J(t)}(a) \leq \ell$, by Lemma~\ref{lem:generating-set-extension}.
  This map vanishes on~$J(t)$, because~$G$ is a Gröbner basis of~$J(t)$, so it induces a $\bK(t)$-linear map
  \[ \psi\colon W_{t,\xx}(t)^s / J(t) \to W_\xx(t) B_\ell / J_\ell. \]
  The maps $\phi\circ \psi$ and~$\psi \circ \phi$ are both induced by~$a\mapsto \LRem(a, G)$.
  The first is the identity on~$W_{t,\xx}(t)^s / J(t)$
  because for all~$a \in W_{t,\xx}^s$, $a \equiv \LRem(a, G) \pmod{J(t)}$ as a property of the Gröbner basis~$G$.
  The second is the identity on~$W_\xx(t) B_\ell / J_\ell$ because of the elimination property:
  the computation of~$\LRem(a, G)$
  only involves multiples of~$G$ of index at most~$\ind(a)$, which are all in~$J_\ell$,
  so that for all~$a$ of index at most~$\ell$, $a \equiv \LRem(a, G) \pmod{J_\ell}$.
  This shows that $\phi$~is an isomorphism.
\end{proof}

At this point we are able to define the wanted dimension~$r$ and module~$S$.
In view of Lemma~\ref{lem:P-to-M-iso}, set $r$ to~$(\ell+1) s$,
so as to have a trivial isomorphism~$W_\xx(t)B_\ell \simeq W_\xx(t)^r$.
Call~$S$ the image of the submodule~$J_\ell$ of~$W_\xx(t)B_\ell$ under this isomorphism,
so that, summarizing,
\begin{equation*}
\frac{W_{t,\xx}(t)^s}{J(t)} \simeq M \simeq \frac{ W_{\xx}(t) B_\ell }{ J_\ell } \simeq \frac{W_\xx(t)^r}{S} .
\end{equation*}

It remains to describe an endomorphism~$L$ of~$W_\xx(t)^r$ such that
\[ \partial_t \cdot \proj_S(a) = \proj_S \left( \tfrac{\partial a}{\partial t} + L(a) \right), \]
for any~$a\in W_\xx(t)^r$.
Introduce the canonical maps $\proj_{J(t)}$ and~$\proj_S$ to the relevant quotients.
Recall that~$\partial_t$ is defined for any $h$ in $W_{t,\xx}(t)^s / J(t)$ by left-multiplication by~$\partial_t$:
\[ \partial_t \cdot \proj_{J(t)}(h) = \proj_{J(t)}(\partial_t h). \]
Therefore, the isomorphism of Lemma~\ref{lem:P-to-M-iso} transfers~$\partial_t$ on $W_{\xx}(t) B_\ell / J_\ell$ as
\begin{equation}\label{eq:dt-pr-Jl}
\partial_t \cdot \proj_{J_\ell}(a) = \proj_{J_\ell} \left( \LRem( \partial_t a, G) \right).
\end{equation}
Lastly, take $a \in W_\xx(t) B_\ell$ and write it $a = \sum_{i=1}^n a_i e_i$.
The Leibniz rule in~$W_{t,\xx}^s$ gives $\partial_t a = \frac{\partial a}{\partial t} + \sum_{i=1}^n a_i \partial_t e_i$.
By linearity of~$\LRem$ and
since $W_\xx(t) B_\ell$~is stable under the coefficient-wise differentiation~$\frac{\partial}{\partial t}$, we obtain
\begin{multline*}
\LRem(\partial_t a, G) = \LRem(\tfrac{\partial a}{\partial t}, G) + \LRem\biggl(\sum_{i=1}^n a_i \partial_t e_i, G\biggr) \\
= \tfrac{\partial a}{\partial t} + \LRem\biggl(\sum_{i=1}^n a_i \partial_t e_i, G\biggr) + h
\end{multline*}
for some~$h \in J_\ell$, then,
upon applying~$\proj_{J_\ell}$ and combining with~\eqref{eq:dt-pr-Jl},
\begin{equation*}
\partial_t \cdot \proj_{J_\ell}(a) =  \proj_{J_\ell} \left( \tfrac{\partial a}{\partial t} + \LRem\biggl(\sum_{i=1}^n a_i \partial_t e_i, G\biggr) \right) .
\end{equation*}
So, the endomorphism~$L$ we want is obtained by transferring
the endomorphism
\[ \sum_{i=1}^n a_i e_i \mapsto \LRem\biggl( \sum_{i=1}^n a_i \partial_t e_i, G \biggr) \]
of~$W_\xx(t)B_\ell$ to an endomorphism of~$W_\xx(t)^r$.

\begin{example}\label{ex:airy-fin}
  With the goal to continue Examples \ref{ex:airy-red}, \ref{ex:airy-irred}, and~\ref{ex:airyL},
  we consider with more generality a function~$f = \exp(q(x,y,z,t))$ where~$q$ is some polynomial.
  Consider the~$W_{t,x,y,z}$-module $H = W_{t,x,y,z}/J$ where $J$ is generated by
  \begin{equation}
    \label{eq:airygb}
    \partial_x - \tfrac{\partial q}{\partial x},
    \partial_y - \tfrac{\partial q}{\partial y},
    \partial_z - \tfrac{\partial q}{\partial z},
    \partial_t - \tfrac{\partial q}{\partial t},
  \end{equation}
  so that~$H \simeq \mathbb{Q}[t,x,y,z] f$.
  The module~$M = \bK(t) \otimes_{\bK[t]} H$,
  viewed as a $W_{x,y,z}(t)$-module after forgetting the differential structure with respect to~$t$,
  is the same as the module~$M$ defined in Example~\ref{ex:airy-red}.

  This case is much easier than the general case addressed above. First, $s=1$, which simplifies the notation; second,  the $\ell$ in Lemma~\ref{lem:generating-set-extension} is~0; therefore~$r= (\ell+1)s = 1$ too.
  Since~$B_0 = \left\{ 1 \right\}$, this means that~1 generates~$M$ as a~$W_{x,y,z}(t)$-module, there is no need to consider derivatives~$\partial_t^k$.
  The generators~\eqref{eq:airygb} form a Gröbner basis~$G$ of~$J(t)$ in~$W_{t,x,y,z}(t)$.
  So~$J_0$ is generated in~$W_{t,x,y,z}(t)$ by
  \[ \left\{
      \partial_x - \tfrac{\partial q}{\partial x},
      \partial_y - \tfrac{\partial q}{\partial y},
      \partial_z - \tfrac{\partial q}{\partial z}, \right\}.
  \]

  In this case, Lemma~\ref{lem:P-to-M-iso} states that the inclusion~$W_{x,y,z}(t) \to W_{t,x,y,z}(t)$ induces an isomorphism
  \[ M \simeq W_{x,y,z}(t) / J_0, \]
  which coincides with the description in Example~\ref{ex:airy-red}.

  It remains to describe the derivation~$\partial_t$ on~$M$.
  In this case, this is easy: $M \simeq \mathbb{Q}(t)[x,y,z] f$ as a~$W_{x,y,z}(t)$-module and the derivation with respect to~$t$ is given by
  \[ \partial_t \cdot a f = \tfrac{\partial a}{\partial t} f + a \tfrac{\partial f}{\partial t} = \bigl(\tfrac{\partial a}{\partial t} + a \tfrac{\partial q}{\partial t}\bigr) f, \]
  for any~$a\in \mathbb{Q}(t)[x,y,z]$, and this generalizes to~$a \in W_{x,y,z}(t)$, since~$\partial_t$ commutes with~$\partial_x$, $\partial_y$ and~$\partial_z$.
  Putting apart the derivation~$\frac{\partial a}{\partial t}$ of the coefficient~$a$, we obtain the linear part of the derivation, $L(a) = a \tfrac{\partial q}{\partial t}$.
  In the specific case of Example~\ref{ex:airy-red}, we have
  $q = \tfrac13 (x^3+y^3) - x(t+2z) - y(t+z)$,
  and $L(a)$~may be simplified to the expression~\eqref{eq:L-of-example} given in Example~\ref{ex:airyL} after reduction of the monomials~$xz$ and~$yz$ in~$\tfrac{\partial q}{\partial t}$.
  As to the general case, we can understand the map~$L$ as
  \[ L(a) = \operatorname{LRem}(a \partial_t, G), \]
  because~$G$ contains the operator~$\partial_t - \tfrac{\partial q}{\partial t}$.
\end{example}

\section{Implementation}
\label{sec:implementation}

This section outlines specific algorithmic and implementation choices  made in our Julia implementation of the algorithms presented in this paper.
We begin in Section~\ref{subsec:general-comment-impl} with a general presentation of our package.
In Section~\ref{sec:efficiency}  we show how memoization can be used
to efficiently compute the image of the reduction map~$[.]_\eta$ on a finite-dimensional space.
Lastly, we present in Section~\ref{sec:crt-e/i} a modified version of Algorithm~\ref{algo:integration}
that utilizes the evaluation/interpolation paradigm
to avoid the growth of intermediate coefficients.
This is of particular importance when reducing operators with coefficients in~$\bK(t)$ using the reduction~$[.]_\eta$.
Additionally, we present in the same section the use of a tracer for computing a basis of $\Irr_{\preccurlyeq\eta}$.

\subsection{General presentation of the package}
\label{subsec:general-comment-impl}

We have implemented our algorithms in Julia.
This is available as the package \verb+MultivariateCreativeTelescoping.jl+\footnote{See \url{https://hbrochet.github.io/MultivariateCreativeTelescoping.jl/}.}.
The package includes an implementation of Weyl algebras
in which operators have a sparse representation by a pair of vectors,
one for exponents and one for the corresponding coefficients.
The currently supported coefficient fields are the field~$\bQ$ of rational numbers, the finite fields~$\bF_p$ with $p\leq 2^{31}$,
and extensions of those fields with symbolic parameters.
When such parameters are present, the implementation interfaces with FLINT
for defining and manipulating commutative polynomials,
by means of the Julia packages \verb+AbstractAlgebra.jl+ and~\verb+Nemo.jl+.
Our package also provides an implementation of non-commutative generalizations of algorithms
for computing Gröbner bases: Buchberger's algorithm (see e.g.~\cite{Bahloul-2020-GBD}), the F4 algorithm~\cite{FAUGERE199961},
and the F5 algorithm~\cite{10.1145/2442829.2442879, LAIREZ2024102275}.
The F4 implementation seems to be the most efficient one for grevlex orders
while the F5 implementation seems to be the most efficient one for block and lexicographical orders.


\subsection{Efficient computation of the sequence $(g_i)_i$ using memoization}
\label{sec:efficiency}

The confinement~$(\eta,B)$ required by Algorithm~\ref{algo:integration} is constructed so that the sequence $(g_i)_i$
defined in~\eqref{eq:successive-nf} is contained in $\Span_{\bK(t)}(B)$.
Properties of the reduction~$[.]_\eta$ imply a refined formula:
after decomposing~$g_i$ as a sum $\sum_{m\in B} a_m m$ with $a_m\in \bK(t)$, $g_{i+1}$~can be obtained by
\begin{equation}\label{eq:gi-efficient}
  g_{i+1} = \sum_{m\in B}\left(\frac{\partial a_i}{\partial t}m + a_m[L(m)]_\eta\right).
\end{equation}
Early in the execution of Algorithm~\ref{algo:integration},
when it computes the confinement $(\eta,B)$ by Algorithm~\ref{algo:confinement},
the image $[L(m)]_\eta$ of every monomial $m\in B$ has to be computed.
We therefore choose to store these images in memory
to allow for a more efficient computation of the sequence~$(g_i)_i$
by using~\eqref{eq:gi-efficient} at a later stage.




\subsection{Modular methods and evaluation/interpolation}\label{sec:crt-e/i}

When, as in this subsection, the base field~$\bK$ is fixed to be~$\bQ$,
evaluation/in\-ter\-po\-la\-tion and modular techniques have long been used in computer algebra
to speed up calculations
\cite{Strassen-1973-VD,Zippel-1979-PAS,Kaltofen-1988-GCD,KaltofenTrager-1990-CPG}.
By making random choices of primes for modular reductions and of integers for variable evaluations,
the resulting algorithms are invariably probabilistic.
When no reasonable a~priori bound on the output size is known,
which is our situation,
the correctness of algorithms is not proven,
and outputs are only extremely plausible.
In Section~\ref{sec:global-mod-eval-interp},
we describe how we reconcile this commutative algebra approach with our differential setting.
More specific ideas have also already been developed in the context of Gröbner bases calculations,
with the additional notion of tracer~\cite{Traverso89}:
in classical contexts, the probability of success of certain parts of the calculation
can be made as close to one as wished,
but the final reconstruction steps make algorithms retain some incompleteness of correctness.
We make use of the same idea in our Algorithm~\ref{algo:basis-E-lambda},
and we comment on this in Section~\ref{sec:using-a-tracer}.

\subsubsection{Global modular evaluation-interpolation scheme}
\label{sec:global-mod-eval-interp}

We first recall the principle of the evaluation/interpolation
paradigm and then we present a modified version of Algorithm~\ref{algo:integration} that incorporates this paradigm.

\paragraph{The principle.}
Let $r_1,\dots,r_\ell \in \bQ(t)$ be rational functions and let $F\colon \bQ(t)^\ell \rightarrow \bQ(t)$ be
a function computable using only additions, multiplications, and inversions.
The rational function $F(r_1,\dots,r_\ell)$ can be reconstructed as an element of~$\bQ(t)$ from its evaluations
$F(r_1(a_i),\dots,r_\ell(a_i)) \bmod p_j$ at several points~$a_i\in\bF_{p_j}$ and for several prime integers~$p_j$
in three steps:
\begin{enumerate}
  \item for each $j$, reconstruct $F(r_1,\dots,r_\ell) \bmod p_j$ in $\bF_{p_j}(t)$
   by Cauchy interpolation~\cite[Chapter 5.8]{Gathen1999}, \\
  \item reconstruct $F(r_1,\dots,r_\ell) \bmod N$ in $\bF_N(t)$ where $N = \prod_j p_j$
  by the Chinese remainder theorem~\cite[Chapter 5.4]{Gathen1999}, \\
  \item lift the integer coefficients of $F(r_1,\dots,r_\ell) \bmod N$ in~$\bQ$
    by rational reconstruction~\cite[Chapter 5.10]{Gathen1999}.
\end{enumerate}
The first (resp. third) step requires a bound on the degree in $t$ of the result (resp. on the size of its coefficients)
to determine the number of evaluations needed to ensure correctness.
Since we do not have access to such bounds, we rely instead on a probabilistic approach:
after successfully obtaining a result with a certain number of  evaluations in step 1 (resp. 3),
we compute one additional evaluation and check consistency with the previously obtained result.

Finally, this method can fail if some bad evaluation points or some bad primes are chosen.
However, these situations are very rare, provided the prime numbers are sufficiently large
and the evaluation points are selected uniformly at random over~$\bF_p$.

\newgeometry{left=3cm,right=3cm}
\begin{alg}{Integration with evaluation/interpolation}{integration-crt-e-i}
  \textbf{Input:}
  \begin{algovals}
    \item a holonomic module~$W_\xx(t)^r/S$
    \item a derivation map~$\partial_t\colon W_\xx(t)^r/ S \rightarrow W_\xx(t)^r/S$
      given by the matrix of an endomorphism $L$, as in~\eqref{eq:def-prop-L}
    \item an element~$f \in W_\xx(t)^r$
    \item an integer~$\rho \geq 0$
    \end{algovals}
  \textbf{Output:}
  \begin{algovals}
    \item $P = c_0 + \dots + c_N\partial_t^N$ such that $c_i\in\bQ(t)$, $c_N \neq 0$ and $P\cdot \proj_S(f) \in \dd M$
  \end{algovals}
  \begin{pseudo}
    \kw{for} large random prime numbers $p = p_1,p_2,\dots$ \\+
    $ \bar{L} \gets $ the endomorphism of $\bF_p\otimes_\bZ W_\xx^r(t)$ induced by $L$ via the reduction modulo $p$ \\
      \kw{for} random numbers  $a = a_1,a_2,\dots$ in $\bF_p$\\+
        $ \tilde{S} \gets $ image of $S$ under evaluation at $t=a$ and reduction modulo $p$ \\
        $ \tilde{L} \gets $ the endomorphism of $\bF_p\otimes_\bZ W_\xx^r$ induced by $\bar{L}$ via the same evaluation \\
        $(\eta, B) \gets$ an effective confinement obtained from~$(\tilde{S}, \tilde{f}, \tilde{L}, \rho)$ by Algorithm~\ref{algo:confinement} over $\bK = \bF_p$\\
        $\tilde{g}_0 \gets [\tilde{f}]_\eta$ where the reduction is over $\bK = \bF_p$\\
        store $\tilde{g}_0$ as well as the images $[\tilde{L}(m)]_\eta$ that have been computed at line $6$ for all $m\in B$  \\-
        interpolate the coefficients of $\bar{g}_0$ and of each $[\bar{L}(m)]_\eta$ by elements of $\bF_p(t)$ \\
      $N \gets 0$ \\
      \kw{while} $\bar{g}_0, \dotsc, \bar{g}_N$ are linearly independent over~$\bF_p(t)$ \\+
      $\bar{g}_{N+1} \gets \frac{\partial \bar{g}_N}{\partial t} + \left[ \bar{L}(\bar{g}_N) \right]_\eta$  where the reduction is over $\bK = \bF_p(t)$\\
      $N \gets N+1$ \\-
      store coefficients of the minimal non-trivial relation $\bar{c}_0 \bar{g}_0 + \dotsb + \bar{c}_{N}\bar{g}_N = 0$  for $\bar{c}_i\in\bF_p(t)$\\-
    reconstruct the coefficients~$c_0,\dots,c_N$ in $\bQ(t)$ from their values in the $\bF_{p_j}(t)$ \\
    \kw{return} $c_0 + \dots + c_N\partial_t^N$ \\
  \end{pseudo}
\end{alg}
\restoregeometry

\paragraph{Implementation.}
The evaluation/interpolation scheme described above cannot be directly applied to Algorithm~\ref{algo:integration}
as it involves not only additions, multiplications and inversions, but also differentiations.
Indeed, recall that the sequence~$(g_i)_i$ is defined for $\eta\in\monxxr$ by $g_0=[f]_\eta$ and
\[
g_{i+1} = \frac{\partial g_i}{\partial t} + [L(g_i)]_\eta
\]
where $\partial g/\partial t$ denotes the coefficient-wise differentiation of $g$ in the basis $\monxxr$.
This differentiation does not commute with the evaluation of $t$,
which prevents the sequence $(g_i)_i$ from being computed by evaluation/interpolation.

Algorithm~\ref{algo:integration-crt-e-i} computes the matrix of the
$\bK(t)$-linear map $[L(.)]_\eta$ using eval\-u\-a\-tion/interpolation and uses it to
find a linear relation among the elements of the sequence $(g_i)_i$.
 The reconstruction of the coefficients from $\bF_p(t)$ to $\bQ(t)$ is delayed to the end of the algorithm.
We adopt the following conventions: the projection of an element $g\in W_\xx(t)^r$
 in $\bF_p\otimes_\bZ W_\xx^r$ by evaluation at a point $a$ and reduction modulo $p$ is stored in a variable $\tilde{g}$
 and its projection in $\bF_p\otimes_\bZ W_\xx^r(t)$ by reduction modulo $p$ is stored in a variable $\bar{g}$.

\subsubsection{Computation of the vector space $\Irr_{\preccurlyeq \eta}$ using a tracer}
\label{sec:using-a-tracer}

In Algorithm~\ref{algo:basis-E-lambda}, a generating set
of the vector space~$\Irr_{\preccurlyeq\eta}$ is computed by reducing for each $\eta'\preccurlyeq\eta$
a term of the form $mg - \partial_i w$ satisfying $\eta' = \lm(mg)=\lm(\partial_i w)$.
However, not all such terms contribute to an increase in the dimension of the space, as their reductions may be linearly dependent.
Since such reductions are repeated multiple times for different primes~$p$ and evaluation points of~$t$, we use a tracer~\cite{Traverso89} during the computation for the first pair $(p,t)$ to record
all~$\eta'$ corresponding to a non-contributing term and skip the corresponding terms in subsequent computations.
Assuming that the first pair $(p,t)$ is not unlucky, we know that all the skipped pairs would also
lead to unnecessary elements if used in later computations.
The use of this tracer is another reason why our implementation is probabilistic,
as the algorithm may not even terminate if a wrong tracer was computed due to the choice of a first unlucky pair $(p,t)$.
This probability can of course be made arbitrarily small by computing tracers
for multiple pairs and performing a majority test.

\section{Application to the computation of ODEs satisfied by $k$-regular graphs}
\label{sec:kreg}

In this section, we illustrate our new algorithm
with computations on a family of multivariate integrals of combinatorial origin.
We compute linear ODEs satisfied
by various models of $k$-regular graphs and generalizations.
A distinguishing feature of these integration problems is that
they cannot be solved by classical creative telescoping algorithms,
which perform computations over the field of rational functions
in all variables:
because the objects to be integrated have polynomial torsion,
they are not functions,
and such calculations would erroneously result in a zero integral.

ODEs for~$k\leq 5$ were obtained 20~years ago
by naive linear algebra and elimination by Euclidean divisions~\cite{ChyzakMishnaSalvy-2005-ESP}.
This has recently been extended to~${k\leq 7}$
by a multivariate analog~\cite{ChyzakMishna-2024-DES}
of the reduction-based algorithm~\cite{BostanChyzakLairezSalvy-2018-GHR}
in which the reduction is modulo the polynomial image of several differential operators.
Here we use our new algorithm to achieve~$k = 8$.

\subsection{Statement of the integration problem}

We first briefly introduce the problem and its solution by an integral representation.
We refer the reader to~\cite{ChyzakMishnaSalvy-2005-ESP}
for further motivation, history, and details.
Given a fixed integer~$k\geq2$, a $k$-regular graph is
a graph whose vertices all have degree~$k$, that is, all have exactly $k$~neighbors.
We are interested in the enumerative generating function
\begin{equation*}
R_k(t) = \sum_{n\geq0} r_{k,n} \frac{t^n}{n!} ,
\end{equation*}
where $r_{k,n}$~is the number of $k$-regular labeled graphs on $n$~vertices.
A combinatorial result by Gessel~\cite{Gessel-1990-SFP} states
that the generating function~$R_k(t)$~satisfies
a linear ODE with polynomial coefficients in~$t$.
In this section, we show how such a differential equation can be obtained from Algorithm~\ref{algo:integration}.

To this end, we use the classical formulation of~$R_k(t)$
as a scalar product of two exponentials,
\begin{equation*}
R_k(t) = \scalp{e^f}{e^{tg}} ,
\end{equation*}
where $f$ and~$g$ are explicit polynomials in the indeterminates $p_1,\dots,p_k$
\cite[Algorithm~1, Step~a.]{ChyzakMishna-2024-DES},
and where the scalar product is a classic tool
in the combinatorics of symmetric functions.
The scalar product is first defined on monomials~by
\begin{equation}\label{eq:def-scalp}
\scalp{p_1^{r_1}\dotsm p_k^{r_k}}{p_1^{s_1}\dotsm p_k^{s_k}} = z_\rr \delta_{\rr,\ss}
\qquad\text{for}\qquad
z_\rr = r_1!\,1^{r_1} r_2!\,2^{r_2} \dotsm r_k!\,k^{r_k}.
\end{equation}
With $z_\rr$~indexed by exponents in~$\bN^k$,
we depart from the equivalent indexing~$z_\lambda$ by partitions $\lambda_1\geq\lambda_2\geq\dots$
that is used classically as well as in~\cite{ChyzakMishna-2024-DES}.
The scalar product is then extended by bilinearity to
left arguments in~$\bQ[[\pp]]$
and right arguments in~$\bQ[\pp]((t))$,
making the scalar product live in~$\bQ((t))$.
Note that, because in the symmetric-function theory the power function~$p_i$ denotes the sum $x_1^i+x_2^i+\dotsb$,
we use the more traditional~$\pp$ instead of~$\xx$ for the variables, in accordance with existing literature.
Also, by~$\bQ[\pp]((t))$ we mean the ring of formal sums with coefficients in~$\bQ[\pp]$,
with finitely many exponents towards~$-\infty$ and potentially infinitely many towards~$+\infty$.
This is not a field.
The use of Algorithm~\ref{algo:integration} is justified by the following statement,
which we will prove after an example.

\begin{theorem}\label{thm:kreg-main}
  Let~$f, g\in \bQ[\pp]$.
  Let~$S$ be the left ideal of~$W_{\pp}(t)$ generated by
  \[ p_i - t \frac{\partial \tilde g}{\partial X_i}(u_1,\dotsc,u_k) , \qquad 1\leq i\leq k , \]
  where $\tilde g(X_1,\dots,X_k) = g(X_1,2X_2,\dots,kX_k)$
  and $u_j = \frac{\partial f}{\partial p_j} - \partial_j$ for $1\leq j\leq k$.
  Then, $W_{\pp}(t)/S$ is holonomic as a $W_\pp(t)$-module.
  Write~$\proj_S$ for the canonical projection
  $\proj_S\colon W_\pp(t)\rightarrow W_\pp(t)/S$.
  Then, $W_\pp(t)/S$~can be endowed with a derivation~$\partial_t$ commuting with $\pp$ and~$\dd_\pp$ satisfying
  \begin{equation}\label{eq:der-dt}
  \partial_t\cdot\proj_S(a) = \proj_S\left(\tfrac{\partial a}{\partial t} + a \tilde g(u_1,\dotsc,u_k)\right).
  \end{equation}
  Algorithm~\ref{algo:integration},
  applied to the module~$W_{\pp}(t)/S$ (for~$r=1$),
  the derivation~\eqref{eq:der-dt} (for the implied endomorphism $L\colon a\mapsto a \tilde g(u)$),
  the element $1 \in W_\pp(t)$ (denoted~$f$ in the algorithm),
  and any $\rho \geq 0$,
  produces a nonzero differential operator~$P(t, \partial_t)$ such that
  \[ P(t, \partial_t) \cdot \scalp{e^f}{e^{tg}} = 0. \]
\end{theorem}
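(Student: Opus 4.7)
I would construct a $\bQ(t)$-linear functional $\Phi\colon W_\pp(t)/S \to \bQ((t))$ such that $\Phi([1]) = R_k(t) := \langle e^f, e^{tg}\rangle$, such that $\Phi$ vanishes on $\dd M$, and such that $\Phi$ intertwines the module derivation $\partial_t$ with $\ud/\ud t$ on $\bQ((t))$. Granted such a $\Phi$, the output $P$ of Algorithm~\ref{algo:integration} is nonzero by Theorem~\ref{thm:algo-integration-correct} and satisfies $P\cdot[1]\in\dd M$, whence applying $\Phi$ yields $0 = \Phi(P\cdot[1]) = P(t,\ud/\ud t)R_k(t)$, which is the sought relation.

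\textbf{Preliminary structural lemmas.} Two facts about the operators $u_j$ underlie everything: first, they pairwise commute (directly from $\partial^2 f/\partial p_i\partial p_j = \partial^2 f/\partial p_j\partial p_i$), so the operators $g(u_1,\dotsc,u_k)$ and $(\partial g/\partial p_i)(u_1,\dotsc,u_k)$ are unambiguous; second, $u_j \cdot e^f = 0$, by the product rule applied to $\partial_j(e^f) = (\partial f/\partial p_j)e^f$. Holonomy of $W_\pp(t)/S$ follows by a Gröbner-staircase / Bernstein-dimension argument: the $k$ generators $s_i = p_i - t(\partial g/\partial p_i)(u_1,\dotsc,u_k)$ let one eliminate each $p_i$ modulo $S$ up to operators of bounded degree in the $\dd_\pp$, bringing the Bernstein dimension down to $k$, the minimal value allowed by Bernstein's inequality. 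The well-definedness of $\partial_t$ reduces to checking that $\partial s_i/\partial t + s_i\,g(u_1,\dotsc,u_k) \in S$ for every generator $s_i$: using $[u_j, p_i] = -j\delta_{ij}$ and the commutativity of the $u_j$ to derive $[p_i, g(u_1,\dotsc,u_k)] = i(\partial g/\partial p_i)(u_1,\dotsc,u_k)$, combining with $g(u_1,\dotsc,u_k)\,s_i \in S$ (left multiplication by an element of the left ideal) and the relation $p_i \equiv t(\partial g/\partial p_i)(u_1,\dotsc,u_k) \pmod S$, the expression reorganises into an element of $S$ after cancellation.

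\textbf{Construction of $\Phi$.} The symmetric-function adjunction $\langle p_i\phi, \psi\rangle = \langle\phi, i\partial_i\psi\rangle$, symmetric in its arguments, defines an anti-automorphism $\sigma$ of $W_\pp(t)$ fixing $t$, with $\sigma(p_i) = i\partial_i$ and $\sigma(\partial_i) = p_i/i$. The operator $u_i = i(\partial f/\partial p_i - \partial_i)$ is precisely the $e^f$-twisted version of $\sigma(\partial_i)$: it differs from $\sigma(\partial_i)$ by an element killing $e^f$, equivalently $u_i\cdot e^f = 0$. Setting $\Phi(a) := \langle \sigma(a)\cdot e^f, e^{tg}\rangle$ gives $\Phi([1]) = R_k(t)$ trivially; moreover $\Phi(\partial_i b) = \langle \sigma(b)\cdot u_i\cdot e^f, e^{tg}\rangle = 0$, so $\Phi$ vanishes on $\dd W_\pp(t)$. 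The scalar-product identity $\langle p_ie^f, e^{tg}\rangle = it\langle e^f, (\partial g/\partial p_i)e^{tg}\rangle$, when pulled back through $\sigma$, yields exactly $\Phi(p_i - t(\partial g/\partial p_i)(u_1,\dotsc,u_k)) = 0$, the factor $i$ in $u_i$ absorbing the one coming from the adjoint $\sigma(p_i) = i\partial_i$; so $\Phi$ vanishes on $S$. For the intertwining, $(\ud/\ud t)\Phi(a) = \Phi(\partial a/\partial t) + \langle \sigma(a)\cdot e^f, g\cdot e^{tg}\rangle$, and the adjunction together with $u_j e^f = 0$ identifies the last term with $\Phi(a\cdot g(u_1,\dotsc,u_k))$, matching the definition of $\partial_t$ on the module.

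\textbf{Main obstacle.} The delicate step is the last verification, namely identifying $\langle \sigma(a) e^f, g(\pp)\cdot e^{tg}\rangle$ with $\langle \sigma(a\cdot g(u_1,\dotsc,u_k))\cdot e^f, e^{tg}\rangle$ and likewise proving $\Phi(S)=0$; this requires moving multiplications by $\pp$-polynomials across the scalar product via the anti-automorphism $\sigma$ and then matching the result with substitutions $p_j \mapsto u_j$, with meticulous bookkeeping of the normalisation factors $j$ coming from $z_\rr = \prod r_j!\,j^{r_j}$ (and hence from $\sigma(p_j) = j\partial_j$). These factors are exactly what force the prefactor $j$ in the definition of $u_j$, and any slip on a sign or factor would break the correspondence.
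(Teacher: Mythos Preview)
Your overall strategy---build a $\bQ(t)$-linear functional $\Phi$ on $W_\pp(t)$ vanishing on $S$ and on $\dd W_\pp(t)$, sending $1$ to $\langle e^f,e^{tg}\rangle$, and intertwining $\partial_t$ with $\ud/\ud t$---is exactly the paper's strategy. The paper realises $\Phi$ as $\Phi(a)=\res\!\big(a\cdot e^f\mathcal L(e^{t\tilde g})\big)$ via a formal Laplace transform and a residue on $\bQ[[\pp]][\pp^{-1}]((t))$; holonomy is obtained by an explicit algebra automorphism $\tau$ of $W_\pp(t)$ with $\tau(\partial_i)=s_i$, giving $W_\pp(t)/S\simeq\bQ(t)[\pp]$. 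Your staircase sketch for holonomy and your commutator computation for the well-definedness of $\partial_t$ are along the same lines and can be completed.

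The genuine gap is your construction of $\Phi$. With $\Phi(a)=\langle\sigma(a)\cdot e^f,e^{tg}\rangle$ and $\sigma(\partial_i)=p_i/i$, one gets $\Phi(\partial_i b)=\langle\sigma(b)\,(p_i/i)\,e^f,e^{tg}\rangle$, not $\langle\sigma(b)\,u_i\,e^f,e^{tg}\rangle$; and $(p_i/i)e^f\neq0$. Concretely, take $k=1$, $f=p_1^2/2$, $g=p_1$: then $\langle e^f,e^{tg}\rangle=e^{t^2/2}$, while $\Phi(\partial_1)=\langle p_1 e^{p_1^2/2},e^{tp_1}\rangle=t\,e^{t^2/2}\neq0$, so $\Phi$ does not kill $\dd W_\pp(t)$. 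The same example shows $\Phi$ fails on $S$: with $s_1=p_1-t$ one finds $\Phi(\partial_1 s_1)=e^{t^2/2}\neq0$. The adjunction $p_i\leftrightarrow i\partial_i$ simply does not produce a functional annihilating derivatives; that is precisely why the paper passes through the Laplace transform, which turns the scalar product into a residue (Lemma~\ref{lem:scalp-as-res}), so that $\Phi(\partial_i b)=\res(\partial_i\cdot(\,\cdots))=0$ is automatic. The claimed identification of $u_i$ with an ``$e^f$-twisted $\sigma(\partial_i)$'' is not correct either: conjugating $\sigma(\partial_i)=p_i/i$ by $e^{\pm f}$ leaves it unchanged. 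To repair your argument you would need to replace $\Phi$ by a functional built from an object on which $\partial_i$ acts nilpotently in the relevant sense---this is exactly the role of $\mathcal L(e^{t\tilde g})$ and the quotient by $K$ in the paper.
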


\begin{example}\label{ex:3-regs}
To compute a differential operator satisfied by the exponential generating function
counting 3-regular graphs, we set $k=3$ and
\begin{equation*}
f = \frac{p_1^2}{2}-\frac{p_2}{2}-\frac{p_2^2}{4}+\frac{p_3^2}{6},
\qquad
g = \frac{p_3}{3}+\frac{p_2p_1}{2}+\frac{p_1^3}{6} .
\end{equation*}
Then, we get $\tilde g(p_1,p_2,p_3) = p_3 + p_2p_1 + p_1^3/6$ and set
\begin{equation*}
u_1 = p_1-\partial_1,
\quad
u_2 = -\frac{1+p_2}{2}-\partial_2,
\quad
u_3 = \frac{p_3}{3}-\partial_3.
\end{equation*}
The~$u_i$ obviously commute with one another,
and this is a general fact beyond this example.
Then, the ideal~$S$ is generated by
\begin{gather*}
p_1 - t \left(u_2 + \frac{u_1^2}{2}\right) = p_1+t+\frac{tp_2}{2}-\frac{tp_1^2}{2}+\frac{tp_1\partial_1}{2}+t\partial_2-\frac{t\partial_1^2}{2} , \\
p_2 - t u_1 = p_2 - t p_1 + t \partial_1 , \qquad p_3 - t .
\end{gather*}
To describe the derivation with respect to~$t$, we obtain
\begin{multline*}
\tilde g(u_1,u_2,u_3) =  u_3 + u_2u_1 + \frac{u_1^3}{6} \\
= \frac{p_1^3}{6}-p_1-\frac{p_1p_2}{2}+\frac{p_3}{3}
+\partial_1+\frac{p_2\partial_1}{2}-\frac{p_1^2\partial_1}{2} \\
-p_1\partial_2-\partial_3
+\frac{p_1\partial_1^2}{2}+\partial_1\partial_2
-\frac{\partial_1^3}{6} .
\end{multline*}
We will not be able to show properties of the map~$\proj_S$ before proving Lemma~\ref{lem:dt-exists}.
Applied to the previous data,
Algorithm~\ref{algo:integration} returns
\begin{align*}
P &= 9t^3(t^4+2t^2-2) \partial_t^2 \\
  &+ 3(t^{10}+6t^8+3t^6-6t^4-26t^2+8) \partial_t - t^3(t^4+2t^2-2)^2 ,
\end{align*}
which annihilates
$1+\frac{t^4}{4!}+\frac{70t^6}{6!}+\frac{19355t^8}{8!}+\frac{11180820t^{10}}{10!}+O(t^{12})$,
the counting series of 3-regular graphs.
\end{example}

The rest of the section is a proof of Theorem~\ref{thm:kreg-main}.
In particular,
we fix the ideal~$S$ as in this statement,
and the holonomicity of~$W_\pp(t)/S$ will be proven as Lemma~\ref{lem:kreg-holonomicity},
the existence of the derivation~$\partial_t$ will be proven as Lemma~\ref{lem:dt-exists},
and Theorem~\ref{thm:algo-integration-correct} will prove the correctness of the output operator~$P$.

We begin with a few preliminary definitions.
First, given $h = h(p_1,\dots,p_k) \in \bQ[\pp]$,
we write $\tilde h$ for $h(1p_1, 2p_2, \dots, kp_k)$.
Second, we define a formal Laplace transform on monomials by
\begin{equation}\label{eq:def-laplace}
\cL(p_1^{r_1} \dots p_k^{r_k}) = r_1! \, p_1^{-r_1-1} \dots r_k! \, p_k^{-r_k-1}
\end{equation}
and extend it by linearity into a map from~$\bQ[\pp]$ to~$\bQ[\pp^{-1}]$.
%
Third, we introduce the ring
\[ \msrp := \bQ[[\pp]][\pp^{-1}] = \bigcup_{\ell\geq0} (p_1\dotsm p_k)^{-\ell} \bQ[[\pp]] \]
and a formal residue on it by the formula
\begin{equation}\label{eq:def-res}
\res \biggl( \sum_{\rr \in \bZ^k} c_\rr \pp^\rr \biggr) = c_{-1,\dots,-1} .
\end{equation}
Lastly, we extend $h\mapsto\tilde h$ to a map from~$\bQ[\pp]((t))$ to itself,
$\cL$ to a map from~$\bQ[\pp]((t))$ to~$\bQ[\pp^{-1}]((t))$,
and $\res$ to a map from~$\msrp((t))$ to~$\bQ((t))$,
by making each of those maps act coefficient-wisely.

The following lemma is the central change of representation that makes us able to appeal to residue calculations.
We exemplify it before its proof.

\begin{lemma}\label{lem:scalp-as-res}
  For any polynomials $f$ and $g$ in $\bQ[\pp]$,
  \[ \scalp{e^f}{e^{tg}} = \res(e^f \cL(e^{t\tilde g})) . \]
\end{lemma}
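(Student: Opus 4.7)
The plan is to establish the identity first at the level of a pair of monomials, then extend by bilinearity to the exponential series $e^f$ and $e^{tg}$.

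First, I would fix $\rr,\ss\in\bN^k$ and evaluate both sides of the claimed formula with $\pp^\rr$ on the left and $\pp^\ss$ on the right. The left-hand side equals $z_\rr\delta_{\rr,\ss}$ by~\eqref{eq:def-scalp}. For the right-hand side, observe that
\[ \widetilde{\pp^\ss} = \prod_i (i p_i)^{s_i} = \biggl(\prod_i i^{s_i}\biggr)\pp^\ss, \]
hence by~\eqref{eq:def-laplace}
\[ \cL(\widetilde{\pp^\ss}) = \prod_i i^{s_i}\, s_i!\, p_i^{-s_i-1}. \]
Multiplying by $\pp^\rr$ and applying~\eqref{eq:def-res}, the residue extracts the term with $r_i-s_i-1=-1$ for every $i$, i.e., $\rr=\ss$, and yields $\prod_i i^{s_i}\, s_i!\,\delta_{\rr,\ss} = z_\rr\delta_{\rr,\ss}$. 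This matches the left-hand side, establishing the monomial case.

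Next, I would extend by bilinearity. Writing $e^f=\sum_\rr a_\rr\pp^\rr$ with $a_\rr\in\bQ$ and $e^{tg}=\sum_{m\geq 0} t^m g^m/m!=\sum_\ss b_\ss(t)\pp^\ss$ with $b_\ss(t)\in\bQ[[t]]$, the defining bilinear extension of $\scalp{\cdot}{\cdot}$ gives
\[ \scalp{e^f}{e^{tg}} = \sum_{\rr,\ss} a_\rr\, b_\ss(t)\,\scalp{\pp^\rr}{\pp^\ss} = \sum_\rr z_\rr\, a_\rr\, b_\rr(t), \]
the sums being well-defined because at each fixed power of~$t$ only finitely many $\ss$ give a nonzero contribution to $e^{tg}$, and only the diagonal $\rr=\ss$ survives the scalar product. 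On the other side, the maps $h\mapsto\tilde h$, $\cL$, multiplication, and $\res$ are all $\bQ$-linear, so the monomial identity yields
\[ \res\bigl(e^f\,\cL(e^{t\tilde g})\bigr) = \sum_{\rr,\ss} a_\rr\, b_\ss(t)\,\res\bigl(\pp^\rr\,\cL(\widetilde{\pp^\ss})\bigr) = \sum_\rr z_\rr\, a_\rr\, b_\rr(t), \]
which coincides with the value computed on the left.

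The main obstacle is not the algebra itself but the bookkeeping of formal infinite sums: one must check that $e^f\,\cL(e^{t\tilde g})$ genuinely lies in $\msrp((t))$ and that the residue acts as promised on this product. The natural filtration by powers of~$t$ reduces every step to a finite computation: at each $t$-degree, $e^{t\tilde g}$ contributes a polynomial in~$\pp$, so $\cL(e^{t\tilde g})$ contributes a Laurent polynomial, the product with $e^f\in\bQ[[\pp]]$ belongs to $\msrp$, and extracting the coefficient of $p_1^{-1}\dotsb p_k^{-1}$ depends only on the finitely many diagonal terms $\rr=\ss$ present in that $t$-degree. This uniformly justifies the interchange of sums with $\res$ and $\cL$.
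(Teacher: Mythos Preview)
Your proposal is correct and follows essentially the same approach as the paper: reduce to the monomial case via the explicit computation of $\cL(\widetilde{\pp^\ss})$ and the residue, then extend by linearity coefficientwise in~$t$. The only cosmetic difference is that the paper keeps the left argument as a general power series~$U\in\bQ[[\pp]]$ and reads off $\res(U\,p_1^{-r_1-1}\dotsm p_k^{-r_k-1})=[\pp^\rr]U$ directly, whereas you expand~$e^f$ into monomials as well; the underlying computation and the handling of the $t$-filtration are the same.
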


\begin{example}\label{ex:scalp-as-res}
We explain this result for the simple example $f = p_k^2$, $g = p_k^3$.
Indeed we compute: first,
\begin{equation*}
\scalp{e^{p_k^2}}{e^{tp_k^3}} = \sum_{i,j\geq0} \Bigl\langle \frac{p_k^{2i}}{i!} , \frac{p_k^{3j}}{j!} \Bigr\rangle t^j
= \sum_{\ell\geq0} \Bigl\langle \frac{p_k^{6\ell}}{(3\ell)!} , \frac{p_k^{6\ell}}{(2\ell)!} \Bigr\rangle t^{2\ell}
= \sum_{\ell\geq0} \frac{(6\ell)! \, k^{6\ell}}{(3\ell)! \, (2\ell)!} t^{2\ell} ,
\end{equation*}
where the scalar product can be non-zero only if $2i$ and~$3j$ are equal, thus of the form~$6\ell$;
and second,
\begin{equation*}
\res\bigl(e^{p_k^2} \cL(e^{t(k^3p_k^3)})\bigr) = \sum_{i,j\geq0} \res\biggl(\frac{p_k^{2i}}{i!} \frac{(3j)! \, k^{3j}}{j! \, p_k^{3j+1}}\biggr)
= \sum_{\ell\geq0} \frac{(6\ell)! \, k^{6\ell}}{(3\ell)! \, (2\ell)!} t^{2\ell} ,
\end{equation*}
where again a residue can be non-zero only if $2i = 3j = 6\ell$.
\end{example}

\begin{proof}
For $U \in \bQ[[\pp]]$ and
for $\rr \in \mathbb{N}^k$,
$U \cL(\tilde\pp^\rr)$~is an element of~$\msrp$, so that
successively using \eqref{eq:def-laplace}, \eqref{eq:def-scalp}, \eqref{eq:def-res}, and \eqref{eq:def-scalp} again,
we derive:
\begin{align*}
\res (U \cL(\tilde\pp^\rr))
  &= \res\left(U \, 1^{r_1} \dotsm k^{r_k} r_1! \, p_1^{-r_1-1} \dotsm r_k! \, p_k^{-r_k-1} \right) \\
  &= z_\rr \res\left(U p_1^{-r_1-1} \dotsm p_k^{-r_k-1}\right)
  = z_\rr \, [\pp^\rr] U = \scalp{U}{\pp^\rr} .
\end{align*}
This formula extends by linearity to $\scalp{U}{h} = \res (U \cL(\tilde h))$
for any $h \in \bQ[\pp]$.
Upon specializing to $U = e^f$ and~$h = g^\ell/\ell!$ before taking series in~$t$,
this makes the informal integral formula
provided in~\cite[end of 7.1]{ChyzakMishnaSalvy-2005-ESP} completely algebraic (and formal),
in the form of the formula
\begin{equation*}
\scalp{e^f}{e^{tg}}
= \sum_{\ell\geq0} \scalp{e^f}{g^\ell} \frac{t^\ell}{\ell!}
= \sum_{\ell\geq0} \res\left(e^f \cL(\tilde g^\ell)\right) \frac{t^\ell}{\ell!}
= \res\left(e^f \cL(e^{t\tilde g})\right). \qedhere
\end{equation*}
\end{proof}

The ring~$\msrp((t))$ is a $W_\pp(t)$-module with the usual actions: $p_i$ acts by multiplication and~$\partial_i$ by partial differentiation with respect to~$p_i$.
Let~$K$ be the subspace of all elements of~$\msrp((t))$ that do not contain any monomial~$\pp^\rr t^m$ with $r_1,\dotsc,r_k$ all negative.
In other words,
\[ K = \sum_{i=1}^k \bQ[[\pp]][p_1^{-1},\dotsc,p_{i-1}^{-1},p_{i+1}^{-1},\dotsc,p_k^{-1}]((t)). \]
This subspace has the property to be a sub-$W_\pp(t)$-module of~$\msrp((t))$ and to be contained in the kernel of the residue map~$\msrp((t)) \to \bQ((t))$.

Now, let $f$ and~$g$ be two polynomials in~$\bQ[\pp]$.
We provide in Lemma~\ref{lem:dt-exists} an explicit construction of a derivation~$\partial_t$ satisfying \eqref{eq:der-dt}.
We remark that this construction is simpler than the general approach presented in Section~\ref{subsec:scalar-extension}:
it is inspired by Example~\ref{ex:airy-fin},
but departs from it by already incorporating the change of algebras~$\tau$ that we will use in Lemma~\ref{lem:kreg-holonomicity}.

\begin{lemma}\label{lem:dt-exists}
The $W_\xx(t)$-linear map $L\colon a\mapsto a \, \tilde g\bigl(\tfrac{\partial f}{\partial p_1}-\partial_1,\dotsc,\tfrac{\partial f}{\partial p_k}-\partial_k\bigr)$
defines a derivation on~$W_\xx(t)/S$ by
\begin{equation}\label{eq:kreg-def-dt}
  \partial_t\cdot\proj_S(a) = \proj_S\left(\tfrac{\partial a}{\partial t} + L(a)\right) .
\end{equation}
This derivation commutes with $\pp$ and~$\dd_\pp$.
\end{lemma}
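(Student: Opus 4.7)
My plan is to establish the three assertions of the lemma in turn: well-defined\-ness of $\partial_t$ on $W_\pp(t)/S$, the Leibniz rule making $\partial_t$ a derivation, and commutativity of $\partial_t$ with the left actions of $\pp$ and $\dd_\pp$.

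The principal step is well-definedness. Set $D := \partial/\partial t + L$. Since $L$ is right multiplication by the fixed element $g(u_1,\dots,u_k)\in W_\pp(t)$---which is what the tilde construction unpacks to, via $\tilde g(y_1,\dots,y_k) = g(y_1, 2y_2, \dots, ky_k)$ and $u_j = j(\partial f/\partial p_j - \partial_j)$---$L$ is $W_\pp(t)$-linear on the left, so that $D(ba) = (\partial b/\partial t)\,a + b\,D(a)$. Because $S$ is a left ideal, it therefore suffices to verify $D(s_i)\in S$ on each generator $s_i = p_i - t\,(\partial g/\partial p_i)(u_1,\dots,u_k)$. The computation rests on two commutator identities in $W_\pp$, both immediate from the Weyl algebra relations: $[p_i, u_j] = j\,\delta_{ij}$ (from $[\partial_j, p_j]=1$), and $[u_i, u_j] = 0$ (from equality of mixed partials of $f$). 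Pairwise commutativity of the $u_j$'s makes $[p_i,\cdot]$ act as a derivation on the commutative algebra they generate, so that $[p_i, h(u)] = i\,(\partial h/\partial p_i)(u)$ for any polynomial $h$. Specialising to $h = g$, and using that $g(u)$ and $(\partial g/\partial p_i)(u)$ commute, one obtains
\[
  s_i\,g(u) = g(u)\,s_i + i\,(\partial g/\partial p_i)(u),
\]
which, combined with $\partial s_i/\partial t = -(\partial g/\partial p_i)(u)$, yields the key formula
\[
  D(s_i) = g(u)\,s_i + (i-1)\,(\partial g/\partial p_i)(u).
\]
The first summand lies in $S$ automatically; closing the argument amounts to showing the correction term $(i-1)(\partial g/\partial p_i)(u)$ is also in $S$.

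Once $D(S)\subseteq S$ is secured, the derivation property is immediate: the identity $D(Pa) = (\partial P/\partial t)\,a + P\,D(a)$ projects through $\proj_S$ to the Leibniz rule $\partial_t(Pa) = (\partial P/\partial t)\,a + P\,\partial_t a$. Commutativity with $p_i$ and $\partial_i$ then reduces via Leibniz to the vanishings $\partial p_i/\partial t = 0$ and $\partial \partial_i/\partial t = 0$, which both hold because these operators do not involve $t$.

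The hard part is the tail of the well-definedness step: showing that the correction $(i-1)(\partial g/\partial p_i)(u)$ sits in $S$. The factor $i-1$ kills the obstruction when $i=1$, but for $i\geq 2$ we will need to exhibit $(\partial g/\partial p_i)(u)$ as a $W_\pp(t)$-linear combination of the generators $s_j$. I expect this to come from specific arithmetic relating the scaled annihilators $u_j$ to the defining relations of~$S$, for instance by reducing $(\partial g/\partial p_i)(u)$ modulo $S$ using the fact that each $s_j$ trades a $p_j$ factor for a $t$-multiple of a polynomial in the commuting $u_j$'s, in such a way that the symmetry between $p_i$ and $u_i$ ultimately forces the correction into $S$.
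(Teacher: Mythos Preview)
Your overall strategy---set $D=\tfrac{\partial}{\partial t}+L$, verify $D(s_i)\in S$ on generators, then deduce the Leibniz rule and commutation with $\pp,\dd_\pp$---is exactly the paper's approach, and your commutator identities $[p_i,u_j]=j\delta_{ij}$ and $[p_i,h(u)]=i\,(\partial h/\partial p_i)(u)$ are correct.

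The gap is the unfinished last step, and it cannot be closed as you hope: the correction term $(i-1)(\partial g/\partial p_i)(u)$ is in general \emph{not} in~$S$. Take $k=2$, $f=0$, $g(p_1,p_2)=p_2$; your generators become $s_1=p_1$ and $s_2=p_2-t$, and the correction for $i=2$ is the constant~$1$. But $\{p_1,\,p_2-t\}$ is already a Gr\"obner basis of~$S$ (the single S\nobreakdash-pair reduces to~$0$), and $1$ is in normal form, so $1\notin S$. No amount of ``trading a $p_j$ for polynomials in the~$u$'s'' will help, since $(\partial g/\partial p_i)(u)$ contains no~$p_j$ to trade.

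The source of the problem is the form of the generators. The paper's own proof (consistently with the proofs of the surrounding lemmas) takes
\[
  s_i \;=\; p_i \;-\; t\,\frac{\partial \tilde g}{\partial p_i}\!\left(\tfrac{\partial f}{\partial p_1}-\partial_1,\dotsc,\tfrac{\partial f}{\partial p_k}-\partial_k\right),
\]
and by the chain rule $\tfrac{\partial \tilde g}{\partial p_i}(y)=i\,\tfrac{\partial g}{\partial p_i}(u)$, so this is $s_i=p_i - i\,t\,(\partial g/\partial p_i)(u)$, differing from the generators stated in Theorem~\ref{thm:kreg-main} by a factor~$i$ in the $t$\nobreakdash-term (evidently a typo there). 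Rerun your computation with this $s_i$: now $\partial s_i/\partial t=-i\,(\partial g/\partial p_i)(u)$ exactly cancels the commutator contribution $+i\,(\partial g/\partial p_i)(u)$, and you obtain cleanly
\[
  D(s_i)=g(u)\,s_i\in S,
\]
with no leftover---which is precisely the paper's computation $\phi(s_i)=\tilde g(\dots)\,s_i$. The remainder of your argument (Leibniz rule, commutation with $\pp$ and $\dd_\pp$) then goes through and matches the paper.
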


\begin{proof}

Let $\phi$ be the $W_\pp$-linear endomorphism of $W_\pp(t)$ defined by $\phi(a)= \tfrac{\partial a}{\partial t} + L(a)$.
To show that $\partial_t$ is well-defined, it suffices to verify that $\phi(S) \subseteq S$.
To simplify the writing of the present proof, for any polynomial $q(X_1,\dots,X_k)$
we write $q(\uu)$ in place of $q\bigl(\tfrac{\partial f}{\partial p_1}-\partial_1,\dotsc,\tfrac{\partial f}{\partial p_k}-\partial_k\bigr)$.
Consider the generators
\begin{equation}\label{eq:s_i-def}
s_i := p_i - t \frac{\partial \tilde g}{\partial X_i}(\uu)
\end{equation}
of~$S$.
(Here, as we did for the statement of Theorem~\ref{thm:kreg-main},
we write $\partial\tilde g/\partial X_i$ to denote the derivative of~$\tilde g$ with respect to its $i$th variable,
rather than~$\partial\tilde g/\partial p_i$,
to avoid any ambiguities that could arise because of the subsequent substitution of the $i$th variable with a polynomial in $\pp$ and~$\dd_\pp$.)
Using the commutation rule $p_i \bigl(\tfrac{\partial f}{\partial p_j}-\partial_j\bigr) = \bigl(\tfrac{\partial f}{\partial p_j}-\partial_j\bigr) p_i + \delta_{i,j}$,
one obtains for any polynomial $q(X_1,\dots,X_k)$
\begin{equation}\label{eq:p_i-commutation}
p_i \, q(\uu) = q(\uu) \, p_i + \frac{\partial q}{\partial X_i}(\uu) .
\end{equation}
For each~$i$, we therefore get
\begin{equation}\label{eq:phi-s_i}
\phi(s_i) = - \frac{\partial \tilde g}{\partial X_i}(\uu) + s_i \, \tilde g(\uu) = \tilde g(\uu) \, s_i ,
\end{equation}
where the first equality is by the definition of~$\phi$
and the second equality is by the definition~\eqref{eq:s_i-def} and the specialization of~\eqref{eq:p_i-commutation} to~$q = \tilde g$.
Consequently, $s_i \in S$ for each~$i$.
Next, for any rational function~$R(t)$ and any~$i$, we derive
\begin{multline*}
\phi\bigl(R(t) s_i\bigr) = \frac{\partial R}{\partial t}(t) \, s_i
- R(t) \frac{\partial \tilde g}{\partial X_i}(\uu)
+ R(t) s_i \, \tilde g(\uu)
= \frac{\partial R}{\partial t}(t) s_i + R(t) \phi(s_i) ,
\end{multline*}
where the first equality is by \eqref{eq:s_i-def} and the definition of~$\phi$
and the second equality is by~\eqref{eq:phi-s_i}.
Therefore, by $\bQ(t)$-linearity of~$L$,
$\phi\bigl(R(t) s_i\bigr) \in S$ for each~$i$.
Now, $S$~is generated as a $W_\xx$-module by the family~$(R(t)s_i)_{R,i}$, so,
by $W_\xx$-linearity of~$L$, and thus of~$\phi$, we get
the inclusion $\phi(S) \subseteq S$.
The $W_\xx(t)$-linearity of~$L$ and the definition~\eqref{eq:kreg-def-dt} imply, for any~$R(t) \in \bQ(t)$,
\begin{multline*}
\partial_t R(t) \cdot \proj_S(a) =
\partial_t \cdot \proj_S\bigl(R(t) a)\bigr) =
\proj_S\left(\tfrac{\partial(R(t)a)}{\partial t} + R(t) L(a)\right) \\
= \tfrac{\partial R}{\partial t}(t) \proj_S(a) + \proj_S\left(R(t)\bigl(\tfrac{\partial a}{\partial t}+L(a)\bigr)\right)
= R(t) \partial_t \cdot \proj_S(a) + \tfrac{\partial R}{\partial t}(t) \proj_S(a) .
\end{multline*}
In other words, $\partial_t$~is a derivation.
A similar but simpler calculation shows that it commutes with $\pp$ and~$\dd_\pp$.
\end{proof}

For the same polynomials $f$ and~$g$, let~$\Xi_{f,g}$ be the class of $e^f \cL(e^{t\tilde g})$ modulo~$K$.

\begin{lemma}
  For any~$a\in S$, the relation $a \cdot \Xi_{f, g} = 0$ holds.
  Moreover,
  \[
    \tfrac{\partial }{\partial t} \cdot \Xi_{f,g}= \tilde g\bigl(\tfrac{\partial f}{\partial p_1}-\partial_1,\dotsc,  \tfrac{\partial f}{\partial p_k}-\partial_k\bigr) \cdot  \Xi_{f,g}.
  \]
\end{lemma}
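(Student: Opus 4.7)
The plan is to work inside $\msrp((t))/K$ and establish two intertwining relations. First, since $\partial_j(e^f h) = \tfrac{\partial f}{\partial p_j}e^f h + e^f \partial_j h$ for any $h\in\msrp((t))$, the operator $X_j := \tfrac{\partial f}{\partial p_j} - \partial_j$ satisfies $X_j \cdot (e^f h) = -e^f \partial_j h$; and because the $X_j$ pairwise commute (their brackets vanish by equality of mixed partials), this iterates to $Q(X_1,\dotsc,X_k) \cdot (e^f h) = e^f\, Q(-\partial_1,\dotsc,-\partial_k) \cdot h$ for every polynomial $Q$. Second, a direct computation on monomials starting from $\cL(\pp^\rr) = \prod_i r_i!\,p_i^{-r_i-1}$ yields two identities: (i)~the exact equality $\cL(p_j q) = -\partial_j \cL(q)$, and (ii)~the congruence $\cL(\partial_j q) \equiv p_j \cL(q) \pmod K$, the discrepancy in the latter being precisely the ``boundary term'' that arises when $q$ has no $p_j$-dependence, which lies in $K$ by construction. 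Both identities extend from $\bQ[\pp]$ to $\bQ[[\pp]]((t))$ by $\bQ(t)$-linearity.

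With these ingredients, the first assertion reduces to checking $s_i \cdot \Xi_{f,g} = 0$ on a set of generators of $S$, say $s_i := p_i - t\, \tfrac{\partial \tilde g}{\partial p_i}(X_1,\dotsc,X_k)$. Starting from $t\,\tfrac{\partial\tilde g}{\partial p_i}(X)\cdot e^f\cL(e^{t\tilde g})$: the intertwining pushes $\tfrac{\partial \tilde g}{\partial p_i}(X)$ past $e^f$ to produce $e^f\,\tfrac{\partial\tilde g}{\partial p_i}(-\partial_1,\dotsc,-\partial_k)\,\cL(e^{t\tilde g})$; identity~(i) then moves the derivatives inside the Laplace transform, giving $e^f\,\cL\bigl(t\,\tfrac{\partial\tilde g}{\partial p_i}\,e^{t\tilde g}\bigr)$; the ordinary chain rule rewrites the argument as $\tfrac{\partial}{\partial p_i}(e^{t\tilde g})$; and applying identity~(ii) produces $e^f\cdot p_i\,\cL(e^{t\tilde g})\equiv p_i\cdot e^f\cL(e^{t\tilde g})\pmod K$. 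Hence $s_i\cdot \Xi_{f,g} = 0$. Since $K$ is stable under multiplication by each $p_j$ and each $\partial_j$ and under $\bQ(t)$, and therefore is a $W_\pp(t)$-submodule, $W_\pp(t)$-linearity propagates the vanishing to every $a\in S$.

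For the derivation identity, the coefficient-wise definition of $\tfrac{\partial}{\partial t}$ on $\msrp((t))$ gives $\tfrac{\partial}{\partial t}\bigl(e^f\cL(e^{t\tilde g})\bigr) = e^f\cL(\tilde g\, e^{t\tilde g})$. On the other hand, the intertwining together with identity~(i) computes $\tilde g(X_1,\dotsc,X_k)\cdot e^f\cL(e^{t\tilde g}) = e^f\,\tilde g(-\partial_1,\dotsc,-\partial_k)\cL(e^{t\tilde g}) = e^f\cL(\tilde g\, e^{t\tilde g})$; the two sides coincide already in $\msrp((t))$, a fortiori modulo~$K$.

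The main obstacle is identity~(ii): the boundary term must be isolated and shown to lie in $K$, which is exactly the content of the definition of $K$ as the subspace of elements carrying no all-negative $\pp$-monomial. Once that combinatorial check is in hand, everything else is a direct unravelling using the intertwining, the chain rule applied to $e^{t\tilde g}$, and the $W_\pp(t)$-module structure on $\msrp((t))/K$.
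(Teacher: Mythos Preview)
Your proof is correct and follows essentially the same approach as the paper's: both establish the intertwining $X_j\cdot(e^f h)=-e^f\partial_j h$, the exact Laplace identity $\cL(p_j q)=-\partial_j\cL(q)$, and the congruence $\cL(\partial_j q)\equiv p_j\cL(q)\pmod K$ (with the boundary term in~$K$), then verify the vanishing on the generators~$s_i$ by reducing to $\cL\bigl((\partial_i - t\,\tfrac{\partial\tilde g}{\partial p_i})e^{t\tilde g}\bigr)=0$. One small imprecision: the Laplace transform is only defined on polynomials in~$\pp$, so the extension is to $\bQ[\pp]((t))$ (which suffices since $e^{t\tilde g}\in\bQ[\pp][[t]]$), not to $\bQ[[\pp]]((t))$ as you wrote.
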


\begin{proof}
The definition~\eqref{eq:def-laplace} of the formal Laplace transform
implies the following formulas, valid for~$\rr \in \bN^k$:
\begin{gather}
\label{eq:L-of-p}
\cL(p_i \cdot \pp^\rr) = (r_i+1)p_i^{-1} \cL(\pp^\rr) = -\partial_i \cdot \cL(\pp^\rr) , \\
\label{eq:L-of-dp}
\cL(\partial_i \cdot \pp^\rr) = \cL(r_ip_i^{-1} \pp^\rr) =
  \begin{cases}
    p_i \cdot \cL(\pp^\rr) , & \text{if $r_i \neq 0$} , \\
    0 , & \text{otherwise} .
  \end{cases}
\end{gather}
In turn, for $h \in \bQ[\pp]$, this implies the formulas
\begin{gather}
\label{eq:L-of-p-h}
\cL(p_i \cdot h) = -\partial_i \cdot \cL(h) , \\
\cL(\partial_i \cdot h) = p_i \cdot \cL(h) - p_i \cdot \cL( h\rvert_{p_i=0}) .
\end{gather}
The last formula is not convenient because of the term involving~$h\rvert_{p_i=0}$.
Fortunately, this term is in~$K$, so we have the nicer formula
\begin{equation}
  \label{eq:L-of-dp-h}
  \cL(\partial_i \cdot h) \equiv p_i \cdot \cL(h) \pmod{K}.
\end{equation}
Moreover, we have for any~$h \in \msrp((t))$,
\begin{equation}
  \label{eq:2}
  \bigl(\tfrac{\partial f}{\partial p_i} - \partial_i\bigr) \cdot e^f h = - e^f \partial_i \cdot h .
\end{equation}
Therefore, we have
\begin{align}
  \Bigl( p_i - t &\tfrac{\partial \tilde g}{\partial p_i}\bigl(\tfrac{\partial f}{\partial p_1}-\partial_1,\dotsc,  \tfrac{\partial f}{\partial p_k}-\partial_k\bigr) \Bigr) \cdot e^f \cL(e^{t\tilde g})\label{eq:cancelling-scalp}\\
  &= e^f \Bigl( p_i - t  \tfrac{\partial \tilde g}{\partial p_i}(-\partial_1, \dotsc,-\partial_k) \Bigr) \cdot \cL(e^{t\tilde g}), && \text{using \eqref{eq:2}},\notag\\
                  &\equiv e^f \cL \Bigl( \bigl( \partial_i - t \tfrac{\partial\tilde g}{\partial p_i} \bigr) \cdot e^{t\tilde g} \Bigr) , && \text{using \eqref{eq:L-of-p-h} and \eqref{eq:L-of-dp-h}},\notag\\
                    &\equiv e^f \cL(0) \equiv 0 \pmod{K}.\notag
\end{align}
This proves the first statement about all~$a \in S$ by $W_\pp(t)$-linearity.
The second statement is proved similarly, starting with
$\tfrac{\partial }{\partial t} - \tilde g\bigl(\tfrac{\partial f}{\partial p_1}-\partial_1,\dotsc,  \tfrac{\partial f}{\partial p_k}-\partial_k\bigr)$
instead of the operator in~\eqref{eq:cancelling-scalp}.
\end{proof}

The following lemma introduces a $\bQ(t)$-algebra automorphism~$\tau$ of~$W_\pp(t)$,
so that the ideal~$S$ is generated by the image of~$\dd$ under~$\tau$.
The holonomicity of~$W_\pp(t)/S$ is then a consequence of how $\tau$~transports the Berstein filtration~\eqref{eq:Bernstein-filtration}.

\begin{lemma}\label{lem:kreg-holonomicity}
  The~$W_\pp(t)$-module~$W_\pp(t)/S$ is holonomic.
\end{lemma}

\begin{proof}
  Let~$\tau$ be the endomorphism of the $\bQ(t)$-algebra~$W_\pp(t)$ defined by
  \begin{equation}\label{eq:def-tau}
  \tau(p_i) = \tfrac{\partial f}{\partial p_i} - \partial_i \quad\text{and}\quad \tau(\partial_i) = p_i - t \, \tau \bigl( \tfrac{\partial\tilde g}{\partial p_i} \bigr).
  \end{equation}
  Note that the definition is not recursive since~$\tfrac{\partial\tilde g}{\partial p_i}$ is a polynomial in~$\pp$ only.
  Also, $\tau(p_i)$ and~$\tau(p_j)$ are easily seen to commute with one another,
  so that the images under~$\tau$ of any two elements of~$\bQ(t)$ commute.
  As a consequence, $\tau(\partial_i)\tau(p_i) - \tau(p_i)\tau(\partial_i)$ reduces to~$p_i(-\partial_i) - (-\partial_i)p_i = 1$.
  This justifies that indeed \eqref{eq:def-tau}~properly defines an algebra morphism.
  This is an automorphism as it is easily seen to admit the inverse given by
  \begin{equation}\label{eq:tau-inverse}
  \tau^{-1}(p_i) = \partial_i + t \tfrac{\partial\tilde g}{\partial p_i} \quad\text{and}\quad \tau^{-1}(\partial_i) = \tau^{-1} \bigl( \tfrac{\partial f}{\partial p_i} \bigr) - p_i.
  \end{equation}
  By its definition in Theorem~\ref{thm:kreg-main}, $S$~is generated by~$\tau(\partial_1),\dotsc,\tau(\partial_k)$.
  Therefore, as~$\bQ(t)$-vector spaces,
  \[ W_\pp(t) / S \simeq W_\pp(t) / (W_\pp(t) \partial _1 + \dotsb + W_\pp(t) \partial _k) \simeq \bQ(t)[\pp] , \]
  where the first isomorphism is obtained by applying~$\tau^{-1}$.
  More precisely, $\tau^{-1}$~induces an injective linear map from the filtration $\cF_m\cdot\gamma \subseteq W_\pp(t)/S$,
  where $\gamma$~denotes the class of~$1$ in~$W_\pp(t)/S$,
  to the filtration $\cF_{dm}\cdot 1 \subseteq \bQ(t)[\pp]$,
  where $d$~is the maximal degree of the images~\eqref{eq:tau-inverse} of the generators of~$W_\pp(t)$.
  Since $\bQ(t)[\pp]$~is holonomic, the dimension of~$\cF_{dm}\cdot 1$ is in~$O(m^n)$.
  By injectivity, the dimension of~$\cF_m\cdot\gamma$ is in~$O(m^n)$, too,
  and $W_\pp(t)/S$~is holonomic.
\end{proof}

Let~$M$ denote the holonomic module~$W_\pp(t)/S$.
All in all, we have a commutative diagram of~$\bQ(t)$-linear spaces, with all arrows commuting with the derivation~$\partial_t$
\[
\begin{tikzcd}
M \arrow[r] \arrow[d]  & {W_\pp(t)\cdot \Xi_{f,g}} \arrow[r, hook] \arrow[d, "\res"] & \frac{\msrp((t))}{K} \arrow[d, "\res"] \\
\frac{M}{\dd M} \arrow[r] & W_\pp(t) \cdot \scalp{e^f}{e^{tg}} \arrow[r, hook]      &  \bQ((t)).
\end{tikzcd}
\]
The class~$\gamma$ of~$1$ in~$M$ is mapped to~$\scalp{e^f}{e^{tg}}$ in~$\bQ((t))$.
What Algorithm~\ref{algo:integration} computes, is an operator~$P(t, \partial_t)$ such that~$P(t, \partial_t) \cdot \gamma \in \dd M$.
This implies that $P(t, \partial_t) \cdot \scalp{e^f}{e^{tg}} = 0$.

\begin{remark}
At this point we can make explicit our remark
that earlier creative-telescoping algorithms
could not deal with our integrals.
From the explicit definition of~$g$ in the formula~$R_k(t) = \scalp{e^f}{e^{tg}}$,
we can prove that $\tilde g$~is always in the form $p_k + h(p_1,\dotsc,p_{k-1})$, for some polynomial~$h$.
So $p_k - t$~is always in~$S$.
Any integration algorithm that would work over~$\bQ(t,p_k)$,
as many that are designed to apply to functions,
would therefore consider that $1$~is in the annihilator
of the function to be integrated,
which would lead to a wrong result.
\end{remark}

\subsection{Experimental results}
\label{subsec:experimental-results}

We consider graph models that are either some model of $k$-regular (simple) graphs
or some generalization with loops and/or multiple edges and/or degrees
in the set $\{1, 2, \dots, k\}$ instead of~$\{k\}$.
Given such a graph model, the theory in~\cite{ChyzakMishna-2024-DES}
provides immediate formulas for the polynomials $f$ and~$g$.
Obtaining the ideal $S$ of Theorem~\ref{thm:kreg-main} is easily implemented as a simple non-commutative substitution.
To this end, we used Maple's \verb+OreAlgebra+ package.
After converting\footnote{%
Maple uses its commutative product \texttt{*} to represent monomials,
so that both Maple inputs \texttt{t*dt} and \texttt{dt*t} represent
the element~$t \partial_t$.
This is no problem inside Maple, where non-commutative products are computed
by the command \texttt{skew\_product}.
But naively serializing an operator from Maple by \texttt{lprint} can lead
to strings with a different interpretation in Brochet's Julia implementation.
We automated a rewrite of those strings to move all derivatives to the right.}
from Maple notation to the notation of \verb+MultivariateCreativeTelescoping.jl+,
we could use the latter to obtain the wanted ODEs,
appealing to the implementation of our optimized Algorithm~\ref{algo:integration-crt-e-i}.
This was done for~$2 \leq k \leq 8$
and the degree sets $K = \{k\}$ and $K = \{1, 2, \dots, k\}$.
We collected the computed ODEs and made them available on the web%
\footnote{See \url{https://files.inria.fr/chyzak/kregs/}.}.
For~$k\leq 7$, they are the same as with the Maple implementation
that accompanies Chyzak and Mishna's article~\cite{ChyzakMishna-2024-DES}.
To the best of our knowledge,
the ODEs for~$k=8$ are obtained for the first time.
Table~\ref{table:res} displays some parameters related
to the calculations and to the results we obtained.
There, each graph model is described by a triple~$(k, l, e)$ where:
$k$~determines the set of allowed degrees, either by $K = \{k\}$ for the left part of Table~\ref{table:res} or by $K = \{1,\dots,k\}$ for the right part;
$l$~is one of `ll' for loopless graphs, `la' for graphs with loops allowed and contributing~$2$ to the degree, and `lh' for graphs with loops allowed and contributing~$1$ to the degree;
$e$~is either `se' for graphs with simple edges or `me' for graphs with multiple edges allowed.
For each graph model,
the resulting ODE has order and degree given in columns `ord' and~`deg'.
The total time for computing the ODE is given in column~`total'
and decomposes as follows:
the time to prepare generators for the ideal~$S$ is negligible;
the time to make a Gröbner basis out of them is given in column~`f5';
the time to extract next the ODE by our new algorithm is given in column~`mct'.
We note that the time for~`mct' always dominates.
The total time includes the compilation time, which explains why its value exceeds the sum of `f5' and~`mct'.
The maximal peak of memory usage is listed in column~`rss'.

\begin{table}
  \begin{adjustbox}{width=1.4\textwidth,center}
    \begin{minipage}{0.8\textwidth}
  \begin{tabular}{c c c c c c c c c}
\toprule
\multicolumn{3}{c}{graph} & \multicolumn{2}{c}{ODE} & \multicolumn{4}{c}{time and memory} \\ \cmidrule(r){1-3} \cmidrule(r){4-5} \cmidrule(r){6-9}
$k$ & $l$ & $e$ & ord & deg & f5 & mct & total & rss \\
\midrule
2 & ll & se & 1 & 2 & {\small 0.04} & {\small 0.05 }& 18 & 0.63  \\
3 & ll & se & 2 & 11 & {\small 0.04} & {\small 0.05} & 18 & 0.62  \\
4 & ll & se & 2 & 14 & {\small 0.05} & {\small 0.05} & 19 & 0.62 \\
5 & ll & se & 6 & 125 & {\small 0.05} & {\small 0.59} & 17 & 0.65  \\
6 & ll & se & 6 & 145 & {\small 0.23} & {\small 1.0} & 20 & 0.66 \\
7 & ll & se & 20 & 1683 & {\small 8.6} & {\small 303} & 330 & 4.6  \\
7 & ll & me & 20 & 1683 & {\small 8.4} & {\small 300} & 326 & 4.3 \\
7 & la & se & 20 & 1683 & {\small 8.3} & {\small 301} & 328 & 4.3 \\
7 & la & me & 20 & 1683 & {\small 8.3} & {\small 299} & 326 & 4.4  \\
7 & lh & se & 20 & 1683 & {\small 8.6} & {\small 310} & 337 & 5.9 \\
7 & lh & me & 20 & 1683 & {\small 8.3} & {\small 322} & 349 & 5.8 \\
8 & ll & se & 19 & 1793 & {\small 244} & {\small 832} & 1095 & 6.5 \\
8 & ll & me & 19 & 1793 & {\small 247} & {\small 831} & 1097 & 6.7 \\
8 & la & se & 19 & 1793 & {\small 244} & {\small 831} & 1094 & 6.0 \\
8 & la & me & 19 & 1793 & {\small 244} & {\small 829} & 1093 & 6.0 \\
8 & lh & se & 35 & 6204 & {\small 393} & {\small 23069} & 23481 & 5.4 \\
8 & lh & me & 35 & 6200 & {\small 393} & {\small 23111} & 23524 & 5.4 \\
\bottomrule
\end{tabular}

\end{minipage}
\begin{minipage}{0.75\textwidth}
  \begin{tabular}{c c c c c c c c c}
    \toprule
    \multicolumn{3}{c}{graph} & \multicolumn{2}{c}{ODE} & \multicolumn{4}{c}{time and memory} \\ \cmidrule(r){1-3} \cmidrule(r){4-5} \cmidrule(r){6-9}
    $k$ & $l$ & $e$ & ord & deg & f5 & mct & total & rss \\
    \midrule
        2 & ll & se & 1 & 3 & {\small 0.04} & {\small 0.04} & 17 & 0.62 \\
        3 & ll & se & 2 & 11 & {\small 0.04} & {\small 0.13} & 18 & 0.64 \\
        4 & ll & se & 3 & 29 & {\small 0.04} & {\small 0.07} & 18 & 0.62 \\
        5 & ll & se & 6 & 125 & {\small 0.06} & {\small 0.69} & 19 & 0.65 \\
        6 & ll & se & 10 & 425 & {\small 0.31} & {\small 11} & 31 & 0.86 \\
        7 & ll & se & 20 & 1683 &{\small  8.3} & {\small 316} & 343 & 5.8 \\
        7 & ll & me & 20 & 1683 & {\small 8.5} & {\small 321} & 348 & 5.8 \\
        7 & la & se & 20 & 1683 & {\small 8.7} & {\small 324} & 352 & 5.8 \\
        7 & la & me & 20 & 1683 & {\small 8.8} & {\small 322} & 349 & 5.8 \\
        7 & lh & se & 20 & 1683 & {\small 8.7} & {\small 310} & 337 & 5.6 \\
        7 & lh & me & 20 & 1683 & {\small 8.2} & {\small 323} & 349 & 5.8 \\
        8 & ll & se & 35 & 6201 & {\small 389} & {\small 23198} & 23605 & 5.4 \\
        8 & ll & me & 35 & 6200 & {\small 386} & {\small 23586} & 23991 & 5.4 \\
        8 & la & se & 35 & 6204 & {\small 401} & {\small 23495} & 23915 & 5.4 \\
        8 & la & me & 35 & 6205 & {\small 393} & {\small 23188} & 23600 & 5.4 \\
        8 & lh & se & 35 & 6205 & {\small 387} & {\small 22745} & 23152 & 5.5 \\
        8 & lh & me & 35 & 6205 & {\small 394} & {\small 23440} & 23853 & 5.4 \\
    \bottomrule
\end{tabular}

\end{minipage}
\end{adjustbox}
\cprotect\caption{\label{table:res}%
Computation of ODEs for $k$-regular graph models $K = \{k\}$ (left) and $K = \{1,\dots,k\}$ (right).
A~graph model is input by the triple $(k,l,e)$,
where $l$ and~$e$ describe if loops and edges are allowed.
For each output ODE, the order (`ord') and the coefficient degree (`deg') are given.
All times are given in seconds.
The two main computation steps are preparing a Gröbner basis (`f5')
and running Algorithm~\ref{algo:integration-crt-e-i} on it (`mct').
The maximum memory used is given in GB (`rss').
See details in Section~\ref{subsec:experimental-results}.}
\end{table}

By way of comparison, the model (se,~ll,~7) could be computed
by the method and implementation of~\cite{ChyzakMishna-2024-DES}
in $3.22\cdot10^4$~seconds (almost 9~hours),
which is roughly 100~times as much as the 330~seconds (5.5~minutes)
needed by our new algorithm and implementation.
This can partly be explained
by the lack of efficient evaluation/interpolation methods
in the implementation of~\cite{ChyzakMishna-2024-DES}
and by the choice in~\cite{ChyzakMishna-2024-DES}
to continue the calculation
by factoring the polynomial coefficients of the ODE.

The case~$k=8$ (and $K = \{8\}$) shows a peculiar behavior regarding the order of the ODEs found.
Up to~$k = 7 = \max K$, for numerous models that we have investigated (see more data in~\cite{ChyzakMishna-2024-DES}),
the dependency of the order `ord' and the degree `deg' is monotonous in~$k$.
For some sets~$K$, but not systematically, the order or the degree of the ODE
is larger when~$l=\text{lh}$.
Focusing on~$k = 8$, Table~\ref{table:res} shows that
ODEs for~$K = \{k\}$ have
smaller order than for~$k=7$ if~$l\neq\text{la}$
and larger order than for~$k=7$ if~$l=\text{la}$,
but that
ODEs for~$K = \{1,\dots,k\}$ have
the same typical (larger) order for~$k=8$ than for~$k=7$, whatever choice is made for~$l$.
On the other hand, the scalar size of the operators, measured as~$(\text{`ord'}+1)\times(\text{`deg'}+1)$,
increases with~$k$.
We have no complete explanation for the deviation of order,
beyond the observation that it could be related to the parity of~$k$.
Indeed, the number of edges in a $k$-regular graph or multigraph on $n$~vertices is~$kn/2$,
so that the generating functions for $K=\{k\}$ and~$l\neq\text{la}$ are even series for odd~$k$,
while all other generating functions have no parity structure.
Accordingly, the operators for the even series are (up to a minor renormalization)
skew polynomials in $t^2$ and~$t \partial_t$,
while in other cases they show no special structure.
The orders (and degrees) increase in parallel for odd~$k$ and for even~$k$.
To get some added certainty that the drop of the order is not caused by any implementation mistake,
we implemented a fully independent counting method by backtracking
so as to obtain the numbers of 8-regular graphs on $n$~vertices for~$n \leq 48$:
we could confirm that the ODE for~$K = \{8\}$ annihilates the obtained truncated series up to~$O(t^{49})$.
We also ran Maple's \verb+mindiffeq+ command on the operator obtained for $(k,l,e) = (7,\text{`ll'},\text{`se'})$,
which confirmed that it is minimal among annihilators of the series.

\medskip
\noindent{\bf Acknowledgements.}
This work has been supported by the European Research Council under the European Union's
Horizon Europe research and innovation programme, grant agreement
101040794 (10000 DIGITS).

\printbibliography

\end{document}